\numberwithin{equation}{section}
\DeclareMathOperator{\supp}{supp}
\DeclareMathOperator{\spec}{spec}
\DeclareMathOperator{\sgn}{sgn}
\DeclareMathOperator{\tr}{Tr}
\newtheorem{theorem}{Theorem}
\newtheorem{proposition}{Proposition}
\newtheorem{lemma}{Lemma}
\theoremstyle{remark}
\newtheorem{remark}{Remark}
\theoremstyle{definition}
\newtheorem{definition}{Definition}
\newtheorem{assumption}{Assumption}
\newcommand{\dda}{{\rm d}}
\newcommand{\dd}{\,\dda}
\newcommand{\ddd}[1]{\dd^{#1}}
\newcommand{\ee}{{\rm e}}
\newcommand\R{\mathbb{R}}
\newcommand\C{\mathbb{C}}
\newcommand\id{\mathds{1}}
\begin{document}

\title{Translation-invariant quasi-free states for fermionic systems\\
  and the BCS approximation\footnote{\copyright\ 2013 by the authors. This work may be reproduced, in its entirety, for non-commercial purposes.}}

\author{Gerhard Br\"aunlich$^1$, Christian Hainzl$^1$, and Robert
  Seiringer$^{2}$
  \\ \\  1. Mathematical Institute, University of T{\"u}bingen \\ 
  Auf der Morgenstelle 10, 72076 T\"ubingen, Germany \\ \\
  2. Institute of Science and Technology Austria\\ Am Campus 1, 3400 Klosterneuburg, Austria}

\date{\today}

\maketitle

\begin{abstract}
  We study translation-invariant quasi-free states for a system of
  fermions with two-particle interactions. The associated energy
  functional is similar to the BCS functional but includes also direct
  and exchange energies. We show that for suitable short-range
  interactions, these latter terms only lead to a renormalization of
  the chemical potential, with the usual properties of the BCS
  functional left unchanged. Our analysis thus represents a rigorous
  justification of part of the BCS approximation. We give bounds on
  the critical temperature below which the system displays
  superfluidity.
\end{abstract}

\section{Introduction and Main Results}
\label{sec:introduction}

The BCS theory \cite{bcs} was introduced in 1957 to describe
superconductivity, and was later extended to the context of
superfluidity 
\cite{Leggett,NRS} as a microscopic description of
fermionic gases with local pair interactions at low temperatures.  It
can be deduced from quantum physics in three steps.  One restricts the
allowed states of the system to quasi-free states, assumes
translation-invariance and $SU(2)$ rotation invariance, and finally
dismisses the direct and exchange terms in the energy.  With these
approximations, the resulting BCS functional depends, besides the
temperature $T$ and the chemical potential $\mu$, on the interaction potential $V$, the momentum distribution 
$\gamma$ and the Cooper pair wave function $\alpha$.  A non-vanishing
$\alpha$ implies a macroscopic coherence of the particles
involved, i.e., the formation of a condensate of Cooper pairs.  This motivates the
characterization of a superfluid phase by the existence of a  minimizer of
the BCS functional for which $\alpha \neq 0$.  

A rigorous treatment of the BCS functional was presented in \cite{HHSS,HS-spec_prop,HS-T_C,FHNS2007},
where the question was addressed for which interaction potentials $V$ and at which
temperatures $T$ a superfluid phase exists.  In the
present work, we focus on the question to what extent it is
justifiable to dismiss the direct and exchange terms in the energy.  A
heuristic justification was given  
in
\cite{Leggett,leggett_quantum_liquids}, where it was argued that
as long as the range of the interaction potential is suitably small,
the only effect of the direct and exchange terms is to renormalize the
chemical potential.

In this paper we derive a gap equation for the extended theory
with direct and exchange terms and investigate the existence of
non-trivial solutions for general interaction potentials.
We give a rigorous justification for dismissing the two terms for
potentials whose range $\ell$ is short compared to the scattering
length $a$ and the Fermi wave length $\frac{2\pi}{\sqrt{\mu}}$.
The potentials are required to have a suitable repulsive core to assure stability of the system. 
We show that, for small enough $\ell$, the system still
can be described by the conventional BCS equation if the chemical
potential is renormalized appropriately. In the limit $\ell\to 0$, the
spectral gap function $\Delta_\ell(p)$ converges to a constant
function  and we recover the
BCS equation in its form found in the physics literature.

While we do not prove that for fixed, finite $\ell$ there exists a
critical temperature $T_c$ such that superfluidity occurs if and only
if $T<T_c$, we find bounds $T_\ell^+$ and $T_\ell^-$ such that
$T<T_\ell^-$ implies superfluidity and $T>T_\ell^+$ excludes
superfluidity.  Moreover, in the limit $\ell \to 0$ the two bounds
converge to the same temperature, $ \lim_{\ell\to 0} T_\ell^- =
\lim_{\ell\to 0} T_\ell^+$, which can be determined by the usual BCS
gap equation.  The situation is illustrated in the following sketch.
\begin{center}\label{fig1}
  \includegraphics[width=0.25\linewidth]{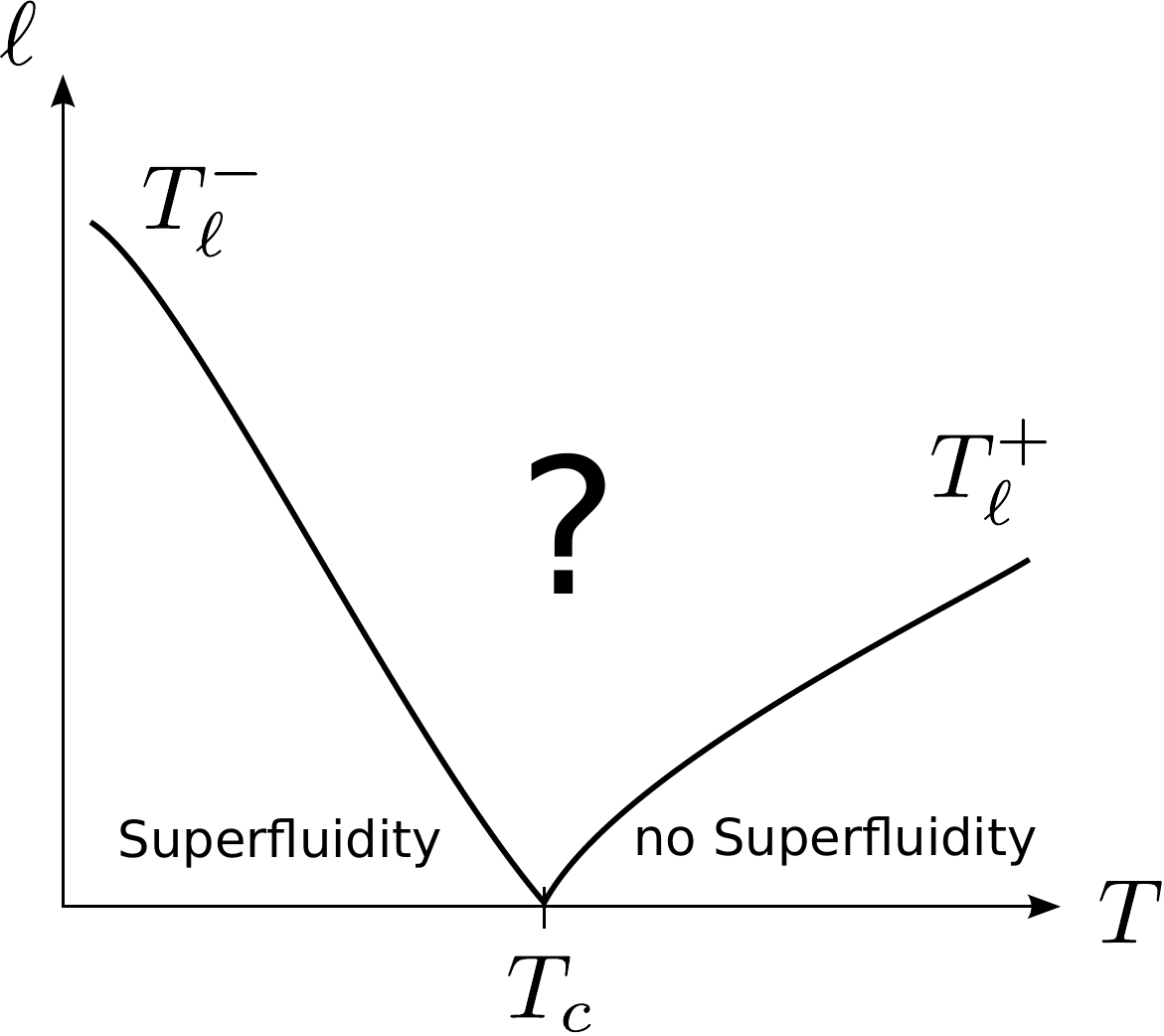}
\end{center}

We note that similar models as the one considered in our paper are
sometimes referred to as Bogoliubov-Hartree-Fock theory and have been
studied previously mainly with Newtonian interactions, modeling stars,
and without the restriction to translation-invariant states
\cite{BacFroJon-09,LenLew-12b, HaiLenLewSch-10}.  The proof of
existence of a minimizer in \cite{LenLew-12b} turns out to be
surprisingly difficult and even more strikingly, the appearance of
pairing is still open. It was confirmed numerically for the Newton
model and also for models with short range
interaction  in \cite{LewPau-12}. Hence the present work represents the
first proof of existence of pairing in a translation-invariant
Bogoliubov-Hartree-Fock model in the continuum.  For the Hubbard model at half filling
this was shown earlier in \cite{BLS}.

\subsection{The Model}
\label{sec:model}

We consider a gas of spin $1/2$ fermions in the thermodynamic limit at temperature
$T \geq 0 $ and chemical potential $\mu \in \mathbb{R}$. The particles
interact via a local two-body potential which we denote by $V$. We
assume $V$ to be reflection-symmetric, i.e., $V(-x) = V(x)$. The state
of the system is described by two functions $\hat\gamma:\R^3\to \R_+$
and $\hat\alpha:\R^3\to \C$, with $\hat\alpha(p)=\hat\alpha(-p)$, which are conveniently combined into a
$2\times 2$ matrix 
\begin{equation}\label{def:Gamma}
  \Gamma(p) = \left(
    \begin{array}{cc}
      \hat{\gamma}(p) & \hat{\alpha}(p)\\
      \overline{\hat{\alpha}(p)} & 1-\hat{\gamma}(-p)
    \end{array}
  \right),
\end{equation}
required to satisfy $0\leq \Gamma \leq \mathds{1}_{\mathbb{C}^2}$ at
every point $p\in\mathbb{R}^3$. The function $\hat\gamma$ is
interpreted as the momentum distribution of the gas, while $\alpha$
(the inverse Fourier transform of $\hat\alpha$) is the Cooper pair
wave function. Note that there are no spin variables in $\Gamma$; the
full, spin dependent Cooper pair wave function is the product of
$\alpha(x-y)$ with an antisymmetric spin singlet.

The \emph{BCS-HF functional} $\mathcal{F}_T^V$, whose infimum over all
states $\Gamma$ describes the negative of the pressure of the system,
is given as 
\begin{equation}
  \label{eq:F_T}
  \begin{split}
    \mathcal{F}_T^V(\Gamma) =& \int_{\mathbb{R}^3} (p^2 -
    \mu)\hat{\gamma}(p) \ddd{3}p +\int_{\mathbb{R}^3} |\alpha(x)|^2
    V(x) \ddd{3}x - T S(\Gamma)
    \\
    &-\int_{\mathbb{R}^3} |\gamma(x)|^2 V(x) \ddd{3}x
    +2\gamma(0)^2\int_{\mathbb{R}^3} V(x) \ddd{3}x,
  \end{split}
\end{equation}
where
\begin{equation*}
  S(\Gamma) = -\int_{\mathbb{R}^3} \tr_{\mathbb{C}^2} \big(\Gamma(p) \ln \Gamma(p)\big) \ddd{3}p
\end{equation*}
is the entropy of the state $\Gamma$. Here, $\gamma$ and $\alpha$
denote the inverse Fourier transforms of $\hat\gamma$ and
$\hat\alpha$, respectively. The last two terms in \eqref{eq:F_T} are
referred to as the \emph{exchange term} and \emph{the direct term},
respectively.  The functional \eqref{eq:F_T} can be obtained by
restricting the many-body problem on Fock space to
translation-invariant and spin-rotation invariant quasi-free states,
see \cite[Appendix A]{HHSS} and \cite{BLS}. The factor $2$ in the last
term in (\ref{eq:F_T}) originates from two possible orientations of
the particle spin.

A \emph{normal state} $\Gamma_0$ is a minimizer of the functional (\ref{eq:F_T}) 
restricted to states with $\alpha=0$. Any such minimizer can easily be
shown to be of the form
\begin{equation*}
  \hat{\gamma}_0(p) = 
  \frac{1}{1+\ee^{\frac{\varepsilon^{\gamma_0}(p) -
        \tilde{\mu}^{\gamma_0}}{T}}},
\end{equation*}
where we denote, for general $\gamma$, 
\begin{align}
  \label{eq:epsilon}
  \varepsilon^\gamma(p) &= p^2 - \frac 2{(2\pi)^{3/2}} \int_{\R^3} \left( \hat V(p-q) -\hat V(0)\right) \hat\gamma(q) \ddd{3}q \,,
\\
  \label{eq:mu}
  \tilde{\mu}^\gamma &= \mu -\frac 2{(2\pi)^{3/2}}\hat{V}(0)
  \int_{\mathbb{R}^3} \hat{\gamma}(p) \ddd{3}p\,.
\end{align}
In the absence of the exchange term the normal state would be unique, but this is not necessarily the case here. 
The system is said to be in a 
superfluid phase if and only if the minimum of $\mathcal{F}_T^V$ is
not attained at a normal state,  and we call a normal state $\Gamma_0$
\emph{unstable} in this case.

\subsection{Main Results}
\label{sec:results}

Our first goal is to characterize the existence of a superfluid phase
for a large class of interaction potentials $V$. We first find
sufficient conditions on $V$ for (\ref{eq:F_T}) to have a minimizer. These
conditions are stated in the following proposition.

\begin{proposition}[Existence of minimizers]
  \label{prop:minimizer}
Let $\mu \in \mathbb{R}$, $0 \leq T < \infty$, and let $V\in L^1(\mathbb{R}^3)\cap L^{3/2}(\mathbb{R}^3)$ be
  real-valued with $\|\hat{V}\|_\infty \leq 2
  \hat{V}(0)$.  Then
  $\mathcal{F}_T^V$ is bounded from below and attains a minimizer $(\gamma,\alpha)$ on
  \begin{equation*}
    \mathcal{D} = \left\{ \Gamma \ \text{of the form (\ref{def:Gamma})} \ \middle|
      \begin{array}{l}
        \scriptstyle{\hat{\gamma}\in L^1\displaystyle{(}\scriptstyle\mathbb{R}^3,(1+p^2)\ddd{3}p\displaystyle{)}\scriptstyle,}\\
        \scriptstyle{\alpha\in H^1(\mathbb{R}^3,\ddd{3}x),}
      \end{array}
      0\leq \Gamma \leq \id_{\C^2} 
    \right\}.
  \end{equation*}
  Moreover, the function
  \begin{equation}
    \label{eq:Delta}
    \Delta(p) =   \frac 2 {(2\pi)^{3/2}} \int_{\R^3} \hat V(p-q) \hat{\alpha}(q) \ddd{3}q
  \end{equation}
  satisfies the BCS gap equation
  \begin{equation}
    \label{eq:gap}
    \boxed{
      \frac 1 {(2\pi)^{3/2}}\int_{\mathbb{R}^3} \hat{V}(p-q)
      \frac{\Delta(q)}{K_{T,\mu}^{\gamma,\Delta}(q)} \ddd{3}q =
      -\Delta(p)
    }
  \end{equation}
\end{proposition}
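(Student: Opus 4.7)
The plan is to establish the proposition in three steps: show that $\mathcal{F}_T^V$ is bounded below on $\mathcal{D}$, extract a minimizer via weak compactness and lower semicontinuity, and derive~\eqref{eq:gap} as the Euler--Lagrange equation.

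\textbf{Step 1 (lower bound).} The first observation is that Plancherel's theorem, together with the fact that $\hat V$ is real (since $V$ is real and reflection-symmetric) and $\hat\gamma\geq 0$, combines the direct and exchange terms into a single nonnegative quadratic form:
\begin{equation*}
2\gamma(0)^2\int_{\R^3}V\,\ddd{3}x - \int_{\R^3}|\gamma(x)|^2V(x)\,\ddd{3}x = \frac{1}{(2\pi)^{3/2}}\iint \hat\gamma(p)\hat\gamma(q)\bigl(2\hat V(0)-\hat V(p-q)\bigr)\,\ddd{3}p\,\ddd{3}q\geq 0,
\end{equation*}
the sign being forced by $\|\hat V\|_\infty\leq 2\hat V(0)$. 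The pairing interaction is controlled by Sobolev, $\bigl|\int|\alpha|^2V\bigr|\leq \|V\|_{3/2}\|\alpha\|_6^2\leq C_S\|V\|_{3/2}\|\nabla\alpha\|_2^2$, and the pointwise constraint $|\hat\alpha|^2\leq\hat\gamma(1-\hat\gamma(-\cdot))$ yields $\|\nabla\alpha\|_2^2\leq\int p^2\hat\gamma\,\ddd{3}p$. The free part $\int(p^2-\mu)\hat\gamma\,\ddd{3}p - T S(\Gamma)$ is the grand-canonical free-fermion energy, minimized in $\hat\gamma$ by the Fermi--Dirac distribution and therefore bounded below by an explicit $-C(T,\mu)$; splitting $\int p^2\hat\gamma\,\ddd{3}p$ into a fraction absorbing the $\alpha$-term and a fraction controlling $\mu\int\hat\gamma\,\ddd{3}p$ outside a large ball then produces $\mathcal{F}_T^V(\Gamma)\geq -C(T,\mu,V)$.

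\textbf{Step 2 (existence).} Along a minimizing sequence $(\Gamma_n)$, the preceding bounds give uniform control of $\int(1+p^2)\hat\gamma_n\,\ddd{3}p$ and hence of $\|\alpha_n\|_{H^1}^2$. Passing to a subsequence, $\hat\gamma_n\rightharpoonup\hat\gamma$ in $L^2(\R^3,(1+p^2)\ddd{3}p)$ and $\alpha_n\rightharpoonup\alpha$ in $H^1(\R^3)$, with $0\leq\Gamma\leq\id_{\C^2}$ preserved in the limit. The free part is jointly convex in $\Gamma$, hence weakly lower semicontinuous; the direct-plus-exchange term is the bounded positive quadratic form above, hence weakly lower semicontinuous on $L^2$ (note $\hat\gamma\in L^1\cap L^\infty\subset L^2$ uniformly). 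The pairing term $\alpha\mapsto\int|\alpha|^2V$ is in fact weakly \emph{continuous} on $H^1(\R^3)$: truncating to $V_R=V\mathbf{1}_{|x|\leq R,\,|V|\leq M}$, the truncated pairing is continuous by Rellich compactness and $H^1$-boundedness, while the remainder is uniformly bounded by $C\|V-V_R\|_{3/2}\to 0$ as $R,M\to\infty$. Consequently $\mathcal{F}_T^V(\Gamma)\leq\liminf_n\mathcal{F}_T^V(\Gamma_n)$ and $\Gamma$ is a minimizer.

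\textbf{Step 3 (gap equation) and main obstacle.} Taking admissible variations $\Gamma\mapsto\Gamma+t\,\delta\Gamma$ at the minimizer, the first-order conditions in $\hat\gamma$ and $\hat\alpha$ force $\Gamma=(1+\ee^{H_\Delta/T})^{-1}$ with the effective $2\times 2$ Hamilton matrix
\begin{equation*}
H_\Delta(p) = \begin{pmatrix} \varepsilon^\gamma(p)-\tilde\mu^\gamma & \Delta(p) \\ \overline{\Delta(p)} & -(\varepsilon^\gamma(-p)-\tilde\mu^\gamma)\end{pmatrix},
\end{equation*}
the direct/exchange contribution being absorbed into $\varepsilon^\gamma$ and $\tilde\mu^\gamma$ by construction. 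Reading off the $(1,2)$-entry gives $\hat\alpha(p)=-\Delta(p)/\bigl(2K_{T,\mu}^{\gamma,\Delta}(p)\bigr)$, and inserting this into the defining formula~\eqref{eq:Delta} for $\Delta$ reproduces~\eqref{eq:gap}. The main obstacle, in my view, is the weak lower semicontinuity in Step~2 of the nonlinear direct and exchange terms: individually $\gamma(0)^2$ and $\int|\gamma|^2V$ need not be weakly continuous under $L^1$-weak convergence of $\hat\gamma_n$, and the Plancherel rewriting of Step~1 -- which fuses them into a \emph{positive} quadratic form on $L^2$ -- is precisely what salvages the argument. A secondary subtlety is ensuring that the eigenvalue constraint $0\leq\Gamma\leq\id$ remains strict at the minimizer so that the unconstrained Euler--Lagrange computation is valid.
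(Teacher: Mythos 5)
Your Steps 1 and 3 are essentially the paper's argument: positivity of the Hartree--Fock part via the Plancherel rewriting with kernel $2\hat V(0)-\hat V(p-q)\geq 0$, reduction of the lower bound to the case without direct/exchange terms, and deriving~\eqref{eq:gap} from the Euler--Lagrange condition $\Gamma=(1+\ee^{H_\Delta/T})^{-1}$ (including the remark that one must keep $\Gamma$ strictly inside $(0,\id)$, which the paper handles by citing the analogous step in \cite{HHSS}).

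The gap is in Step~2. You assert that the fused direct-plus-exchange quantity
\begin{equation*}
Q(\hat\gamma)=\frac{1}{(2\pi)^{3/2}}\iint\hat\gamma(p)\hat\gamma(q)\bigl(2\hat V(0)-\hat V(p-q)\bigr)\,\ddd{3}p\,\ddd{3}q
\end{equation*}
is a \emph{bounded positive quadratic form on $L^2$, hence weakly lower semicontinuous}. Neither half of that claim holds. First, $Q$ is not a bounded quadratic form on $L^2$: the operator formally associated to the kernel is $\hat\gamma\mapsto 2\hat V(0)\int\hat\gamma-\hat V*\hat\gamma$, whose leading piece is $(\int\hat\gamma)\cdot\mathrm{const}$, not in $L^2$. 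Indeed, along a spreading sequence $\hat\gamma_n\to 0$ in $L^2$ with $\int\hat\gamma_n\equiv 1$ one has $Q(\hat\gamma_n)\to 2(2\pi)^{-3/2}\hat V(0)>0$, so $Q$ is not even closable, hence cannot be the form of a nonnegative self-adjoint operator. Second, $Q$ is nonnegative only on the \emph{cone} of nonnegative functions; for sign-changing $f$ (such as $\hat\gamma_n-\hat\gamma$) it may be negative. So the standard argument ``$Q(\hat\gamma_n)=Q(\hat\gamma)+2\,\mathrm{Re}\langle\cdot\rangle+Q(\hat\gamma_n-\hat\gamma)$, drop the last nonnegative piece'' is unavailable. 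In short, positivity of the fused form buys you the uniform lower bound (Step~1), but it does \emph{not} buy lower semicontinuity.

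What the paper actually does (and what you need) is to split the term back up for the l.s.c.\ argument. Since $0\leq\hat\gamma\leq 1$ one has $\hat\gamma^2\leq\hat\gamma$, hence $\int(1+p^2)\hat\gamma^2\leq\int(1+p^2)\hat\gamma<\infty$, so $\gamma\in H^1(\R^3)$; then $\gamma\mapsto\int V|\gamma|^2$ is weakly \emph{continuous} on $H^1$ for $V\in L^{3/2}$ by \cite[Thm.\ 11.4]{LL} (your truncation sketch proves exactly this, but you applied it to $\alpha$, not $\gamma$). For the direct term, $2\gamma(0)^2\int V=2(2\pi)^{-3/2}\hat V(0)\bigl(\int\hat\gamma\bigr)^2$ with $\hat V(0)\geq 0$, and $\liminf_n\int\hat\gamma_n\geq\int\hat\gamma$ by testing the weak convergence against $\id_{B_R}$ and letting $R\to\infty$ (possible because $\hat\gamma_n\geq 0$). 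Put together, direct term is l.s.c., exchange term is continuous, and the minimizer is extracted as in \cite[Prop.\ 2]{HHSS}.
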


In (\ref{eq:gap}) we have introduced the notation
  \begin{align}
    \label{eq:K}
    K_{T,\mu}^{\gamma,\Delta}(p) &=
    \frac{E_\mu^{\gamma,\Delta}(p)}{\tanh\big(\frac{E_\mu^{\gamma,\Delta}(p)}{2T}\big)},\\
    \label{eq:E}
    E_\mu^{\gamma,\Delta}(p) &= \sqrt{(\varepsilon^\gamma(p) -
      \tilde{\mu}^\gamma)^2 + |\Delta(p)|^2}\,,
  \end{align}
with  $\varepsilon^{\gamma}$ and $\tilde{\mu}^{\gamma}$ defined
 in \eqref{eq:epsilon} and \eqref{eq:mu}, respectively. For $T=0$, (\ref{eq:K}) is interpreted as $K_{0,\mu}^{\gamma,\Delta}(p) = E_\mu^{\gamma,\Delta}(p)$.

We note that the BCS gap equation (\ref{eq:gap}) can equivalently be written as 
  \begin{equation*}
    (K_{T,\mu}^{\gamma,\Delta} + V) \hat{\alpha} = 0,
  \end{equation*}
where $K_{T,\mu}^{\gamma,\Delta}$ is interpreted as a multiplication operator in Fourier space, and $V$ as multiplication operator in configuration space. 
This form of the equation will turn out to be useful later on.

Proposition~\ref{prop:minimizer} shows that the condition $\|\hat
V\|_\infty \leq 2 \hat V(0)$ is sufficient for stability of the
system. The simplicity of this criterion is due to the restriction to
translation-invariant quasi-free states. Without imposing
translation-invariance, the question of stability is much more subtle. Note that  $\mathcal{F}_T^V$  is not bounded from below for negative $V$, in contrast to the BCS model (where the direct and exchange terms are neglected). 

Proposition~\ref{prop:minimizer} gives no information on whether $\Delta \neq 0$. A
sufficient condition for this to happen is given in the following
theorem.

\begin{theorem}[Existence of a superfluid phase]
  \label{thm:stability}
Let $\mu \in \mathbb{R}$, $0 \leq T < \infty$, and let $V\in L^1(\mathbb{R}^3)\cap L^{3/2}(\mathbb{R}^3)$ be
  real-valued with $\|\hat{V}\|_\infty \leq 2
  \hat{V}(0)$. Let
  $\Gamma_0 = (\gamma_0,0)$ be a normal state and recall the definition of 
$K_{T,\mu}^{\gamma_0,0}(p)$ in \eqref{eq:K}--\eqref{eq:E}.
  \begin{enumerate}[label=(\roman*)]
  \item If $\inf\spec(K_{T,\mu}^{\gamma_0,0} + V) <
    0$, \label{thm:stability:1} then $\Gamma_0$ is unstable, i.e.,
    $\inf_{\Gamma\in \mathcal{D}} \mathcal{F}_T^V(\Gamma) <
    \mathcal{F}_T^V(\Gamma_0)$.
  \item If $\Gamma_0$ is unstable, then there exist  $(\gamma,\alpha) \in \mathcal{D}$, with $\alpha \neq 0$,
    such that \label{thm:stability:2} $\Delta$ defined in
    \eqref{eq:Delta} solves the BCS gap equation \eqref{eq:gap}.
  \end{enumerate}
\end{theorem}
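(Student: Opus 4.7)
My plan is to derive part \ref{thm:stability:2} as a direct consequence of Proposition~\ref{prop:minimizer} and to prove part \ref{thm:stability:1} by exhibiting a trial state $\Gamma_\epsilon$ with energy strictly below that of $\Gamma_0$. For (ii), Proposition~\ref{prop:minimizer} yields a minimizer $\Gamma^\star = (\gamma^\star,\alpha^\star) \in \mathcal{D}$ whose associated $\Delta$ solves the gap equation \eqref{eq:gap}. If $\Gamma_0$ is unstable, then $\mathcal{F}_T^V(\Gamma^\star) < \mathcal{F}_T^V(\Gamma_0)$, whereas by the defining property of a normal state every admissible $\Gamma$ with $\alpha = 0$ satisfies $\mathcal{F}_T^V(\Gamma) \geq \mathcal{F}_T^V(\Gamma_0)$; hence $\alpha^\star \neq 0$, settling (ii).

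For part \ref{thm:stability:1} the spectral hypothesis furnishes a compactly supported $\hat\alpha_0 \in L^2(\mathbb{R}^3)$, symmetric under $p \to -p$ (which one can impose since $K_{T,\mu}^{\gamma_0,0}+V$ commutes with this reflection), with $\langle \alpha_0,(K_{T,\mu}^{\gamma_0,0}+V)\alpha_0\rangle < 0$. I would take the trial state $\hat\gamma_\epsilon = \hat\gamma_0$, $\hat\alpha_\epsilon = \epsilon\,\hat\alpha_0$; for $T > 0$, $\hat\gamma_0$ is bounded strictly between $0$ and $1$ on compact sets, so $\Gamma_\epsilon \in \mathcal{D}$ for all sufficiently small $\epsilon$. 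The kinetic, direct and exchange terms in \eqref{eq:F_T} depend only on $\hat\gamma$ and are therefore unchanged, while the interaction on $\alpha$ contributes exactly $\epsilon^2 \int V(x)|\alpha_0(x)|^2 \ddd{3}x$. The remaining ingredient is a pointwise entropy expansion: $\Gamma_\epsilon(p)$ has the same trace as $\Gamma_0(p)$ and determinant $\hat\gamma_0(p)(1-\hat\gamma_0(p)) - \epsilon^2|\hat\alpha_0(p)|^2$, so its eigenvalues are $\lambda_\pm(p) = \tfrac{1}{2}\bigl(1 \pm \sqrt{1-4\det\Gamma_\epsilon(p)}\bigr)$. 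Differentiating $-\lambda_+\log\lambda_+ - \lambda_-\log\lambda_-$ with respect to $d = \det\Gamma$ yields $(\log\lambda_+ - \log\lambda_-)/(\lambda_+ - \lambda_-)$, which at $\Gamma_0$ reduces to $(\varepsilon^{\gamma_0}(p) - \tilde\mu^{\gamma_0})/\bigl[T\tanh\bigl((\varepsilon^{\gamma_0}(p) - \tilde\mu^{\gamma_0})/(2T)\bigr)\bigr]$ upon using $\hat\gamma_0 = (1+\ee^{(\varepsilon^{\gamma_0}-\tilde\mu^{\gamma_0})/T})^{-1}$ together with the identities $\log((1-\hat\gamma_0)/\hat\gamma_0) = (\varepsilon^{\gamma_0}-\tilde\mu^{\gamma_0})/T$ and $1-2\hat\gamma_0 = \tanh((\varepsilon^{\gamma_0}-\tilde\mu^{\gamma_0})/(2T))$. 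Combining with $\det\Gamma_\epsilon - \det\Gamma_0 = -\epsilon^2|\hat\alpha_0|^2$ gives $-T\bigl(S(\Gamma_\epsilon) - S(\Gamma_0)\bigr) = \epsilon^2 \int K_{T,\mu}^{\gamma_0,0}(p)|\hat\alpha_0(p)|^2 \ddd{3}p + o(\epsilon^2)$, and hence $\mathcal{F}_T^V(\Gamma_\epsilon) - \mathcal{F}_T^V(\Gamma_0) = \epsilon^2 \langle \alpha_0, (K_{T,\mu}^{\gamma_0,0} + V)\alpha_0\rangle + o(\epsilon^2) < 0$ for $\epsilon$ small.

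The main technical obstacle is the boundary case $T = 0$. There $\hat\gamma_0$ is the characteristic function of $\{p^2 < \tilde\mu^{\gamma_0}\}$, so the naive choice $\hat\gamma_\epsilon = \hat\gamma_0$ violates $\det \Gamma_\epsilon \geq 0$ on the support of any nonzero $\hat\alpha_\epsilon$. I would remedy this by deforming the trial state via a Bogoliubov rotation of $\Gamma_0$ at each momentum, which produces an admissible $\alpha_\epsilon$ of order $\epsilon$ together with a correction to $\hat\gamma_\epsilon$ of order $\epsilon^2$. Because $\Gamma_0$ minimizes the $\{\alpha = 0\}$-restricted functional, the first-order variations in $\gamma$ of the kinetic, direct and exchange terms vanish at $\Gamma_0$; combined with the observation that the entropy term vanishes identically on pure states, this again produces $\mathcal{F}_T^V(\Gamma_\epsilon) - \mathcal{F}_T^V(\Gamma_0) = \epsilon^2\langle\alpha_0,(K_{0,\mu}^{\gamma_0,0}+V)\alpha_0\rangle + o(\epsilon^2)$, completing the proof.
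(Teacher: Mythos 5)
Your overall strategy -- construct a trial state $\Gamma_\epsilon$ with $\hat\alpha_\epsilon=\epsilon\hat\alpha_0$ and Taylor-expand the functional to second order in $\epsilon$ -- is precisely what the paper does; the paper's proof is just terse (it states that $\langle\hat\alpha,(K_{T,\mu}^{\gamma_0,0}+V)\hat\alpha\rangle$ is the second derivative with respect to $\alpha$ and defers the details to \cite{HHSS}). Your $T>0$ computation (trace-and-determinant expansion of the pointwise entropy, the identities $\log\frac{1-\hat\gamma_0}{\hat\gamma_0}=\frac{\varepsilon^{\gamma_0}-\tilde\mu^{\gamma_0}}{T}$ and $1-2\hat\gamma_0=\tanh\frac{\varepsilon^{\gamma_0}-\tilde\mu^{\gamma_0}}{2T}$) is correct and a useful supplement to the paper. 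Part (ii) as you derive it from Proposition~\ref{prop:minimizer} is also exactly the paper's argument.

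However, your justification for the $T=0$ case has a genuine flaw. You claim that, because $\Gamma_0$ minimizes the $\alpha=0$ functional, \emph{the first-order variations in $\gamma$ of the kinetic, direct and exchange terms vanish at $\Gamma_0$}. This is false at $T=0$: the combined first variation of those three terms in the direction $\delta\hat\gamma$ is
$\int_{\R^3}\big(\varepsilon^{\gamma_0}(p)-\tilde\mu^{\gamma_0}\big)\,\delta\hat\gamma(p)\,\ddd{3}p$, and at positive $T$ this is cancelled by the entropy variation $T\int\log\frac{\hat\gamma_0}{1-\hat\gamma_0}\delta\hat\gamma$; at $T=0$ that counterterm is absent, and since $\hat\gamma_0$ sits on the boundary of the constraint set the optimality condition is only the one-sided inequality $\int(\varepsilon^{\gamma_0}-\tilde\mu^{\gamma_0})\delta\hat\gamma\geq 0$, not an equality. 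Moreover, if your claim were correct, then combined with the vanishing of the entropy the whole second-order expansion would reduce to $\epsilon^2\langle\alpha_0,V\alpha_0\rangle$: there would be \emph{no} mechanism to produce the $K_{0,\mu}^{\gamma_0,0}$ piece, contradicting your stated conclusion. In fact that piece arises exactly from the term you dismiss. The Bogoliubov rotation changes $\hat\gamma$ by $\hat\gamma_\epsilon-\hat\gamma_0=-\sgn\!\big(\varepsilon^{\gamma_0}-\tilde\mu^{\gamma_0}\big)|\hat\alpha_\epsilon|^2+O(\epsilon^4)$ (so as to keep $\det\Gamma_\epsilon\geq 0$), and substituting this into the linear-in-$\delta\hat\gamma$ variation yields
$\int(\varepsilon^{\gamma_0}-\tilde\mu^{\gamma_0})(\hat\gamma_\epsilon-\hat\gamma_0)=\epsilon^2\int\big|\varepsilon^{\gamma_0}-\tilde\mu^{\gamma_0}\big||\hat\alpha_0|^2\,\ddd{3}p+o(\epsilon^2)=\epsilon^2\langle\hat\alpha_0,K_{0,\mu}^{\gamma_0,0}\hat\alpha_0\rangle+o(\epsilon^2)$,
since $K_{0,\mu}^{\gamma_0,0}(p)=E_\mu^{\gamma_0,0}(p)=|\varepsilon^{\gamma_0}(p)-\tilde\mu^{\gamma_0}|$. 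So the conclusion of your $T=0$ paragraph is right, but the reasoning is not; you should replace the vanishing claim by this explicit sign-matched computation. (A minor side remark: at $T=0$ the Fermi set is $\{\varepsilon^{\gamma_0}(p)<\tilde\mu^{\gamma_0}\}$, not $\{p^2<\tilde\mu^{\gamma_0}\}$, because $\varepsilon^{\gamma_0}$ carries the exchange correction.)
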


The theorem follows from the following arguments. The operator
$K_{T,\mu}^{\gamma_0,0} + V$ naturally appears when looking at the
second derivative of $t\mapsto \mathcal{F}_T^V(\Gamma_0+t \Gamma)$ at
$t=0$. If it has a negative eigenvalues, the second derivative is
negative for suitable $\Gamma$, hence $\Gamma_0$ is unstable.  On the
other hand, an unstable normal state implies the existence of a
minimizer with $\alpha \neq 0$, which satisfies the Euler-Lagrange
equations for $\mathcal{F}_T^V$, resulting in (\ref{eq:gap}) according to Proposition~\ref{prop:minimizer}. The details are given in Section~\ref{sec:general_potentials}.

\begin{remark}\label{rem:bcs}
  In the usual BCS model, where the direct and exchange terms are neglected, the existence of a non-trivial solution 
 to
  $(K_{T,\mu}^{0,\Delta} + V) \hat{\alpha} = 0$  implies the existence of a 
  negative eigenvalue of $K_{T,\mu}^{0,0} + V$ \cite[Theorem 1]{HHSS}. This follows from the fact that 
  $K_{T,\mu}^{0,\Delta}$ is monotone in $\Delta$, i.e.,
  $K_{T,\mu}^{0,\Delta}(p) > K_{T,\mu}^{0,0}(p)$ for $\Delta \neq 0$. In particular, the system is
  superfluid if and only if the operator $K_{T,\mu}^{0,0} +V$ has a
  negative eigenvalue.  Since this operator is monotone in $T$, the
  equation
  \begin{equation*}
    \inf\spec(K_{T_c,\mu}^{0,0} +V) = 0
  \end{equation*}
  determines the critical temperature.  In the model considered here,
  where the direct and exchange terms are not neglected, the situation is more complicated. Due to the additional dependence of
  $K_{T,\mu}^{\gamma,\Delta}$ on $\gamma$, we can no longer conclude
  that $K_{T,\mu}^{\gamma,\Delta}(p) > K_{T,\mu}^{\gamma_0,0}(p)$.
  But by Theorem~\ref{thm:stability}, the solution $T$ of
  \begin{equation}\label{pdt}
    \inf\spec(K_{T,\mu}^{\gamma_0,0} +V) = 0
  \end{equation}
  still remains a lower bound for the critical temperature.
\end{remark}

\medskip

Our main result concerns the case of short-range interaction
potentials $V$, where we can recover monotonicity in $\Delta$, and
hence conclude that (\ref{pdt}) indeed defines the correct critical
temperature. More precisely, we shall consider a sequence of
potentials $\{V_\ell\}_{\ell>0}$ with $\ell \to 0$, which satisfies the following
assumptions.

\begin{assumption}  \label{asm:assumption}
\begin{enumerate}[label=(A\arabic*)]
\item \label{ax:1} $V_\ell \in L^1 \cap L^2$
\item \label{ax:supp} the range of $V_\ell$ is at most $\ell$, i.e.,  $\supp V_\ell \subseteq B_\ell(0)$
\item  \label{ax:a}
 the scattering length $a(V_\ell)$ is negative and does not
    vanish as $\ell\to 0$, i.e., $\lim_{\ell\to 0} a(V_\ell) = a < 0$
\item \label{ax:L1} $\limsup_{\ell\to 0}\|V_\ell\|_1 < \infty$
\item \label{ax:positivity} $\hat{V}_\ell(0) > 0$ and $\lim_{\ell\to 0}\hat{V}_\ell(0) = \mathcal{V} \geq 0$
\item \label{ax:3} $\|\hat{V}_\ell\|_\infty \leq 2 \hat{V}_\ell(0)$
\item \label{ax:L2} for small $\ell$, $\|V_\ell\|_2 \leq C_1 \ell^{-N}$ for some $C_1 > 0$ and $N \in \mathbb{N}$
\item \label{ax:infspec} $\exists\, 0 < b < 1$ such that $\inf\spec(p^2+V_\ell - |p|^b) > C_2 > -\infty$ holds independently of $\ell$
\item \label{ax:A} the operator $1+ V_\ell^{1/2} \frac{1}{p^2}|V_\ell|^{1/2}$ is invertible, and has an eigenvalue $e_\ell$ of order $\ell$, with corresponding eigenvector $\phi_\ell$. Moreover, $(1+
    V_\ell^{1/2}\frac{1}{p^2}|V_\ell|^{1/2})^{-1}(1-P_\ell)$ 
    is uniformly bounded in $\ell$, where
    $P_\ell = \langle J_\ell
      \phi_\ell|\phi_\ell\rangle^{-1}   |\phi_\ell\rangle
    \langle J_\ell \phi_\ell|$ and  $J_\ell = \sgn(V_\ell)$
\item \label{ax:n} the eigenvector $\phi_\ell$ satisfies
  $|\langle
    \phi_\ell| \sgn(V_\ell) \phi_\ell\rangle|^{-1} \langle |V_\ell|^{1/2}| |\phi_\ell|\rangle \leq O(\ell^{1/2})$ for small $\ell$.
\end{enumerate}
\end{assumption}

Here we use the notation $\sgn(V) =\big\{
    \begin{smallmatrix}
      1,& V \geq 0\\
      -1,& V <0
    \end{smallmatrix}$ and $V^{1/2}(x) = \sgn(V)|V(x)|^{1/2}$.
     As discussed in \cite{HS-mu}, the scattering length of a
    real-valued potential $V \in L^1(\mathbb{R}^3) \cap
    L^{3/2}(\mathbb{R}^3)$ is given by
\begin{equation}
  \label{eq:a}
  a(V) = \frac{1}{4\pi} \left\langle |V|^{1/2} \left| \tfrac{1}{1+V^{1/2}
    \frac{1}{p^2}|V|^{1/2}}\right. V^{1/2}\right\rangle\,.
\end{equation}
Assumptions \ref{ax:3}--\ref{ax:n} are to some extent technical and
are needed, among other things, to guarantee that
$\mathcal{F}_T^{V_\ell}$ is bounded from below uniformly in
$\ell$. Our main results presumably hold for a larger class of
potentials with less restrictive assumptions, but to avoid additional
complications in the proofs we do not aim here for the greatest
possible generality. Assumption~\ref{asm:assumption} implies, in particular, that $V_\ell$ converges to a point interaction as $\ell\to 0$, and we refer to \cite{BHS-delta} for a general study of point interactions arising as limits of short-range potentials of the form considered here.

\begin{remark}
  As an example for such a sequence of short-range potentials $V_\ell$ we have the following picture in mind:
  \begin{center}\label{fig2}
    \includegraphics[width=0.15\linewidth]{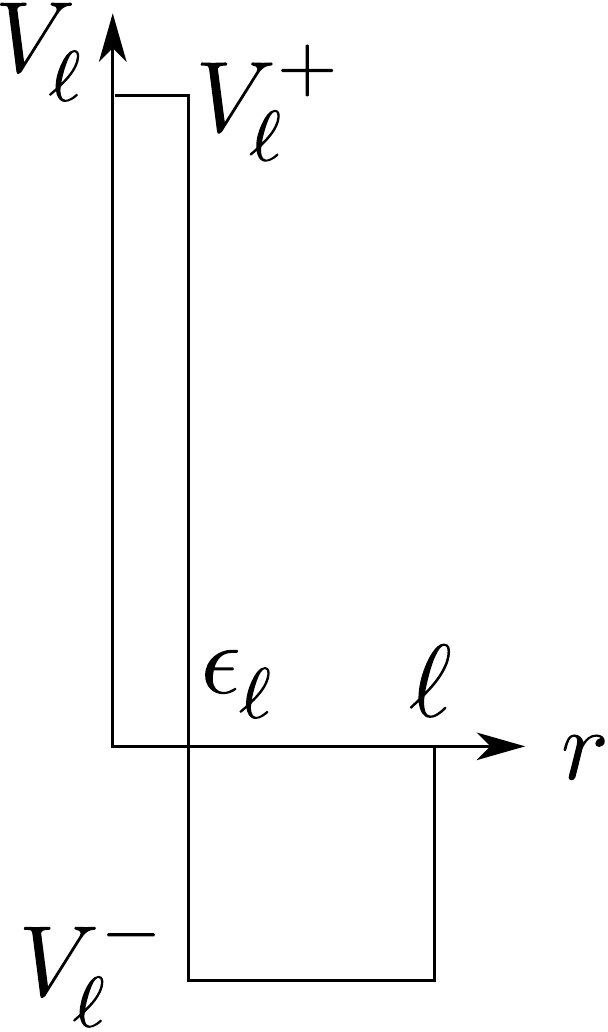}
  \end{center}
  The attractive part allows to adjust the scattering length. The
  repulsive core is needed to guarantee stability, and can be used to
  adjust the $L^1$ norm. If its range is small compared to the range
  of the attractive part, i.e., $\epsilon_\ell \ll \ell$, the
  scattering length is essentially unaffected by the repulsive core.
  In Appendix \ref{sec:example_scattering_length}, we construct an
  explicit example of such a sequence, satisfying all the assumptions 
  \ref{ax:1}--\ref{ax:n}. As $\ell\to 0$, it approximates a contact
  potential, defined via suitable selfadjoint extensions of $-\Delta$
  on $\R^3\setminus\{0\}$. Functions in its domain are known to
  diverge as $|x|^{-1}$ for small $x$, hence decay like $p^{-2}$ for
  large $|p|$.  This suggests the validity of \ref{ax:infspec} for
  $b<1$. Assumption \ref{ax:A} is easy to show in  case $V_\ell$ is uniformly bounded in $L^{3/2}$ (in which case
  $\hat{V}_\ell(0) = O(\ell)$) but much harder to prove if
  $\lim_{\ell\to 0}\hat{V}_\ell(0) > 0$.  It is possible to generalize
  \ref{ax:A} and allow finitely many eigenvalues of $1+ V_\ell^{1/2}
  \frac{1}{p^2}|V_\ell|^{1/2}$ of order $\ell$. For simplicity
  we restrict to the case of only one eigenvalue of order $\ell$, however.
\end{remark}

For the remainder of this section, we assume that the sequence $V_\ell$
satisfies \ref{ax:1}--\ref{ax:n}. We shall use the notation
$$
\tilde\mu^{\gamma_\ell} = \mu -\frac 2{(2\pi)^{3/2}}\hat{V_\ell}(0)
  \int_{\mathbb{R}^3} \hat{\gamma_\ell}(p) \ddd{3}p
$$
in analogy to (\ref{eq:mu}). 

\begin{theorem}[Effective Gap equation]
  \label{thm:gap_eff}
  Let $T \geq 0$, $\mu\in \R$, and let $(\hat{\gamma}_\ell,\hat{\alpha}_\ell)$ be a minimizer of
      $\mathcal{F}_T^{V_\ell}$ with corresponding $\Delta_\ell = 2 (2\pi)^{-3/2}\hat V_\ell * \hat\alpha_\ell$. Then there exist $\Delta \geq 0$ and $\hat\gamma:\R^3\to \R_+$ such that $|\Delta_\ell(p)| \to \Delta$ pointwise, $\hat\gamma_\ell(p) \to \hat\gamma(p)$ pointwise and $\tilde{\mu}^{\gamma_\ell} \to \tilde{\mu}$ as  
  $\ell \to 0$, satisfying 
\begin{equation}\label{eq:sat}
 \begin{aligned}
  \tilde{\mu} &= \mu
        - \frac {2\mathcal{V}}{(2\pi)^{3/2}}\int_{\mathbb{R}^3} \hat{\gamma}(p)
        \ddd{3}p \\
  \hat \gamma(p) & = \frac 12 - \frac{p^2 - \tilde\mu}{2 K_{T,\tilde\mu}^{0,\Delta}(p)} \,.
 \end{aligned}
\end{equation}
If $\Delta_\ell\neq 0$ for a subsequence of $\ell$'s going to zero, then, in addition, 
  \begin{equation}
    \label{eq:gap_eff}
    \boxed{
        -\frac{1}{4\pi a} =
        \frac{1}{(2\pi)^3}\int_{\mathbb{R}^3}\left(\frac{1}{K_{T,\tilde{\mu}}^{0,\Delta}}
          -\frac{1}{p^2}\right) \ddd{3}p\,.
      }
  \end{equation}
\end{theorem}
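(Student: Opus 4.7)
The plan is to split the argument into two tracks: (i) identifying the pointwise limits of $\hat\gamma_\ell$ and $\tilde\mu^{\gamma_\ell}$ from the Euler--Lagrange identity for $\hat\gamma$, and (ii) a Birman--Schwinger analysis of the gap equation which, under Assumptions \ref{ax:A}--\ref{ax:n}, shows $|\Delta_\ell(p)|$ becomes $p$-independent in the limit and yields \eqref{eq:gap_eff} as a scalar solvability condition. I write $K_\ell := K_{T,\mu}^{\gamma_\ell,\Delta_\ell}$ throughout.

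For track (i), stability via \ref{ax:3}, \ref{ax:positivity}, \ref{ax:infspec} together with standard estimates gives uniform bounds $\|\hat\gamma_\ell\|_{L^1((1+p^2)dp)} + \|\alpha_\ell\|_{H^1} + \|\Delta_\ell\|_\infty \leq C$, so along a subsequence $\hat\gamma_\ell \to \hat\gamma$ a.e.\ and $\tilde\mu^{\gamma_\ell} \to \tilde\mu$. Any minimizer obeys the Euler--Lagrange identity
\begin{equation*}
\hat\gamma_\ell(p) = \tfrac12 - \frac{\varepsilon^{\gamma_\ell}(p) - \tilde\mu^{\gamma_\ell}}{2 K_\ell(p)}.
\end{equation*}
Since $\supp V_\ell \subseteq B_\ell(0)$ and $\|V_\ell\|_1 = O(1)$ (\ref{ax:supp}, \ref{ax:L1}), the bound $|\hat V_\ell(p-q) - \hat V_\ell(-q)| \leq (2\pi)^{-3/2}|p|\ell\|V_\ell\|_1$ gives $\varepsilon^{\gamma_\ell}(p) \to p^2$ pointwise. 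Once the pointwise limit $|\Delta_\ell(p)| \to \Delta$ from track (ii) is in hand, dominated convergence yields the second line of \eqref{eq:sat}, while the first follows from \eqref{eq:mu} together with \ref{ax:positivity}.

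For track (ii), I recast $(K_\ell + V_\ell)\hat\alpha_\ell = 0$ in Birman--Schwinger form $(1 + V_\ell^{1/2} K_\ell^{-1} |V_\ell|^{1/2}) u_\ell = 0$ with $u_\ell := V_\ell^{1/2}\alpha_\ell$, and decompose
\begin{equation*}
1 + V_\ell^{1/2} K_\ell^{-1} |V_\ell|^{1/2} = \mathcal{K}_\ell + V_\ell^{1/2}\bigl(K_\ell^{-1} - p^{-2}\bigr)|V_\ell|^{1/2},
\end{equation*}
where $\mathcal{K}_\ell = 1 + V_\ell^{1/2} p^{-2} |V_\ell|^{1/2}$ is the operator of \ref{ax:A}. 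Since $K_\ell^{-1}(p) - p^{-2} \in L^1(\R^3)$, its inverse Fourier transform is continuous and bounded, and on the shrinking support $B_\ell(0)$ of $V_\ell$ is uniformly close to its value at the origin, $\mathcal{I}_\ell := (2\pi)^{-3}\int(K_\ell^{-1}(q) - q^{-2})\,dq$. Hence the perturbation is, to leading order, the rank-one operator $\mathcal{I}_\ell\,|V_\ell^{1/2}\rangle\langle|V_\ell|^{1/2}|$. Writing $u_\ell = c_\ell \phi_\ell + w_\ell$ with $w_\ell \in \operatorname{Ran}(1 - P_\ell)$, applying $\mathcal{K}_\ell^{-1}(1 - P_\ell)$ (uniformly bounded by \ref{ax:A}) to the equation and invoking \ref{ax:n} yields $\|w_\ell\| = o(|c_\ell|)$, so $u_\ell$ is asymptotically proportional to $\phi_\ell$. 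Since $|V_\ell|^{1/2} u_\ell$ is supported in $B_\ell(0)$, its Fourier transform, and therefore $\Delta_\ell(p) = 2(2\pi)^{-3/2}\widehat{V_\ell\alpha_\ell}(p)$, converges pointwise to a constant $\Delta$.

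It remains to extract the self-consistency relation. Projecting the Birman--Schwinger equation along $\phi_\ell$ via the oblique projection $P_\ell$ (using that $J_\ell\phi_\ell$ is a left eigenvector of $\mathcal{K}_\ell$ with the same eigenvalue $e_\ell$) produces the scalar condition
\begin{equation*}
c_\ell e_\ell + \mathcal{I}_\ell \,\frac{\langle J_\ell\phi_\ell, V_\ell^{1/2}\rangle\,\langle|V_\ell|^{1/2}, u_\ell\rangle}{\langle J_\ell\phi_\ell, \phi_\ell\rangle} = o(c_\ell e_\ell).
\end{equation*}
The scattering-length formula \eqref{eq:a}, together with the near-kernel identity $\mathcal{K}_\ell^{-1}\sim e_\ell^{-1} P_\ell$, gives $4\pi\,a(V_\ell)\,e_\ell \sim |\langle|V_\ell|^{1/2},\phi_\ell\rangle|^2/\langle J_\ell\phi_\ell,\phi_\ell\rangle$. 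Plugging this in and using $\langle|V_\ell|^{1/2}, u_\ell\rangle \sim c_\ell \langle|V_\ell|^{1/2},\phi_\ell\rangle$ reduces the solvability condition, after division by $c_\ell e_\ell$, to $1 + 4\pi\,a(V_\ell)\,\mathcal{I}_\ell = o(1)$. Letting $\ell \to 0$ with $a(V_\ell) \to a$ (\ref{ax:a}) and $K_\ell \to K_{T,\tilde\mu}^{0,\Delta}$ then yields \eqref{eq:gap_eff}. The main obstacle is the rank-one approximation of the perturbation uniformly in $\ell$, coupled with the matched bookkeeping of orders: the smallness $e_\ell = O(\ell)$ must balance against the inner products $\langle|V_\ell|^{1/2},\phi_\ell\rangle$ and the norm of $u_\ell$ so that the remainder in the solvability condition is truly $o(c_\ell e_\ell)$; this is precisely where \ref{ax:L2} and \ref{ax:n} enter.
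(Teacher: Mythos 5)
Your Birman--Schwinger strategy for \eqref{eq:gap_eff} is the same as the paper's, and the scalar solvability condition you derive is correct. But there is a genuine gap in the step that is actually the technical heart of the theorem, namely the claim that $u_\ell=V_\ell^{1/2}\alpha_\ell$ is asymptotically proportional to $\phi_\ell$ and hence that $|\Delta_\ell(p)|$ converges to a constant.

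Concretely, when you apply $\mathcal{K}_\ell^{-1}(1-P_\ell)$ to the Birman--Schwinger equation and split the perturbation as $m_\ell\,|V_\ell^{1/2}\rangle\langle|V_\ell|^{1/2}| + A_{\mu,T,\ell}$, the rank-one piece is \emph{not} small: $m_\ell\to -1/(4\pi a)$ and $\|V_\ell\|_1=O(1)$ by \ref{ax:L1}, so $\|m_\ell\,\mathcal{K}_\ell^{-1}(1-P_\ell)V_\ell^{1/2}\|\,\|V_\ell\|_1^{1/2}$ is $O(1)$ with no reason to be $<1$. The resulting self-consistent bound for $\|w_\ell\|$ therefore does not close, and ``$\|w_\ell\|=o(|c_\ell|)$'' does not follow from \ref{ax:A} and \ref{ax:n} alone. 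Similarly, the asserted a priori bound $\|\Delta_\ell\|_\infty\le C$ is not available at the outset; the paper only has $\|\check\Delta_\ell\|_1\le C\ell^{-N}$ from \ref{ax:L2}, which is badly unbounded. The paper sidesteps this entirely: Lemma~\ref{lemma:convergence}(i) proves $|\Delta_\ell(p)|\to\Delta$ by a Taylor expansion of $\Delta_\ell$ at $p=0$ combined with the uniform $L^2$ bound on $\hat\alpha_\ell=-\Delta_\ell/(2K_\ell)$, which forces the limiting polynomial to be degree zero and bounded. You need an argument of this kind (or some other mechanism that controls the $L^1$ norm of $\check\Delta_\ell$ before invoking the compact-support heuristic), not just the near-kernel structure.

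Two smaller points. First, for \eqref{eq:gap_eff} itself, the paper avoids your projection-along-$\phi_\ell$ bookkeeping by factoring out $1+V_\ell^{1/2}p^{-2}|V_\ell|^{1/2}$ and then inverting $1+\mathcal{K}_\ell^{-1}A_{\mu,T,\ell}$ via the Neumann series, so that only the rank-one operator survives exactly; that keeps the remainder in trace form and makes the order-counting (where \ref{ax:n} buys the extra $\ell^{1/2}$ against $e_\ell^{-1}=O(\ell^{-1})$) transparent. Your projection is legitimate in principle, but the same smallness issues recur and would need to be worked through. Second, the theorem as stated requires convergence of the full sequence, not just a subsequence; the paper closes this by showing the limit $(\tilde\mu,\Delta)$ is the unique intersection of a strictly increasing and a strictly decreasing curve (monotonicity of the functions $F$, $G$ in $(\nu,\Delta)$), a step you omit.
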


Recall that, according to our definitions \eqref{eq:epsilon}--\eqref{eq:E}, 
$$
 K_{T,\tilde\mu}^{0,\Delta}(p) =
    \frac{E_{\tilde\mu}^{0,\Delta}(p)}{\tanh\big(\frac{E_{\tilde\mu}^{0,\Delta}(p)}{2T}\big)}\,, \quad 
    E_{\tilde\mu}^{0,\Delta}(p) = \sqrt{(p^2 -
      \tilde{\mu})^2 + |\Delta|^2}\,.
$$

\begin{remark}
  If we consider potentials such that $\hat{V}_\ell(0) \rightarrow 0$,
  we obtain at the same time that $\tilde{\mu}^{\gamma_\ell}
  \rightarrow \mu$ and consequently (\ref{eq:gap_eff}) becomes 
  \begin{equation}
    \label{eq:gap_eff:0}
    -\frac{1}{4\pi a}
    =
    \frac{1}{(2\pi)^3}\int_{\mathbb{R}^3}\left(\frac{1}{K_{T,\mu}^{0,\Delta}}
      -\frac{1}{p^2}\right) \ddd{3}p.
  \end{equation}
Equation \eqref{eq:gap_eff:0} is the form of the BCS gap equation
  one finds in the literature, see for instance \cite{Leggett}.
\end{remark}

The effective gap equation (\ref{eq:gap_eff}) suggests to define the
critical temperature of the system via the solution of
(\ref{eq:gap_eff}) for $\Delta=0$, in which case $\hat\gamma$ is given
by $(1+\ee^{\frac{p^2-\tilde{\mu}}{T}} )^{-1}$.

\begin{definition}[Critical temperature / renormalized chemical potential]\label{def:tc}
  Let $\mu >0$.  The \emph{critical temperature} $T_c$ and the
  \emph{renormalized chemical potential} $\tilde{\mu}$ in the limit of
  a contact potential with scattering length $a<0$ and $\lim_{\ell\to 0}
  \hat{V}_\ell(0) = \mathcal{V} \geq 0$ are implicitly given by the
  set of equations
  \begin{equation}
    \label{eq:T_c}
    \begin{split}
      -\frac{1}{4\pi a} &= \frac{1}{(2\pi)^3}\int_{\mathbb{R}^3}
      \left(
        \frac{\tanh\big(\frac{p^2-\tilde{\mu}}{2T_c}\big)}{p^2-\tilde{\mu}}
        -\frac{1}{p^2} \right) \ddd{3}p
     \, ,\\
      \tilde{\mu} &= \mu -\frac{2
        \mathcal{V}}{(2\pi)^{3/2}}\int_{\mathbb{R}^3}
      \frac{1}{1+\ee^{\frac{p^2-\tilde{\mu}}{T_c}}} \ddd{3}p \,.
    \end{split}
  \end{equation}
\end{definition}

We will show existence and uniqueness of $T_c$ and $\tilde \mu$ in
Appendix~\ref{sec:appendix:tc}. Note that it is essential that
$\mu>0$. If $\mu\leq 0$, then $\tilde\mu\leq 0$ and hence the right
side of the first equation in (\ref{eq:T_c}) is always non-positive,
hence there is no solutions for $a<0$. In other words, $T_c = 0$ for
$\mu \leq 0$.

\begin{remark}
 In  \cite{HS-T_C}, the behavior of the first integral on the right side of \eqref{eq:T_c} as  $T_c\to 0$ was examined. This allows one to deduce the asymptotic behavior of $T_c$ as $a$ tends to zero, which equals
\begin{equation*}
  T_c = \tilde \mu\left( \frac{8}{\pi} \ee^{\gamma-2} +
    o(1)\right)\ee^{\frac{\pi}{2\sqrt{\tilde\mu}a}} \,,
\end{equation*}
with  $\gamma \approx 0.577$ denoting Euler's constant. Similarly, one can study the asymptotic behavior as $\mu\to 0$.  
\end{remark}

Although this definition for $T_c$ is only valid in the limit $\ell
\to 0$, it serves to make statements about upper and lower bounds on the critical temperature 
 for small (but non-zero) $\ell$, as sketched in the figure on page \pageref{fig1}.

\begin{theorem}[Bounds on critical temperature]  \label{thm:T_c}
 Let $\mu\in\R$, $T\geq 0$ and let $(\gamma_\ell^0,0)$ be a normal state for $\mathcal{F}_T^{V_\ell}$. 
\begin{enumerate}[label=(\roman*)]
\item For $T<T_c$, there exists an $\ell_0(T)>0$ such that for $\ell<\ell_0(T)$, 
$\inf\spec(K_{T,\mu}^{\gamma^0_\ell,0} + V_\ell) <
    0$. \label{thm:T_c:1} Consequently, the system is superfluid.
\item For $T>T_c$, there  exists an $\ell_0(T)>0$ such that for $\ell<\ell_0(T)$, $\mathcal{F}_T^{V_\ell}$ is minimized by a normal state. I.e., the system is not superfluid. 
\end{enumerate}
\end{theorem}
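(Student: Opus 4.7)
The plan is to derive both bounds from results already in hand, with part~(ii) a clean application of Theorem~\ref{thm:gap_eff} and part~(i) a Birman–Schwinger analysis combined with Theorem~\ref{thm:stability}.

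For part~(ii) I argue by contradiction. If the minimizer of $\mathcal{F}_T^{V_\ell}$ were not normal along some sequence $\ell_n\to 0$, then $\alpha_{\ell_n}\neq 0$ and hence $\Delta_{\ell_n}\neq 0$. Theorem~\ref{thm:gap_eff} then produces a limit $(\Delta,\tilde\mu,\hat\gamma)$ satisfying both \eqref{eq:sat} and \eqref{eq:gap_eff}. Because the map $x\mapsto x/\tanh(x/(2T))$ is strictly increasing in $|x|$ and in $T$, $K_{T,\tilde\mu}^{0,\Delta}(p)$ is strictly increasing in $\Delta$ and in $T$, so the right-hand side of \eqref{eq:gap_eff} is strictly decreasing in both arguments. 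Combined with the monotone dependence of $\tilde\mu$ on $T$ dictated by the second line of \eqref{eq:sat}, the coupled system \eqref{eq:sat}--\eqref{eq:gap_eff} admits at most one solution with $\Delta\geq 0$, namely $(T,\Delta)=(T_c,0)$, which is Definition~\ref{def:tc} and whose uniqueness is established rigorously in Appendix~\ref{sec:appendix:tc}. Since $T>T_c$ cannot yield such a solution, we obtain a contradiction.

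For part~(i), Theorem~\ref{thm:stability}(i) reduces the task to showing $\inf\spec\bigl(K_{T,\mu}^{\gamma_\ell^0,0}+V_\ell\bigr)<0$ for all sufficiently small $\ell$. I first verify, using the same compactness and stability ingredients underlying Theorem~\ref{thm:gap_eff}, that $\tilde\mu^{\gamma_\ell^0}\to\tilde\mu$ and $\varepsilon^{\gamma_\ell^0}(p)-\tilde\mu^{\gamma_\ell^0}\to p^2-\tilde\mu$ as $\ell\to 0$, so that $K_{T,\mu}^{\gamma_\ell^0,0}$ is asymptotically close to $K_{T,\tilde\mu}^{0,0}$. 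Next, the Birman–Schwinger principle tells us that $K_{T,\mu}^{\gamma_\ell^0,0}+V_\ell$ has a negative eigenvalue iff the operator $1+V_\ell^{1/2}(K_{T,\mu}^{\gamma_\ell^0,0})^{-1}|V_\ell|^{1/2}$ does; I split it as
\begin{equation*}
 \bigl(1+V_\ell^{1/2}\tfrac{1}{p^2}|V_\ell|^{1/2}\bigr)
 +V_\ell^{1/2}\bigl((K_{T,\tilde\mu}^{0,0})^{-1}-\tfrac{1}{p^2}\bigr)|V_\ell|^{1/2}+o(1).
\end{equation*}
Assumption~\ref{ax:A} controls the first bracket and isolates a single low-lying eigenvalue of order $\ell$ in the direction $\phi_\ell$, whose coefficient, via \eqref{eq:a}, encodes $-1/(4\pi a)$. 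The second term admits a finite $\ell\to 0$ limit because $(K_{T,\tilde\mu}^{0,0})^{-1}-p^{-2}=O(|p|^{-4})$ at infinity and is integrable on $\R^3$, contributing the integral in \eqref{eq:T_c} upon projection onto $\phi_\ell$. Combining the two, the leading-order Birman–Schwinger eigenvalue vanishes exactly at $T=T_c$, by Definition~\ref{def:tc}; strict monotonicity of $K_{T,\tilde\mu}^{0,0}$ in $T$ (hence of the integral in \eqref{eq:T_c}) then forces a sign change in passing to $T<T_c$, equivalent via Birman–Schwinger to $\inf\spec<0$ for all sufficiently small $\ell$.

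The main obstacle is the quantitative control of the nonlocal modification $\varepsilon^{\gamma_\ell^0}-p^2$ in a norm strong enough to survive the sandwich with $V_\ell^{1/2}$, whose $L^2$-norm is allowed by \ref{ax:L2} to grow like $\ell^{-N}$. A closely related point is to show that $\gamma_\ell^0$ converges along a subsequence to the Fermi–Dirac distribution at the renormalized chemical potential, which relies on the stability assumptions \ref{ax:3} and \ref{ax:infspec} and on a priori bounds coming from the free-energy functional. Both difficulties are of the same nature as those in the proof of Theorem~\ref{thm:gap_eff} and should yield to the same decomposition $V_\ell=V_\ell^{1/2}\cdot|V_\ell|^{1/2}$ combined with the resolvent identities underlying Assumption~\ref{ax:A}.
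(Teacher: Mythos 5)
Your overall architecture is right---part~(ii) via Theorem~\ref{thm:gap_eff} and a monotonicity argument, part~(i) via Birman--Schwinger---but both halves have genuine gaps.

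In part~(ii), the assertion that the coupled system \eqref{eq:sat}--\eqref{eq:gap_eff} ``admits at most one solution with $\Delta\geq 0$, namely $(T,\Delta)=(T_c,0)$'' is not correct and also cannot be what you want: if it were true it would equally rule out solutions for $T<T_c$, i.e., no superfluidity at all. The uniqueness proved at the end of the proof of Theorem~\ref{thm:gap_eff} is uniqueness of $(\tilde\mu,\Delta)$ \emph{for each fixed} $T$; it says nothing about which values of $T$ admit a solution. The argument you actually need is the one the paper uses: since the right side of \eqref{eq:gap_eff} is decreasing in $\Delta$ and increasing in $\tilde\mu$, while $\tilde\mu$ (determined by the second line of \eqref{eq:sat}) is decreasing in $\Delta$, replacing $\Delta$ by $0$ in both equations only increases the right side of \eqref{eq:gap_eff}; hence any solution forces $-\frac{1}{4\pi a}\leq F_1(\nu_T,T)$ with $\nu_T$ the $\Delta=0$ renormalized chemical potential of Appendix~\ref{sec:appendix:tc}, and the strict monotonicity of $T\mapsto F_1(\nu_T,T)$ established there yields $T\leq T_c$. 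That chain of inequalities, not a uniqueness statement, is what contradicts $T>T_c$.

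In part~(i), the Birman--Schwinger reduction as you state it is incomplete. Since $V_\ell$ changes sign, the operator $V_\ell^{1/2}(K_{T,\mu}^{\gamma_\ell^0,0})^{-1}|V_\ell|^{1/2}$ is not self-adjoint and there is no clean counting theorem letting you read off ``$\inf\spec<0$'' from the sign of an eigenvalue of the $e=0$ Birman--Schwinger operator. The paper sidesteps this by working with the shifted family $V_\ell^{1/2}(K_{T,\mu}^{\gamma_\ell^0,0}-e)^{-1}|V_\ell|^{1/2}$ for $e\leq 0$, decomposing as in the proof of Theorem~\ref{thm:gap_eff} to get the scalar condition $\tilde a_{\ell,e}=-1/m_{\mu,e}^{\gamma_\ell^0}(T)$, and then using that $e\mapsto m_{\mu,e}^{\gamma_\ell^0}(T)$ is continuous, strictly increasing, and maps $(-\infty,0]$ onto $(-\infty,m_\mu^{\gamma_\ell^0,0}(T)]$, together with $\tilde a_{\ell,e}\to 4\pi a$ uniformly, to find a root $e<0$ once $\lim_{\ell\to0}m_\mu^{\gamma_\ell^0,0}(T)>-\frac{1}{4\pi a}$ (which is Lemma~\ref{thm:m_monotony}). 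Your sketch identifies the correct two ingredients---the $\ell\to 0$ convergence of $\gamma_\ell^0$ and the link to the integral in \eqref{eq:T_c}---but omits the shift-in-$e$ argument that actually produces the negative eigenvalue, and you flag it yourself only as an ``obstacle'' without resolving it.
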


Theorem~\ref{thm:T_c} shows that Definition~\ref{def:tc} is indeed the
correct definition of the critical temperature in the limit $\ell \to
0$. In addition, it also shows that in this limit there is actually
equivalence of statements (i) and (ii) in
Theorem~\ref{thm:stability}. In particular, one recovers the linear
criterion for the existence of a superfluid phase valid in the usual
BCS model, as discussed in Remark~\ref{rem:bcs}.

\section{Proofs}
\label{sec:proofs}

\subsection{General Potentials}
\label{sec:general_potentials}

In this section we prove Proposition~\ref{prop:minimizer} and Theorem~\ref{thm:stability}. As a first step
we show that $\mathcal{F}_T^V$ is bounded from below and has a minimizer. 

\begin{lemma}
  \label{lemma:bound}
  Let $V \in L^1\cap L^{3/2}$, $\hat{V}(0) \geq 0$ and $\hat{V}(p)
  \leq 2 \hat{V}(0)$ for all $p\in\R^3$.  Then $\mathcal{F}_T^V$ is bounded from below
  and there exist a minimizer $\Gamma$ of $\mathcal{F}_T^V(\Gamma)$.
\end{lemma}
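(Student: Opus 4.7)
The plan is to apply the direct method of the calculus of variations: first establish a uniform lower bound on $\mathcal{F}_T^V$ on $\mathcal{D}$, then extract a minimizer from a minimizing sequence via weak compactness plus lower semi-continuity of every term.

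\textbf{Lower bound.} The sign hypothesis $\|\hat V\|_\infty \leq 2\hat V(0)$ is designed precisely so that the direct and exchange terms combine into a non-negative quantity. Indeed, $V$ real and reflection-symmetric forces $\hat V$ real and even, and by Plancherel,
\begin{equation*}
  2\gamma(0)^2\!\int_{\R^3}\! V(x)\,\ddd{3}x - \!\int_{\R^3}\! |\gamma(x)|^2 V(x)\,\ddd{3}x = \frac{1}{(2\pi)^{3/2}}\!\int_{\R^3}\!\!\int_{\R^3} \bigl(2\hat V(0) - \hat V(p-q)\bigr)\hat\gamma(p)\hat\gamma(q)\,\ddd{3}p\,\ddd{3}q \geq 0.
\end{equation*}
The pairing term I would control by a KLMN-type estimate coming from $V\in L^{3/2}$: splitting $V$ into a bounded piece plus a piece of small $L^{3/2}$ norm and applying Sobolev gives, for every $\varepsilon>0$, a constant $M_\varepsilon$ with $|\int V|\alpha|^2|\leq \varepsilon\|\nabla\alpha\|_2^2 + M_\varepsilon\|\alpha\|_2^2$. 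Combined with the constraint $|\hat\alpha(p)|^2\leq \hat\gamma(p)(1-\hat\gamma(-p))\leq\hat\gamma(p)$ from $\Gamma\geq 0$, this absorbs into the kinetic term and yields
\begin{equation*}
  \mathcal{F}_T^V(\Gamma) \geq \int_{\R^3}\bigl[(1-\varepsilon)p^2 - \mu - M_\varepsilon\bigr]\hat\gamma(p)\,\ddd{3}p - TS(\Gamma).
\end{equation*}
To close the bound, compare with a free Fermi gas reference $\Gamma_0=\mathrm{diag}(f_0,1-f_0)$, $f_0(p) = (1+\ee^{((1-\varepsilon)p^2 - \mu - M_\varepsilon)/T})^{-1}$: concavity of $-\tr(X\ln X)$ (Schur--Horn majorization by the diagonal) gives $S(\Gamma)\leq\int s_F(\hat\gamma)\,\ddd{3}p$ with $s_F(x)=-x\ln x-(1-x)\ln(1-x)$, and pointwise minimization of $[(1-\varepsilon)p^2 - \mu - M_\varepsilon]\hat\gamma - Ts_F(\hat\gamma)$ is attained at $f_0$, producing a finite lower bound $\mathcal{F}_T^V(\Gamma)\geq -C$. (For $T=0$ the entropy term is non-negative and $\hat\gamma\leq 1$ together with the compact support of the negative part of the integrand is already enough.)

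\textbf{Existence of a minimizer.} For a minimizing sequence $\Gamma_n$ the same estimates, applied a posteriori, give
\begin{equation*}
  \int_{\R^3}(1+p^2)\hat\gamma_n(p)\,\ddd{3}p \leq C, \qquad \|\alpha_n\|_{H^1}\leq C, \qquad TS(\Gamma_n)\leq C.
\end{equation*}
The pointwise bound $\hat\gamma_n\leq 1$ plus momentum tightness from the $p^2$-moment yield uniform integrability, so Dunford--Pettis gives a subsequence $\hat\gamma_n\rightharpoonup \hat\gamma$ weakly in $L^1(\R^3,(1+p^2)\,\ddd{3}p)$; Banach--Alaoglu in $H^1$ yields $\hat\alpha_n\rightharpoonup\hat\alpha$; a further diagonal extraction gives pointwise a.e.\ convergence, so the constraints $0\leq\Gamma\leq\id_{\C^2}$ and $\hat\alpha(-p)=\hat\alpha(p)$ pass to the limit and $(\gamma,\alpha)\in\mathcal{D}$. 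Lower semi-continuity of every term then follows from: Fatou for the kinetic integral and $\int p^2|\hat\alpha|^2$; convexity of $-TS$ in $\Gamma$; testing the weak $L^1$ limit of $\hat\gamma_n$ against the constant $1$ to get $\int\hat\gamma_n\to\int\hat\gamma$ (so the direct term converges); a splitting argument on $\{|p|,|q|\leq R\}$ and its complement, using $\hat\gamma_n\leq 1$, $\hat V\in L^\infty$, pointwise a.e.\ convergence and the tail bound $\int_{|p|>R}\hat\gamma_n\leq C/R^2$, to pass to the limit in the exchange term; and Rellich compactness $H^1\hookrightarrow L^p_{\rm loc}$ ($p<6$) combined with the truncation $V = V\chi_{|V|<M}+V\chi_{|V|>M}$ for the pairing term, with $\|V\chi_{|V|>M}\|_{3/2}\to 0$ as $M\to\infty$.

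\textbf{Main obstacle.} The principal technical point is the quadratic dependence of the direct, exchange, and pairing terms on $\hat\gamma$ and $\alpha$: weak convergence alone is not enough to pass to the limit in these nonlinear functionals, and one must upgrade it to strong $L^p_{\rm loc}$ convergence via Rellich-type compactness and combine it with uniform momentum tail estimates coming from the kinetic bound. Throughout, the sign condition $\|\hat V\|_\infty\leq 2\hat V(0)$ is indispensable: without it the direct-minus-exchange contribution could drive $\mathcal{F}_T^V$ to $-\infty$ and the entire scheme collapses.
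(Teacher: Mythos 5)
Your lower bound is set up exactly as in the paper: the Plancherel identity
\begin{equation*}
  2\gamma(0)^2\int V - \int V|\gamma|^2 = \frac{1}{(2\pi)^{3/2}}\int \hat\gamma\,\bigl((2\hat V(0)-\hat V)*\hat\gamma\bigr)\,\ddd{3}p \geq 0
\end{equation*}
is the key observation in both, and from there the paper simply invokes \cite[Prop.~2]{HHSS} for the coercivity of the remaining BCS terms, whereas you re-derive it in place (KLMN splitting of $V$, Schur--Horn bound $S(\Gamma)\leq\int s_F(\hat\gamma)$, pointwise optimization against the free Fermi gas). That part is correct and is the same argument, just unpacked rather than cited. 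For the existence step the two routes also agree in outline: weak compactness of $(\hat\gamma_n,\hat\alpha_n)$ plus term-by-term lower semicontinuity, with the paper again citing \cite[Prop.~2]{HHSS} for the kinetic/pairing/entropy terms and handling the direct and exchange contributions itself via $\liminf\int\hat\gamma_n\geq\int\hat\gamma$ and the weak continuity of $\gamma\mapsto\int V|\gamma|^2$ on $H^1$ (\cite[Thm.~11.4]{LL}).

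There is, however, a genuine gap in your treatment of the exchange term. You write that a diagonal extraction gives pointwise a.e.\ convergence of $\hat\gamma_n$, and you then use this (together with $\hat V\in L^\infty$ and a Fourier-space splitting in $\{|p|,|q|\leq R\}$) to pass to the limit in $\int\int\hat V(p-q)\hat\gamma_n(p)\hat\gamma_n(q)$. But weak-$L^1$ (Dunford--Pettis) convergence of $\hat\gamma_n$ does \emph{not} imply a.e.\ convergence, and the bounds you have on $\hat\gamma_n$ (uniform boundedness, second moment control) do not give any compactness in the $p$ variable that would force it — a rapidly oscillating sequence like $\tfrac12(1+\sin(np_1))\chi_{|p|<1}$ satisfies all your estimates without converging pointwise. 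Since the exchange term is bilinear in $\hat\gamma_n$, weak convergence alone is not enough. The fix is the one the paper uses implicitly: work in configuration space, where $\gamma_n$ is genuinely bounded in $H^1(\R^3)$ (because $\int(1+p^2)\hat\gamma_n\leq C$ and $\hat\gamma_n\leq 1$), so Rellich gives $\gamma_n\to\gamma$ strongly in $L^2_{\mathrm{loc}}$ and a.e.; then the $V=V_1+V_2$ truncation ($\|V_1\|_{3/2}$ small, $V_2\in L^\infty$ compactly supported) closes the argument. This is precisely the content of \cite[Thm.~11.4]{LL} that the paper cites. (Your a.e.\ convergence claim is also used to pass the constraint $0\leq\Gamma\leq\id$ to the limit; there no pointwise convergence is actually needed, since the constraint set is convex and strongly closed, hence weakly closed by Mazur.)
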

\begin{proof}
  The case without direct and exchange term was treated in
  \cite[Proposition 2]{HHSS}.  The Hartree-Fock part of the functional
  $\mathcal{F}_T^V$ gives the additional contribution
  \begin{equation*}
    -\int_{\mathbb{R}^3} |\gamma(x)|^2 V(x) \ddd{3}x
    +2\gamma(0)^2\int_{\mathbb{R}^3} V(x) \ddd{3}x
    = \frac 1 {(2\pi)^{3/2}}
    \int_{\mathbb{R}^3} \hat{\gamma}(2\hat{V}(0)-\hat{V})*\hat{\gamma} \ddd{3}p,
  \end{equation*}
  which is non-negative because of our assumption 
  $\hat{V}(p) \leq 2 \hat{V}(0)$.  Hence the same lower
  bound as in the case without direct and exchange term applies.

  To show the existence of a minimizer, it remains to check the weak
  lower semicontinuity of $\mathcal{F}_T^V$ in $L^q(\mathbb{R}^3)\times
  H^1(\mathbb{R}^3,\ddd{3}x)$ (note that
  $\hat{\gamma} \in L^1(\mathbb{R}^3)\cap L^\infty(\mathbb{R}^3)$).
  The exchange term
  \begin{equation*}
    \gamma \mapsto \int_{\mathbb{R}^3} V(x) |\gamma(x)|^2 \ddd{3} x
  \end{equation*}
  is actually weakly continuous on $H^1(\mathbb{R}^3)$, see, e.g.,
  \cite[Thm. 11.4]{LL}.  Since also $\lim_{n\to\infty}
  \int_{\mathbb{R}^3} \hat{\gamma}_n \ddd{3}p \geq \int_{\mathbb{R}^3}
  \hat{\gamma} \ddd{3}p$, the direct term is weakly lower
  semicontinuous.  In the proof of \cite[Proposition 2]{HHSS} it was
  shown that all other terms in $\mathcal{F}_T^V$ are weakly lower
  semicontinuous as well.  As a consequence, a minimizing sequence
  will actually converge to a minimizer.
\end{proof}

\begin{lemma}
  \label{lemma:EL}
  The Euler-Lagrange equations for a minimizer $(\gamma, \alpha)$ of $\mathcal{F}_T^V$ are of the form
  \begin{align}
    \label{eq:el_gamma}
    \hat{\gamma}(p) &= \frac{1}{2} - \frac{\varepsilon^\gamma(p) -
    \tilde{\mu}^{\gamma}}{2 K_{T,\mu}^{\gamma,\Delta}(p) }
    \\
    \label{eq:el_alpha}
    \hat{\alpha}(p) &= \frac{1}{2}\Delta(p)
    \frac{\tanh\big(\frac{E_\mu^{\gamma,\Delta}(p)}{2T}\big)}{E_\mu^{\gamma,\Delta}(p)},
  \end{align}
  where we used the abbreviations introduced in \eqref{eq:epsilon}--\eqref{eq:E}. In particular, the BCS gap equation (\ref{eq:gap}) holds. 
\end{lemma}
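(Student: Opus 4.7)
The plan is standard variational calculus applied at the minimizer furnished by Lemma~\ref{lemma:bound}. For $T > 0$, the derivative of the entropy $S(\Gamma)$ diverges as any eigenvalue of $\Gamma(p)$ approaches $0$ or $1$, so any minimizer must satisfy $0 < \Gamma(p) < \id_{\C^2}$ pointwise. Hence the Euler--Lagrange equations hold as genuine equalities, without Lagrange multipliers for the bound constraints. The case $T = 0$ follows by a limiting argument, using the convention $K_{0,\mu}^{\gamma,\Delta}(p) = E_\mu^{\gamma,\Delta}(p)$.

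The main computational step is to identify the effective one-body Hamiltonian $H_{\rm eff}(p)$ such that the first variation of the energy part of $\mathcal{F}_T^V$ reads $\int \tr_{\C^2}[H_{\rm eff}(p) \, \delta \Gamma(p)] \ddd{3}p$. Differentiating the kinetic term contributes $p^2 - \mu$ on the diagonal. The direct and exchange Hartree--Fock pieces are quadratic in $\gamma$, so their variation (carried out in Fourier space using the convolution formula and $\hat V(-p) = \hat V(p)$) doubles their coefficients, turning $p^2 - \mu$ into exactly the renormalized dispersion $\varepsilon^\gamma(p) - \tilde\mu^\gamma$ from \eqref{eq:epsilon}--\eqref{eq:mu}. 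Varying the pairing term $\int |\alpha(x)|^2 V(x) \ddd{3}x$ yields the off-diagonal entry $\Delta(p)$ from \eqref{eq:Delta}, so that
$$
H_{\rm eff}(p) = \begin{pmatrix} \varepsilon^\gamma(p) - \tilde\mu^\gamma & \Delta(p) \\ \overline{\Delta(p)} & -(\varepsilon^\gamma(p) - \tilde\mu^\gamma) \end{pmatrix}.
$$
Using the matrix identity $\delta \tr(A \log A) = \tr[(\log A + \id) \, \delta A]$ for the entropy variation, the condition $\delta \mathcal{F}_T^V(\Gamma) = 0$ reduces to the matrix self-consistency equation $\Gamma(p) = (1 + \exp(H_{\rm eff}(p)/T))^{-1}$.

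Since $H_{\rm eff}(p)$ is Hermitian with eigenvalues $\pm E_\mu^{\gamma,\Delta}(p)$, the functional calculus identity $\tanh(H_{\rm eff}(p)/(2T)) = H_{\rm eff}(p) \tanh(E_\mu^{\gamma,\Delta}(p)/(2T))/E_\mu^{\gamma,\Delta}(p)$ gives
$$
\Gamma(p) = \tfrac{1}{2}\id - \frac{H_{\rm eff}(p)}{2\, K_{T,\mu}^{\gamma,\Delta}(p)}.
$$
Reading off the $(1,1)$ and $(1,2)$ entries of this matrix identity reproduces \eqref{eq:el_gamma} and \eqref{eq:el_alpha}, respectively. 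The BCS gap equation \eqref{eq:gap} then follows by applying the convolution operator $\tfrac{2}{(2\pi)^{3/2}} \hat V *$ to \eqref{eq:el_alpha} and using the definition \eqref{eq:Delta} of $\Delta$.

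The main obstacle I anticipate is the Fourier-space bookkeeping in identifying $H_{\rm eff}$: verifying that the quadratic Hartree--Fock terms $-\int |\gamma|^2 V \ddd{3}x$ and $+2\gamma(0)^2 \int V \ddd{3}x$, combined with the way the $2 \times 2$ structure of $\Gamma$ couples variations of $\hat\gamma(p)$ to those of $\hat\gamma(-p)$ via the $(2,2)$ entry, conspire to produce precisely the renormalization $p^2 - \mu \to \varepsilon^\gamma - \tilde\mu^\gamma$ with the correct factors of $2$ and $\hat V(0)$, as well as the Bogoliubov--de Gennes symmetry $H_{22} = -H_{11}$.
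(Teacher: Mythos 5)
Your proposal follows essentially the same route as the paper: identify the effective Bogoliubov--de~Gennes Hamiltonian $H_{\rm eff}$ by varying the energy, derive the self-consistency relation $\Gamma = (1 + \exp(H_{\rm eff}/T))^{-1}$, and exploit $H_{\rm eff}^2 = E_\mu^{\gamma,\Delta}\,\id$ plus hyperbolic identities to read off \eqref{eq:el_gamma}--\eqref{eq:el_alpha}. The sketch is correct in outline.

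There is, however, one point you wave past that is genuinely subtle. Applied naively, the identity $\delta\tr(A\log A)=\tr[(\log A+\id)\delta A]$ gives $\delta S = -\int\tr[(\ln\Gamma+\id)\delta\Gamma]$, and setting the first variation to zero would then produce $\Gamma = \exp(H_{\rm eff}/T - \id)$, not the claimed Fermi--Dirac form. What saves the argument is precisely the constrained structure of admissible variations $\delta\Gamma$ (its $(2,2)$ entry is $-\delta\hat\gamma(-p)$), together with the particle--hole symmetry of $\Gamma$: one has $\Gamma(-p) = U\,\overline{1-\Gamma(p)}\,U^{-1}$ for a fixed unitary $U$, which after the change of variable $p\mapsto -p$ converts the one-sided $\ln\Gamma$ into $\ln\bigl(\Gamma/(1-\Gamma)\bigr)$ and kills the constant $\id$. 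The paper encodes this cleanly by first rewriting
$S(\Gamma) = -\tfrac12\int \tr_{\C^2}\bigl[\Gamma\ln\Gamma + (1-\Gamma)\ln(1-\Gamma)\bigr]\,\ddd{3}p$
and using $\tfrac12\int\tr[H_0(\Gamma - \bigl(\begin{smallmatrix}0&0\\0&1\end{smallmatrix}\bigr))]$ for the kinetic term. You correctly flag this entry-coupling for the Hartree--Fock terms and for $H_{22}=-H_{11}$, but the same mechanism is needed for the entropy, and as written your step from the matrix identity to $\Gamma = (1+\exp(H_{\rm eff}/T))^{-1}$ skips it. A secondary point: $\int\tr[H_{\rm eff}\Gamma]\,\ddd{3}p$ is not individually finite, so the variational computation should be organized as a difference (the paper uses a linear-in-$\tilde\Gamma$ comparison functional); your direct $\delta\mathcal{F}_T^V = 0$ step needs the same regularization to be rigorous.
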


\begin{proof}
  The proof works similar to \cite{HHSS}. We sketch here an
  alternative, more concise derivation, restricting our attention to
  $T>0$ for simplicity.  A minimizer $\Gamma = (\gamma,\alpha)$
  of $\mathcal{F}_T^V$ fulfills the inequality
  \begin{equation}
    \label{eq:minimizer}
    0 \leq \left. \frac{\dda}{\dda t} \right|_{t=0} \mathcal{F}_T^V\big(\Gamma +
    t(\tilde{\Gamma}-\Gamma)\big)
  \end{equation}
  for arbitrary $\tilde{\Gamma} \in \mathcal{D}$.
  Here we may assume that $\Gamma$ stays away from $0$ and $1$ by
  arguing as in \cite[Proof of Lemma 1]{HHSS}.
  A simple
  calculation using
  \begin{equation*}
    S(\Gamma) =
    -\int_{\mathbb{R}^3}\tr_{\mathbb{C}^2}\Gamma\ln\Gamma \ddd{3}p
    =-\frac{1}{2}\int_{\mathbb{R}^3}\tr_{\mathbb{C}^2}\big(\Gamma\ln(\Gamma)+(1-\Gamma)\ln(1-\Gamma)\big)
    \ddd{3}p
  \end{equation*}
  shows that
  \begin{align*}
    \left. \frac{\dda}{\dda t} \right|_{t=0}
    \mathcal{F}_T^V\big(\Gamma + t(\tilde{\Gamma}-\Gamma)\big) =
    \frac{1}{2}\int_{\mathbb{R}^3}\tr_{\mathbb{C}^2}\left[(H_\Delta
      (\tilde{\Gamma}-\Gamma)\big)\ddd{3}p + T (\tilde{\Gamma}-\Gamma)
      \ln\Big(\frac{\Gamma}{1-\Gamma}\Big) \right]\ddd{3}p,
  \end{align*}
  with
  \begin{equation*}
    H_\Delta = \left(
      \begin{array}{ll}
        \varepsilon^{\gamma} - \tilde{\mu}^{\gamma} & \Delta \\
        \bar{\Delta} & -(\varepsilon^{\gamma} - \tilde{\mu}^{\gamma})
      \end{array}
    \right),
  \end{equation*}
  using the definition
  \begin{equation*}
    \Delta = 2 (2\pi)^{-3/2} \hat{V} * \hat{\alpha}.
  \end{equation*}
  Separating the terms containing no $\tilde{\Gamma}$ and moving them
  to the left side in \eqref{eq:minimizer}, we obtain
  \begin{equation*}
    \int_{\mathbb{R}^3} \tr_{\mathbb{C}^2}\Big(H_\Delta \big(\Gamma-
    \left(
      \begin{smallmatrix}
        0 & 0\\ 0& 1
      \end{smallmatrix}\right)
    \big) + T\ \Gamma
    \ln\Big(\frac{\Gamma}{1-\Gamma}\Big)\Big)\ddd{3}p
    \leq
    \int_{\mathbb{R}^3} \tr_{\mathbb{C}^2}\Big(H_\Delta \big(\tilde{\Gamma}-
    \left(
      \begin{smallmatrix}
        0 & 0\\ 0& 1
      \end{smallmatrix}\right)
    \big) + T\ \tilde{\Gamma}
    \ln\Big(\frac{\Gamma}{1-\Gamma}\Big)\Big)\ddd{3}p.
  \end{equation*}
  Note that $\int_{\mathbb{R}^3}\tr_{\mathbb{C}^2}\big(H_\Delta
  \Gamma\big)\ddd{3}p$ is not finite but $ \int_{\mathbb{R}^3}
  \tr_{\mathbb{C}^2}\Big(H_\Delta \big(\tilde{\Gamma}- \left(
    \begin{smallmatrix}
      0 & 0\\ 0& 1
    \end{smallmatrix}\right)
  \big)\Big)\ddd{3}p$ is.  Since $\tilde{\Gamma}$ was arbitrary,
  $\Gamma$ also minimizes the linear functional
  \begin{equation*}
    \tilde{\Gamma} \mapsto
    \int_{\mathbb{R}^3} \tr_{\mathbb{C}^2}\Big(H_\Delta \big(\tilde{\Gamma}-
    \left(
      \begin{smallmatrix}
        0 & 0\\ 0& 1
      \end{smallmatrix}\right)
    \big) + T\ \tilde{\Gamma}
    \ln\Big(\frac{\Gamma}{1-\Gamma}\Big)\Big)\ddd{3}p,
  \end{equation*}
  whose Euler-Lagrange equation is of the simple form
  \begin{equation}
    \label{eq:el_Gamma}
    0 = H_\Delta + T \ln\left(\frac{\Gamma}{1-\Gamma}\right),
  \end{equation}
  which is equivalent to
  \begin{equation*}
    \Gamma = \frac{1}{1+\ee^{\frac{1}{T}H_\Delta}}.
  \end{equation*}
  This in turn implies \eqref{eq:el_gamma} and \eqref{eq:el_alpha}.
  Indeed, $H_\Delta^2 = [E_\mu^{\gamma,\Delta}]^2\,
  \mathds{1}_{\mathbb{C}^2}$ and, therefore,
  \begin{equation*}
    \ee^{\frac{1}{T}H_\Delta} =
    \cosh\Big(\frac{1}{T}E_\mu^{\gamma,\Delta}\Big)\mathds{1}_{\mathbb{C}^2} +
    \frac{1}{E_\mu^{\gamma,\Delta}}
    \sinh\Big(\frac{1}{T}E_\mu^{\gamma,\Delta}\Big) H_\Delta.
  \end{equation*}
  With the relations
  \begin{align*}
    1+\cosh(x) &= \frac{\sinh(x)}{\tanh(x/2)}, \\
    1-\cosh(x) &= -\tanh(x/2)\sinh(x),
  \end{align*}
  we see that
  \begin{equation*}
    (1+\ee^{\frac{1}{T}H_\Delta})H_\Delta
    =
    -K_{T,\mu}^{\gamma,\Delta}(1-\ee^{\frac{1}{T}H_\Delta}),\qquad
    \textrm{ where }\quad
    K_{T,\mu}^{\gamma,\Delta} =
    \frac{E_\mu^{\gamma,\Delta}}{\tanh(\frac{1}{2T}E_\mu^{\gamma,\Delta})}.
  \end{equation*}
  Consequently,
  \begin{equation*}
    \Gamma = \frac{1}{1+\ee^{\frac{1}{T}H_\Delta}}
    = \frac{1}{2} +
    \frac{1}{2}\frac{1-\ee^{\frac{1}{T}H_\Delta}}{1+\ee^{\frac{1}{T}H_\Delta}}
    = \frac{1}{2} -\frac{1}{2 K_{T,\mu}^{\gamma,\Delta}}H_\Delta
    = \left(
      \begin{array}{ll}
        \frac{1}{2} - \frac{\varepsilon^{\gamma}-\tilde{\mu}^{\gamma}}{2 K_{T,\mu}^{\gamma,\Delta}} &
        -\frac{\Delta}{2 K_{T,\mu}^{\gamma,\Delta}} \\
        -\frac{\bar{\Delta}}{2 K_{T,\mu}^{\gamma,\Delta}} & \frac{1}{2} +
        \frac{\varepsilon^{\gamma}-\tilde{\mu}^{\gamma}}{2 K_{T,\mu}^{\gamma,\Delta}}
      \end{array}
    \right).
  \end{equation*}
\end{proof}

\begin{proof}[Proof of Proposition~\ref{prop:minimizer}]
  This is an immediate consequence of Lemmas~\ref{lemma:bound} and~\ref{lemma:EL}.
\end{proof}

\begin{proof}[Proof of Theorem~\ref{thm:stability}]
  The proof works exactly as the analog steps in \cite[Proof of
  Theorem 1]{HHSS}.  To see \ref{thm:stability:1}, note that
  $\langle \hat{\alpha}, (K_{T,\mu}^{\gamma_0,0} +
  V)\hat{\alpha}\rangle$ is the second derivative of $\mathcal{F}_T^V$
  with respect to $\alpha$ at $\Gamma=\Gamma_0$. For
  \ref{thm:stability:2}, we use the fact that the gap equation is a
  combination of the Euler-Lagrange equation \eqref{eq:el_alpha} of
  the functional and the definition of $\Delta$.
\end{proof}

\subsection{Sequence of Short-Range Potentials}
\label{sec:contactpotential}

In the following we consider a sequence of potentials $V_\ell$
satisfying the assumptions
\ref{ax:1}--\ref{ax:n} in Assumption~\ref{asm:assumption}. Since $V_\ell$ converges to a
contact potential, Lemma~\ref{lemma:bound} is not sufficient to prove
that $\mathcal{F}_T^{V_\ell}$ is uniformly bounded from below. To this aim,
we have to use more subtle estimates involving bounds on the relative
entropy obtained in \cite{FHSS-micro_ginzburg_landau}, and we heavily
rely on assumption \ref{ax:infspec}.

\begin{lemma}
  \label{lemma:minimizer}
  There exists $C_1 > 0$, independent of $\ell$,  such that 
  \begin{equation}
    \label{eq:F_T_bound}
    \mathcal{F}_T^{V_\ell}(\Gamma)
    \geq -C_1
    + \frac{1}{2}\int_{\mathbb{R}^3} (1+p^2)(\hat{\gamma} - \hat{\gamma}_0)^2 \ddd{3}p
    + \frac{1}{2}\int_{\mathbb{R}^3} |p|^b |\hat{\alpha}|^2 \ddd{3}p,
  \end{equation}
  where we denote $\hat{\gamma}_0(p) =
  \frac{1}{1+\ee^{(p^2-\mu)/T}}$.
\end{lemma}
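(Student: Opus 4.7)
The strategy is the one indicated in the paragraph preceding the lemma: combine a relative-entropy bound for the kinetic-minus-entropy part with the Schrödinger-type estimate \ref{ax:infspec} on $p^2+V_\ell-|p|^b$, and use \ref{ax:3} to control the Hartree-Fock contribution. I would write
$$\mathcal{F}_T^{V_\ell}(\Gamma)-\mathcal{F}_T^{0}(\Gamma_0) = \Bigl[\mathcal{F}_T^0(\Gamma)-\mathcal{F}_T^0(\Gamma_0)\Bigr] + \int V_\ell|\alpha|^2 \,\dda^3x + \mathrm{HF}_\ell(\gamma),$$
where $\mathrm{HF}_\ell(\gamma) = -\int V_\ell|\gamma(x)|^2+2\gamma(0)^2\int V_\ell = (2\pi)^{-3/2}\int \hat{\gamma}\,\bigl(2\hat V_\ell(0)-\hat V_\ell\bigr)*\hat\gamma \,\dda^3p$, and recall $\mathcal{F}_T^0(\Gamma_0)$ is a fixed finite constant.

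For the bracketed term I would invoke the pointwise relative-entropy inequality from \cite{FHSS-micro_ginzburg_landau} (Klein's inequality applied to the $2\times 2$ matrices $\Gamma,\Gamma_0$), which yields
$$\mathcal{F}_T^0(\Gamma)-\mathcal{F}_T^0(\Gamma_0) \;\geq\; \int K_{T,\mu}^{0,0}(p)\,|\hat\alpha(p)|^2\,\dda^3p + c\int \bigl(1+p^2\bigr)\bigl(\hat\gamma-\hat\gamma_0\bigr)^2\,\dda^3p$$
for some $c>0$ depending only on $T,\mu$. Since $K_{T,\mu}^{0,0}(p)\geq\max\{|p^2-\mu|,\,2T\}$, this already produces the desired $(1+p^2)(\hat\gamma-\hat\gamma_0)^2$ weight after adjusting $c$, and gives essentially a $p^2$ weight on $|\hat\alpha|^2$ for large $|p|$.

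To combine the $\alpha$-piece with the interaction, split momentum at some fixed radius $R$ (large compared to $\sqrt{\mu}$). On $|p|\leq R$ the integrand $K_{T,\mu}^{0,0}|\hat\alpha|^2\geq 2T|\hat\alpha|^2$, which along with the $L^2$ control from \ref{ax:L2}/\ref{ax:L1} absorbs any loss. On $|p|\geq R$ one has $K_{T,\mu}^{0,0}(p) \geq (1-\eta)p^2-C_\eta$ for $\eta$ small, so
$$\int K_{T,\mu}^{0,0}|\hat\alpha|^2 + \int V_\ell|\alpha|^2 \;\geq\; (1-\eta)\bigl\langle\alpha,(p^2+V_\ell)\alpha\bigr\rangle + \eta\bigl\langle\alpha, V_\ell\alpha\bigr\rangle - C_\eta\|\alpha\|_2^2.$$
Assumption \ref{ax:infspec} gives $(1-\eta)\langle\alpha,(p^2+V_\ell)\alpha\rangle \geq (1-\eta)\int|p|^b|\hat\alpha|^2 + (1-\eta)C_2\|\alpha\|_2^2$; the stray term $\eta\langle\alpha,V_\ell\alpha\rangle$ is controlled, via Sobolev and \ref{ax:L2} (or \ref{ax:L1} together with an $L^\infty$ bound $\|\hat\alpha\|_\infty\leq\|\alpha\|_1$), by an $\eta$-small multiple of $\int p^2|\hat\alpha|^2$ plus a constant; choosing $\eta$ small enough leaves at least $\tfrac12\int |p|^b|\hat\alpha|^2$ on the right. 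The $\|\alpha\|_2^2$ contribution is absorbed into the $-C_1$ constant.

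For the Hartree-Fock part, \ref{ax:3} implies $2\hat V_\ell(0)-\hat V_\ell(p)\geq 0$, so $\mathrm{HF}_\ell(\gamma)\geq 0$; alternatively, splitting $\hat\gamma=\hat\gamma_0+(\hat\gamma-\hat\gamma_0)$, the pure-$\hat\gamma_0$ piece is bounded uniformly in $\ell$ by \ref{ax:positivity} and \ref{ax:L1} (it is $\leq 4\hat V_\ell(0)\|\hat\gamma_0\|_1^2$), the cross term is absorbed into the $(1+p^2)(\hat\gamma-\hat\gamma_0)^2$ reservoir by Cauchy–Schwarz, and the quadratic-in-difference piece is non-negative. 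Putting the three bounds together and tracking constants into a single $-C_1$ gives \eqref{eq:F_T_bound}. The main obstacle is the $\alpha$-step: one needs the spectral bound \ref{ax:infspec} to replace an interaction term involving a potential that tends to a point interaction, and one must arrange the momentum split and coefficients so that \emph{both} the $|p|^b$ weight and the $(1+p^2)(\hat\gamma-\hat\gamma_0)^2$ weight survive with a positive prefactor independent of $\ell$.
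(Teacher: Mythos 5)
Your overall strategy matches the paper's: drop the Hartree--Fock part using \ref{ax:3}, compare $\tilde{\mathcal F}_T(\Gamma)$ to the free Gibbs state $\Gamma_0 = (1+\ee^{H_0/T})^{-1}$ via the relative-entropy estimate of \cite{FHSS-micro_ginzburg_landau}, obtain $\int K_{T,\mu}^{0,0}\bigl((\hat\gamma-\hat\gamma_0)^2+|\hat\alpha|^2\bigr)$, and treat the $\gamma$- and $\alpha$-pieces separately, with \ref{ax:infspec} entering in the $\alpha$-piece. The $\gamma$-piece and the HF non-negativity are handled exactly as in the paper, and the simpler of your two HF arguments is the one the paper uses.

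The $\alpha$-step, however, has a genuine gap. You split $K_{T,\mu}^{0,0}\geq (1-\eta)p^2 - C_\eta$ and arrive at a stray term $\eta\langle\alpha, V_\ell\alpha\rangle$, which you then claim to control by an ``$\eta$-small multiple of $\int p^2|\hat\alpha|^2$ plus a constant.'' This fails on two counts. First, no such uniformly small form bound exists: by \ref{ax:A}, $p^2+V_\ell$ has a near-resonance (the smallest eigenvalue of $1+V_\ell^{1/2}p^{-2}|V_\ell|^{1/2}$ tends to $0$), which forces any constant $c$ in an estimate $|\langle\alpha,V_\ell\alpha\rangle|\leq c\int p^2|\hat\alpha|^2 + C$ to satisfy $c\geq 1$ uniformly in $\ell$; moreover $\|V_\ell\|_{3/2}$ need not stay bounded, so Sobolev does not even give a uniform $c$. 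Second, even granting a bound of the form $-\eta\langle\alpha,V_\ell\alpha\rangle\leq \eta c\int p^2|\hat\alpha|^2 + C'$, there is nothing left to absorb it into: after the split, the only positive weight on $|\hat\alpha|^2$ that survives is $(1-\eta)|p|^b$ with $b<1$, which cannot dominate $p^2$ at large momenta. The paper avoids the issue entirely by \emph{not} introducing $\eta$: it uses $K_{T,\mu}^{0,0}(p)\geq p^2-\mu$ exactly, so that $\langle\alpha,(K_{T,\mu}^{0,0}+V_\ell)\alpha\rangle \geq \langle\alpha,(p^2+V_\ell-\mu)\alpha\rangle$, then applies \ref{ax:infspec} once to get $\langle\alpha,(|p|^b+C_2-\mu)\alpha\rangle$, splits off $\tfrac12\int|p|^b|\hat\alpha|^2$, and bounds the remainder below by $-\int\bigl[\tfrac12|p|^b+C_2-\mu\bigr]_-\dd^3p$ using $|\hat\alpha|\leq 1$ and the integrability of the negative part. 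You should replace your $\eta$-split with that direct argument.
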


\begin{proof}
  We rewrite $\mathcal{F}_T^{V_\ell}(\Gamma)$ as
  \begin{align*}
    \mathcal{F}_T^{V_\ell}(\Gamma) = &\frac{1}{2} \int_{\mathbb{R}^3}
    \tr_{\mathbb{C}^2} \Big(H_0 \big(\Gamma- \left(
      \begin{smallmatrix}
        0 & 0\\ 0& 1
      \end{smallmatrix}\right)
    \big)\Big)\ddd{3}p
    + \int_{\mathbb{R}^3} V_\ell(x) |\alpha(x)|^2 \ddd{3}x\\
    &- T\, S(\Gamma) +  \frac 1{(2\pi)^{3/2}} \int_{\mathbb{R}^3}
    \big((2\hat{V_\ell}(0)-\hat{V_\ell})
    *\hat{\gamma}\big)(p)\,\hat{\gamma}(p) \ddd{3}p\,,
  \end{align*}
  where $\Gamma = \Gamma(\gamma, \alpha)$ and
  \begin{align*}
    H_0 &= \left(
      \begin{array}{ll}
        p^2-\mu & 0 \\
        0 & -(p^2-\mu)
      \end{array}
    \right).
  \end{align*}

  Since $\hat{\gamma}(p) \geq 0$ and, by assumption
  \ref{ax:3}, 
  $2\hat{V_\ell}(0)-\hat{V_\ell}(p)  \geq 0$, the
  combination of direct plus exchange term is non-negative and it
  suffices to find a lower bound for
  \begin{align*}
    \tilde{\mathcal{F}}_T^{V_\ell}(\Gamma) = \frac{1}{2}
    \int_{\mathbb{R}^3} \tr_{\mathbb{C}^2} \Big(H_0 \big(\Gamma-
    \left(
      \begin{smallmatrix}
        0 & 0\\ 0& 1
      \end{smallmatrix}\right)
    \big)\Big)\ddd{3}p + \int_{\mathbb{R}^3} V_\ell(x) |\alpha(x)|^2
    \ddd{3}x - T\, S(\Gamma).
  \end{align*}

  We compare $\tilde{\mathcal{F}}_T^{V_\ell}(\Gamma)$ to the value
  $\tilde{\mathcal{F}}_T^{V_\ell}(\Gamma_0)$, where $\Gamma_0 =
  \frac{1}{1+\ee^{H_0/T}}$. Their difference equals 
  \begin{equation*}
    \begin{split}
      \tilde{\mathcal{F}}_T^{V_\ell}(\Gamma) -
      \tilde{\mathcal{F}}_T^{V_\ell}(\Gamma_0) = &\frac{1}{2}
      \int_{\mathbb{R}^3} \tr_{\mathbb{C}^2} \Big(H_0
      (\Gamma-\Gamma_0)\Big)\ddd{3}p
      - T\Big(S(\Gamma)-S(\Gamma_0)\Big)\\
      &+ \int_{\mathbb{R}^3} V_\ell(x) |\alpha(x)|^2 \ddd{3}x.
    \end{split}
  \end{equation*}
  Using $H_0 + T\ln\big(\frac{\Gamma_0}{1-\Gamma_0}\big) = 0$ in the
  trace and performing some simple algebraic transformations, we may
  write
  \begin{equation*}
    \tilde{\mathcal{F}}_T^{V_\ell}(\Gamma) -
    \tilde{\mathcal{F}}_T^{V_\ell}(\Gamma_0)
    = \frac{T}{2}\mathcal{H}(\Gamma,\Gamma_0) + \int_{\mathbb{R}^3} V_\ell(x) |\alpha(x)|^2 \ddd{3}x,
  \end{equation*}
  where
  \begin{equation*}
    \mathcal{H}(\Gamma,\Gamma_0) =
    \int_{\mathbb{R}^3}\tr_{\mathbb{C}^2}\Big[\Gamma\big(\ln(\Gamma)-\ln(\Gamma_0)\big)
    + \big(1-\Gamma\big)\big(\ln(1-\Gamma) - \ln(1-\Gamma_0)\big)\Big]\ddd{3}p
  \end{equation*}
  denotes the relative entropy of $\Gamma$ and $\Gamma_0$.  Lemma~3 in
  \cite{FHSS-micro_ginzburg_landau}, which is an extension of
  Theorem~1 in \cite{HLS2008}, implies the lower bound
  \begin{align*}
    \frac{T}{2}\mathcal{H}(\Gamma,\Gamma_0) &\geq
    \frac{1}{2}\int_{\mathbb{R}^3}\tr_{\mathbb{C}^2}
    \left[\frac{H_0}{\tanh(\frac{H_0}{2T})}(\Gamma-\Gamma_0)^2
    \right]\ddd{3}p\\
    &= \int_{\mathbb{R}^3} K_{T,\mu}^{0,0}(p)
    \big((\hat{\gamma}(p)-\hat{\gamma}_0(p))^2+|\hat{\alpha}(p)|^2\big)\ddd{3}p.
  \end{align*}
  Hence we obtain
  \begin{equation*}
    \tilde{\mathcal{F}}_T^{V_\ell}(\Gamma) - \tilde{\mathcal{F}}_T^{V_\ell}(\Gamma_0)
    \geq
    \left\langle {\alpha} \left| K_{T,\mu}^{0,0}+V_\ell \right| {\alpha}
    \right\rangle + \int_{\mathbb{R}^3} K_{T,\mu}^{0,0}(p)
    (\hat{\gamma}(p)-\hat{\gamma}_0(p))^2\ddd{3}p.
  \end{equation*}
  In both terms, we can use $K_{T,\mu}^{0,0} \geq p^2-\mu$, therefore
  \begin{align*}
    \tilde{\mathcal{F}}_T^{V_\ell}(\Gamma) -
    \tilde{\mathcal{F}}_T^{V_\ell}(\Gamma_0) &\geq \left\langle
    {\alpha} \left| p^2+V_\ell-\mu \right| {\alpha} \right\rangle \\
    &\quad +  \frac{1}{2}\int_{\mathbb{R}^3} (1+p^2)(\hat{\gamma} -
    \hat{\gamma}_0)^2 \ddd{3}p + \int_{\mathbb{R}^3}
    \Big(\frac{p^2}{2}-\mu-\frac{1}{2}\Big)(\hat{\gamma} -
    \hat{\gamma}_0)^2 \ddd{3}p\,.
  \end{align*}
  Using $(\hat{\gamma} - \hat{\gamma}_0)^2 \leq 1$, we can bound
  \begin{align*}
    \int_{\mathbb{R}^3}
    \Big(\frac{p^2}{2}-\mu-\frac{1}{2}\Big)(\hat{\gamma} -
    \hat{\gamma}_0)^2 \ddd{3}p \geq
    -\int_{\mathbb{R}^3}\left[\frac{p^2}{2}-\mu-\frac{1}{2}\right]_-
    \ddd{3}p\,,
  \end{align*}
where $[t]_- = \max\{0,-t\}$ denote the negative part of a real number $t$.
  By assumption \ref{ax:infspec}, $\inf\spec
  (p^2+V_\ell-|p|^b)$ is bounded by some number $C$ independent of $\ell$. Thus
  \begin{align*}
    \int_{\mathbb{R}^3}
    \big(p^2+V_\ell-\mu\big)|\hat{\alpha}|^2\ddd{3}p &\geq
    \int_{\mathbb{R}^3}
    (|p|^b+C-\mu)|\hat{\alpha}|^2\ddd{3}p\\
    &= \frac{1}{2}\int_{\mathbb{R}^3}|p|^b|\hat{\alpha}|^2\ddd{3}p +
    \int_{\mathbb{R}^3}\Big(\frac{|p|^b}{2}+C-\mu\Big)|\hat{\alpha}|^2\ddd{3}p.
  \end{align*}
  With $|\hat{\alpha}|^2 \leq 1$ we conclude
  \begin{equation*}
    \int_{\mathbb{R}^3}\Big(\frac{|p|^b}{2}+C-\mu\Big)|\hat{\alpha}|^2\ddd{3}p
    \geq -\int_{\mathbb{R}^3}\left[\frac{|p|^b}{2}-\mu+C\right]_- \ddd{3}p.
  \end{equation*}
Our final lower bound is thus
  \begin{equation*}
    \mathcal{F}_T^{V_\ell}(\Gamma)
    \geq \tilde{\mathcal{F}}_T^{V_\ell}(\Gamma)
    \geq -C_1
    + \frac{1}{2}\int_{\mathbb{R}^3} (1+p^2)(\hat{\gamma} - \hat{\gamma}_0)^2 \ddd{3}p
    + \frac{1}{2}\int_{\mathbb{R}^3} |p|^b |\hat{\alpha}|^2 \ddd{3}p,
  \end{equation*}
  with
  \begin{align*}
    C_1 &= -\tilde{\mathcal{F}}_T^{V_\ell}(\Gamma_0) +
    \int_{\mathbb{R}^3}\left[\frac{p^2}{2}-\mu-\frac{1}{2}\right]_-
    \ddd{3}p + \int_{\mathbb{R}^3}\left[\frac{|p|^b}{2}-\mu+C\right]_-
    \ddd{3}p\,.
  \end{align*}
  Since $\tilde{\mathcal{F}}_T^{V_\ell}(\Gamma_0)$ does not depend on 
  $\ell$ (the off-diagonal entries of $\Gamma_0$ being $0$) this concludes the proof.
\end{proof}

\begin{lemma}
  \label{lemma:gamma_bound}
  If $(\gamma_\ell,\alpha_\ell)$ is a minimizer of
  $\mathcal{F}_T^{V_\ell}$, then $\int_{\mathbb{R}^3}
  \hat{\gamma}_\ell(p) |p|^b \ddd{3}p$ is uniformly bounded in $\ell$.
\end{lemma}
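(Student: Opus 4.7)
The plan is to combine the a priori bounds from Lemma~\ref{lemma:minimizer} with the Euler--Lagrange identity of Lemma~\ref{lemma:EL}.

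First, I apply Lemma~\ref{lemma:minimizer} to the minimizer $(\gamma_\ell,\alpha_\ell)$, comparing with the trial state $(\hat\gamma_0,0)$ where $\hat\gamma_0(p) = (1+\ee^{(p^2-\mu)/T})^{-1}$. Since $\mathcal{F}_T^{V_\ell}(\hat\gamma_0,0)$ is uniformly bounded in $\ell$ thanks to Assumption~\ref{asm:assumption}~\ref{ax:positivity} (which controls $\hat V_\ell(0)$), this yields
\begin{equation*}
\int_{\R^3}(1+p^2)(\hat\gamma_\ell-\hat\gamma_0)^2\,\dd^3p \leq C, \qquad \int_{\R^3}|p|^b|\hat\alpha_\ell|^2\,\dd^3p \leq C,
\end{equation*}
both uniformly in $\ell$.

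Next I use the pointwise decomposition $\hat\gamma_\ell = \hat\gamma_\ell^2 + \hat\gamma_\ell(1-\hat\gamma_\ell)$. For the first piece, $\hat\gamma_\ell^2 \leq 2\hat\gamma_0^2 + 2(\hat\gamma_\ell-\hat\gamma_0)^2$ combined with $|p|^b \leq 1+p^2$ (valid since $b<1$) shows $\int\hat\gamma_\ell^2|p|^b\,\dd^3p \leq C$. For the second piece, the Euler--Lagrange equations produce the algebraic identity
\begin{equation*}
\hat\gamma_\ell(1-\hat\gamma_\ell) = |\hat\alpha_\ell|^2 + \frac{1}{4\cosh^2\!\bigl(E_\ell(p)/(2T)\bigr)},
\end{equation*}
where the last term is interpreted as $0$ at $T=0$. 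The $|\hat\alpha_\ell|^2$ contribution to $\int\hat\gamma_\ell|p|^b\,\dd^3p$ is controlled by the a priori bound above, which already finishes the proof at $T=0$.

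For $T>0$, one is left to control $\int\cosh^{-2}\!\bigl(E_\ell/(2T)\bigr)|p|^b\,\dd^3p$. Using $E_\ell(p) \geq |\varepsilon_\ell(p)-\tilde\mu_\ell|$ together with the elementary estimate $|\varepsilon_\ell(p)-\tilde\mu_\ell-(p^2-\mu)| \leq C\,\hat V_\ell(0)\,\|\hat\gamma_\ell\|_1$ (obtained directly from \eqref{eq:epsilon}--\eqref{eq:mu} and Assumption~\ref{ax:3}), the integrand decays exponentially once $p^2 \gtrsim \mu+\|\hat\gamma_\ell\|_1$, so the large-$p$ tail contribution is uniformly small. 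The main technical obstacle is that the remaining ``Fermi region'' contribution depends on $\|\hat\gamma_\ell\|_1$, so one needs a uniform a priori bound on $\|\hat\gamma_\ell\|_1$ itself. I would obtain such a bound by running the same decomposition with $b=0$ and combining the resulting inequality with a Lieb--Thirring-type lower bound $\int p^2\hat\gamma_\ell\,\dd^3p \geq c\|\hat\gamma_\ell\|_1^{5/3}$ (valid since $0\leq\hat\gamma_\ell\leq 1$), which, together with the upper bound $\mathcal{F}_T^{V_\ell}(\Gamma_\ell) \leq \mathcal{F}_T^{V_\ell}(\hat\gamma_0,0)$, controls $\|\hat\gamma_\ell\|_1$ uniformly in $\ell$ and closes the bootstrap.
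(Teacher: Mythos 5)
Your decomposition $\hat\gamma_\ell = \hat\gamma_\ell^2 + |\hat\alpha_\ell|^2 + \tfrac{1}{4}\cosh^{-2}\!\big(E_\mu^{\gamma_\ell,\Delta_\ell}/(2T)\big)$ does follow from the Euler--Lagrange equations (one checks $\hat\gamma_\ell(1-\hat\gamma_\ell)-|\hat\alpha_\ell|^2 = \tfrac14-\tfrac14\tanh^2(E/2T)$), and it is a genuinely different rearrangement from the one in the paper, which writes $\hat\gamma_\ell = (1+\ee^{E/T})^{-1} + 2K|\hat\alpha_\ell|^2/(E+\varepsilon^{\gamma_\ell}-\tilde\mu^{\gamma_\ell})$. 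Your handling of the $\hat\gamma_\ell^2$ and $|\hat\alpha_\ell|^2$ pieces via Lemma~\ref{lemma:minimizer} is fine.

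The gap is in the $\cosh^{-2}$ piece. You lower-bound $E$ via the \emph{two-sided} estimate $|\varepsilon^{\gamma_\ell}-\tilde\mu^{\gamma_\ell}-(p^2-\mu)|\lesssim \hat V_\ell(0)\|\hat\gamma_\ell\|_1$, which makes your bound depend on $\|\hat\gamma_\ell\|_1$; but that quantity is essentially what the lemma is meant to control (note $\|\hat\gamma_\ell\|_1 \leq \tfrac{4\pi}{3}+\|\hat\gamma_\ell\,|p|^b\|_1$), so you are in a circle. The proposed Lieb--Thirring bootstrap does not close it: the inequality $\int p^2\hat\gamma_\ell\,\ddd{3}p \geq c\|\hat\gamma_\ell\|_1^{5/3}$ is correct, but to use it you would need a uniform upper bound on $\int p^2\hat\gamma_\ell\,\ddd{3}p$, and Lemma~\ref{lemma:minimizer} only controls $\int(1+p^2)(\hat\gamma_\ell-\hat\gamma_0)^2\,\ddd{3}p$. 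Passing from the latter to the former by Cauchy--Schwarz fails because $\int(1+p^2)^{-1}\,\ddd{3}p$ diverges in three dimensions, and extracting such a bound directly from $\mathcal{F}_T^{V_\ell}(\Gamma_\ell)\leq\mathcal{F}_T^{V_\ell}(\gamma_0,0)$ is exactly the delicate point for which the relative-entropy argument of Lemma~\ref{lemma:minimizer} was introduced.

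The correct move is the \emph{one-sided} bound that Assumption~\ref{ax:3} provides pointwise, with no reference to $\|\hat\gamma_\ell\|_1$ at all: since $2\hat V_\ell(0)-\hat V_\ell\geq 0$ and $\hat\gamma_\ell\geq 0$, the definitions \eqref{eq:epsilon}--\eqref{eq:mu} give $\varepsilon^{\gamma_\ell}(p)-\tilde\mu^{\gamma_\ell}\geq p^2-\mu$ for every $p$ and every $\ell$. Hence wherever $p^2\geq\mu$ one has $E_\mu^{\gamma_\ell,\Delta_\ell}(p)\geq p^2-\mu$, so $\cosh^{-2}\!\big(E/(2T)\big)\leq\cosh^{-2}\!\big((p^2-\mu)/(2T)\big)$, which is integrable against $|p|^b$; on the compact set $\{p^2<\mu\}$ the integrand is trivially at most $1/4$. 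This makes $\int |p|^b\cosh^{-2}(E/2T)\,\ddd{3}p$ uniformly bounded, removes the circularity, and closes your proof. This one-sided estimate is exactly what the paper uses for its Fermi-factor piece $(1+\ee^{E/T})^{-1}$; you need it for your $\cosh^{-2}$ piece instead.
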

\begin{proof}
  To simplify notation, we leave out the index $\ell$.  A minimizer
  $(\gamma,\alpha)$ of $\mathcal{F}_T^{V}$ satisfies the
  Euler-Lagrange equation \eqref{eq:el_gamma}. Using the abbreviation
  \begin{equation*}
    K_{T,\mu}^{\gamma,\Delta} =
    \frac{E_\mu^{\gamma,\Delta}(p)}{\tanh\big(\frac{E_\mu^{\gamma,\Delta}(p)}{2T}\big)},
  \end{equation*}
  we may express \eqref{eq:el_gamma} in the form
  \begin{equation*}
    \hat{\gamma} = \frac{1}{2} - \frac{1}{2}
    \frac{\varepsilon^{\gamma}-\tilde{\mu}^\gamma}{K_{T,\mu}^{\gamma,\Delta}}.
  \end{equation*}
  Adding and subtracting
  $\frac{1}{2}\frac{E_\mu^{\gamma,\Delta}}{K_{T,\mu}^{\gamma,\Delta}}
  = \frac{1}{2}\tanh\big(\frac{E_\mu^{\gamma,\Delta}(p)}{2T}\big)$, we
  may write
  \begin{align}
    \label{eq:el_gamma:2}
    \hat{\gamma} &= \frac{1}{2}\bigg(1 -
    \tanh\Big(\frac{E_\mu^{\gamma,\Delta}}{2T}\Big)\bigg) +
    \frac{1}{2}
    \frac{E_\mu^{\gamma,\Delta}-(\varepsilon^{\gamma}-\tilde{\mu}^\gamma)}{K_{T,\mu}^{\gamma,\Delta}}\\
    \nonumber &= \frac{1}{1 + \ee^{\frac{E_\mu^{\gamma,\Delta}}{T}}}
    +\frac{1}{2}
    \frac{|\Delta|^2}{\big(E_\mu^{\gamma,\Delta}+(\varepsilon^{\gamma}-\tilde{\mu}^\gamma)\big)K_{T,\mu}^{\gamma,\Delta}}.
  \end{align}
  Using the Euler-Lagrange equation $\Delta =
  2K_{T,\mu}^{\gamma,\Delta} \hat{\alpha}$ for $\alpha$, we obtain
  \begin{equation}
    \label{eq:el_gamma:3}
    \begin{split}
      \hat{\gamma} = \frac{1}{1 +
        \ee^{\frac{E_\mu^{\gamma,\Delta}}{T}}} +2
      \frac{|\hat{\alpha}|^2K_{T,\mu}^{\gamma,\Delta}}{E_\mu^{\gamma,\Delta}+(\varepsilon^{\gamma}-\tilde{\mu}^\gamma)}.
    \end{split}
  \end{equation}
  Assumption \ref{ax:3} implies that $\varepsilon^\gamma -
  \tilde{\mu}^\gamma \geq p^2-\mu$.  In particular, the contribution of the first term is bounded by 
  \begin{equation*}
    \int_{\mathbb{R}^3} \frac{1}{1 +
      \ee^{\frac{E_\mu^{\gamma,\Delta}}{T}}} |p|^b \ddd{3}p
    \leq
    \int_{\mathbb{R}^3} \frac{1}{1 +
      \ee^{\frac{p^2-\mu}{T}}} |p|^b \ddd{3}p
  \end{equation*}
  which is independent of $\ell$.  To treat the second term, we split
  the domain of integration $\mathbb{R}^3$ into two disjoint sets and
  show that the integral is uniformly bounded on each
  subset.  On the set $B= \{p|\tanh(\frac{p^2-\mu}{2T}) \geq
  \frac{2}{3}\}$ we have that $\tanh(\frac{\varepsilon^\gamma -
    \tilde{\mu}^\gamma}{2T}) \geq \frac{2}{3}$ and $\varepsilon^\gamma
  - \tilde{\mu}^\gamma \geq 0$. This implies that
  \begin{equation*}
    \frac{|\hat{\alpha}|^2
      K_{T,\mu}^{\gamma,\Delta}}{E_\mu^{\gamma,\Delta}+(\varepsilon^{\gamma}-\tilde{\mu}^\gamma)}
    \leq
    \frac{3}{2}|\hat{\alpha}|^2,
  \end{equation*}
  whose integral over $B$ is bounded uniformly in $\ell$ by
  \eqref{eq:F_T_bound}, even after multiplication by $|p|^b$.  The
  complement $B^c=\{p|\tanh(\frac{p^2-\mu}{2T}) \leq \frac{2}{3}\}$ of
  $B$ is compact and thus also
  $\int_{B^c}\hat{\gamma}(p)|p|^b\ddd{3}p$ is trivially bounded,
  because $0\leq \hat{\gamma} \leq 1$.
\end{proof}

In the following lemma we show that, as $\ell\to 0$, pointwise limits
for the main quantities exist.  In the case of $\Delta_\ell$, observe that
$\check{\Delta}_\ell(x) =2 V_\ell(x) \alpha_\ell(x)$ is supported in
$|x| \leq \ell$. Heuristically, if the norm
$\|\check{\Delta}_\ell\|_1$ stays finite, $\check{\Delta}_\ell$ should
converge to a $\delta$ distribution and its Fourier transform
$\Delta_\ell$ to a constant function. While we do not show that
$\|\check{\Delta}_\ell\|_1$ stays finite, we can use assumption
\ref{ax:L2} to at least show that it cannot increase too fast as $\ell
\to 0$, which will turn out to be sufficient.  The pointwise
convergence $\gamma_\ell(p) \rightarrow \gamma(p)$ then follows from
Lemma~\ref{lemma:gamma_bound} together with the Euler-Lagrange
equation \eqref{eq:el_gamma} for $\gamma_\ell$.

In the following, we use the definition
\begin{equation*}
  m_\mu^{\gamma_\ell,\Delta_\ell}(T) =
  \frac{1}{(2\pi)^3}\int_{\mathbb{R}^3}\left(\frac{1}{K_{T,\mu}^{\gamma_\ell,\Delta_\ell}}
  -\frac{1}{p^2}\right) \ddd{3}p\,.
\end{equation*}

\begin{lemma}
  \label{lemma:convergence}
  Let $(\gamma_\ell,\alpha_\ell)$ be a sequence of minimizers of
  $\mathcal{F}_T^{V_\ell}$ and $\Delta_\ell = 2 (2\pi)^{-3/2} \hat{V}_\ell *
  \hat{\alpha}_\ell$. Then there are subsequences of $\gamma_\ell$ and
  $\alpha_\ell$, which we continue to denote by $\gamma_\ell$ and 
  $\alpha_\ell$, and $\gamma \in L^1(\mathbb{R}^3)\cap L^\infty(\R^3)$,
  $\Delta \in \mathbb{R}_+$ such that
  \begin{enumerate}[label=(\roman*)]
  \item $|\Delta_\ell(p)|$ converges pointwise to the constant
    function $\Delta$ as $\ell\to 0$,

  \item $\displaystyle \lim_{\ell\to 0}\int_{\mathbb{R}^3}
    \hat{\gamma}_\ell \ddd{3}p = \int_{\mathbb{R}^3} \hat{\gamma}
    \ddd{3}p$,
  \item $\displaystyle \lim_{\ell\to 0} \tilde{\mu}^{\gamma_\ell} =
    \tilde{\mu}^{\gamma}$, where $\tilde{\mu}^\gamma = \mu
    -2(2\pi)^{-3/2}\mathcal{V} \int_{\mathbb{R}^3} \hat{\gamma}(p)
    \ddd{3}p$,
  \item $\varepsilon^{\gamma_\ell}(p) \to p^2$ pointwise as $\ell\to
    0$,
 \item $\hat{\gamma}_\ell(p) \rightarrow \hat{\gamma}(p)$ pointwise as $\ell\to
    0$, and Eq.~\eqref{eq:sat} 
is satisfied for  $(\gamma, \tilde\mu^\gamma, \Delta)$,
  \item $\displaystyle\lim_{\ell\to 0}
    m^{\gamma_\ell,\Delta_\ell}_\mu(T) = m^{\gamma,\Delta}_\mu(T) =
    m^{0,\Delta}_{\tilde{\mu}^\gamma}(T)$. \label{lemma:convergence:m}
  \end{enumerate}
\end{lemma}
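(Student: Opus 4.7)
The overall strategy is to first extract convergent subsequences from the uniform bounds of Lemmas~\ref{lemma:minimizer} and~\ref{lemma:gamma_bound}, then establish (i) via a Birman--Schwinger analysis, and deduce (ii)--(vi) from (i) together with the Euler--Lagrange equations and dominated convergence. The weight $(1+|p|^b)$ provided by Lemma~\ref{lemma:gamma_bound} gives tightness for $\hat\gamma_\ell$, while Lemma~\ref{lemma:minimizer} combined with $|\hat\alpha_\ell|\le 1/2$ provides a uniform $L^2$ bound for $\hat\alpha_\ell$; weak-$*$ compactness then yields a subsequence (still denoted $\ell$) and candidate limits.

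The central step is (i). Since $\check\Delta_\ell(x)=2V_\ell(x)\alpha_\ell(x)$ is supported in $B_\ell(0)$, the bound
\[
 |\Delta_\ell(p)-\Delta_\ell(0)|\le (2\pi)^{-3/2}|p|\,\ell\,\|\check\Delta_\ell\|_1
\]
reduces the task to proving that $\|\check\Delta_\ell\|_1$ grows no faster than $o(\ell^{-1})$ and that $|\Delta_\ell(0)|$ converges along a subsequence. To do this I would pass to the Birman--Schwinger form of the gap equation: setting $\psi_\ell:=V_\ell^{1/2}\alpha_\ell$, the identity $(K_{T,\mu}^{\gamma_\ell,\Delta_\ell}+V_\ell)\hat\alpha_\ell=0$ is equivalent to
\[
\bigl(1+V_\ell^{1/2}(K_{T,\mu}^{\gamma_\ell,\Delta_\ell})^{-1}|V_\ell|^{1/2}\bigr)\psi_\ell=0.
\]
A perturbative comparison with the operator $1+V_\ell^{1/2}p^{-2}|V_\ell|^{1/2}$ from assumption~\ref{ax:A}, exploiting that $K_{T,\mu}^{\gamma_\ell,\Delta_\ell}\ge |p^2-\mu|$ and $K_{T,\mu}^{\gamma_\ell,\Delta_\ell}\sim p^2$ at high momenta, would show that $\psi_\ell$ lies close to the one-dimensional near-kernel spanned by $\phi_\ell$. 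Assumption~\ref{ax:n} then controls $\|\check\Delta_\ell\|_1=2\|V_\ell^{1/2}\psi_\ell\|_1$ at a rate compatible with the support factor $\ell$, and the scattering-length identity \eqref{eq:a}, rewritten in terms of $\phi_\ell$, pins down $|\Delta_\ell(0)|$ up to a subsequence, yielding a limit $\Delta\ge 0$.

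For (iv) I would exploit $\supp V_\ell\subseteq B_\ell(0)$ and $\|V_\ell\|_1=O(1)$ to obtain $|\hat V_\ell(p-q)-\hat V_\ell(0)|\le C\ell(|p|+|q|)$, then split the convolution integral at $|q|=R$, absorbing the tail via $\int_{|q|>R}\hat\gamma_\ell\le R^{-b}\int|q|^b\hat\gamma_\ell\,\ddd{3}q$ from Lemma~\ref{lemma:gamma_bound}. Part (v) then follows by inserting (i), (iii), (iv) into the Euler--Lagrange equation \eqref{eq:el_gamma} and using continuity of $K_{T,\mu}^{\gamma,\Delta}$ in its arguments to conclude pointwise convergence $K_{T,\mu}^{\gamma_\ell,\Delta_\ell}\to K_{T,\tilde\mu}^{0,\Delta}$. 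Statements (ii) and (iii) follow from (v) by dominated convergence, the $|p|^b$-weight supplying tightness. Finally (vi) follows by dominated convergence applied to $1/K_{T,\mu}^{\gamma_\ell,\Delta_\ell}-1/p^2$, with a uniform integrable majorant built from the universal lower bounds $K\ge|p^2-\mu|$ and $K\ge|\Delta|$ (which handle the high-momentum tail $\sim\mu/p^4$) together with the integrable behavior of $1/p^2$ in three dimensions near the origin.

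The main obstacle is the Birman--Schwinger step in (i). The resolvent $(K_{T,\mu}^{\gamma_\ell,\Delta_\ell})^{-1}$ differs substantially from $p^{-2}$ at low momenta when $\Delta_\ell\not\to 0$, so the perturbation argument must rely essentially on the uniform invertibility of $(1+V_\ell^{1/2}p^{-2}|V_\ell|^{1/2})^{-1}(1-P_\ell)$ in \ref{ax:A} and on the smallness in \ref{ax:n}; these are precisely the conditions that force $\psi_\ell$ to align with $\phi_\ell$ in the limit and pin down $\Delta$.
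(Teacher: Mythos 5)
Your proposal departs from the paper's argument at the key step (i), and the route you choose runs into two genuine problems.

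First, for (i) you reduce the claim to showing $\|\check\Delta_\ell\|_1 = o(\ell^{-1})$, so that a first-order Taylor bound controls $\Delta_\ell(p)-\Delta_\ell(0)$. But the only size information available on $\check\Delta_\ell$ is \eqref{eq:Delta_bound}, namely $\|\check\Delta_\ell\|_1 \le 2\|V_\ell\|_2\|\alpha_\ell\|_2 \le C\ell^{-N}$, where Assumption~\ref{ax:L2} allows $N$ to be an arbitrary (fixed) integer, and nothing in the hypotheses gives you $o(\ell^{-1})$. The paper handles precisely this by expanding $\Delta_\ell$ to order $N$ around $p=0$: the Taylor remainder then scales as $\ell^{N+1}\|\check\Delta_\ell\|_1|p|^{N+1}e^{\ell|p|}=O(\ell)$ pointwise, and the $L^2$-boundedness of $\hat\alpha_\ell = \Delta_\ell/(2K_{T,\mu}^{\gamma_\ell,\Delta_\ell})$ forces the limiting Taylor polynomial to be constant and its coefficient to be finite. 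Your Birman--Schwinger detour would moreover need the uniform bound on $\int |1/K_{T,\mu}^{\gamma_\ell,\Delta_\ell}-1/p^2|\,|p|^q\,\ddd{3}p$ that the paper only establishes inside part (vi) of this very lemma (and is then used in the proof of Theorem~\ref{thm:gap_eff}); using it to prove (i) creates a circularity that you would have to untangle carefully, and the paper deliberately avoids this by proving (i) with an elementary Taylor/compactness argument.

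Second, your chain of deductions is circular among (ii), (iii) and (v): you propose to establish (v) by inserting (iii) into \eqref{eq:el_gamma}, and then to derive (ii) and (iii) from (v). The paper instead obtains (ii) directly from the uniform $L^2$ bound on $\hat\gamma_\ell$ (Lemma~\ref{lemma:minimizer}) together with the $|p|^b$-tightness of Lemma~\ref{lemma:gamma_bound}, which precludes mass escape to infinity; (iii) then follows from (ii) and \ref{ax:positivity}, and only afterwards is (v) read off from the Euler--Lagrange equation. Finally, in (vi), a naive majorant of $|1/K-1/p^2|$ is not readily available uniformly in $\ell$ (particularly as $T\to 0$ with $\Delta=0$): the paper rewrites the integrand via the identity \eqref{eq:K-p}, isolating the term $-2\hat\gamma_\ell/p^2$ which does not admit an obvious $L^1$ majorant uniform in $\ell$ and whose integral is instead controlled by the same no-mass-escape argument as in (ii). Your sketch skips this term, which is in fact the nontrivial part of (vi).

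Your treatment of (iv) is essentially the paper's: the bound $|e^{it}-1|\le|t|^b$, the support condition, and Lemma~\ref{lemma:gamma_bound} suffice, and that part of your proposal is sound.
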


We shall see later that it is not necessary to restrict to a subsequence, the result holds in fact for the whole sequence. 

\begin{proof}
  (i) \label{lemma:convergence:Delta} 
    Lemma~\ref{lemma:minimizer} and Assumption \ref{ax:L2} imply that, 
    with $\check{\Delta}_\ell = 2 V_\ell \alpha_\ell$,
    \begin{equation}
      \label{eq:Delta_bound}
      \|\check{\Delta}_\ell\|_1 \leq 2 \|V_\ell\|_2\|\alpha_\ell\|_2 \leq C \ell^{-N}\,.
    \end{equation}
    The fact that $\check{\Delta}_\ell$ is compactly supported in
    $B_\ell(0)$ will allow us to argue that a suitable subsequence of
    $\Delta_\ell(p)$ converges to a polynomial in $p$. Furthermore,
    the fact that $\hat{\alpha}_\ell = -2
    (K_{T,\mu}^{\gamma_\ell,\Delta_\ell})^{-1} \Delta_\ell$ is
    uniformly bounded in $L^2$ forces the polynomial to be a constant.

    We denote by 
    \begin{equation*}
      P_{\ell,N}(p) = \frac{1}{(2\pi)^{3/2}}
      \sum_{j=0}^N
      \frac{(-i)^j}{j!} \sum_{i_1,\dotsc,i_j =1}^3
      c^{(\ell,j)}_{i_1,\dotsc,i_j} p_{i_1}\cdots p_{i_j}
    \end{equation*}
    the $N$-th order Taylor approximation of $\Delta_\ell(p) = (2\pi)^{-3/2} 
    \int_{\mathbb{R}^3} \check{\Delta}_\ell(x) \ee^{-i p \cdot x
    }\ddd{3}x$ at $p=0$, with coefficients 
    given by
    \begin{equation*}
      c^{(\ell,j)}_{i_1,\dotsc,i_j} = \int_{\mathbb{R}^3} \check{\Delta}_\ell(x)
      x_{i_1}\cdots x_{i_j} \ddd{3}x.
    \end{equation*}
    Using that $\check{\Delta}_\ell$ is supported in $B_\ell(0)$ we may
    estimate the remainder term as 
    \begin{equation*}
      | \Delta_\ell(p) - P_{\ell,N}(p)|
      =\frac{1}{(2\pi)^{3/2}}\left|\int_{\mathbb{R}^3}\check{\Delta}_\ell(x)\left(e^{-ip\cdot
        x}-\sum_{j=0}^N \frac{(-ip\cdot x)^j}{j!}\right) \ddd{3}x \right|
      \leq \frac{\ell^{N+1}}{(2\pi)^{3/2}} 
      \|\check{\Delta}_\ell\|_1  |p|^{N+1} \ee^{\ell |p|},
    \end{equation*}
    which goes to zero pointwise for $\ell\to 0$ by
    \eqref{eq:Delta_bound}.

    Now let $\bar{c}_\ell = \max_{0\leq j \leq N}\max_{1\leq
      i_1,\dotsc,i_j \leq 3} \{ |c^{(\ell,j)}_{i_1,\dotsc,i_j}| \}$.
    We want to show 
that $\bar{c} =
    \limsup_{\ell\to 0}\bar{c}_\ell < \infty$.  If $\bar{c} = 0$, we
    are done. If not, then there is a subsequence of
    $P_{\ell,N}(p)/\bar{c}_\ell$ which converges pointwise to some
    polynomial $P(p)$ of degree $n \leq N$. 
    We now use the uniform boundedness of $2\|\alpha_\ell\|_2 =
    \left\|\frac{\Delta_\ell}{K_{T,\mu}^{\gamma_\ell,\Delta_\ell}}\right\|_2$
    to conclude that $P(p)$ cannot be a polynomial of degree $n\geq 1$, 
    and that $\bar{c}$ is finite.  We first rewrite
    $2\hat{\alpha}_\ell$ as
    \begin{equation*}
      \frac{\Delta_\ell}{K_{T,\mu}^{\gamma_\ell,\Delta_\ell}}
      =
      \frac{\Delta_\ell}{E_{\mu}^{\gamma_\ell,\Delta_\ell}}\tanh\big(E_\mu^{\gamma_\ell,\Delta_\ell}/(2T)\big)
      = \frac{\Delta_\ell}{E_{\mu}^{\gamma_\ell,\Delta_\ell}}-
      \frac{\Delta_\ell}{E_\mu^{\gamma_\ell,\Delta_\ell}}
      \frac{2}{1+ \exp(E_\mu^{\gamma_\ell,\Delta_\ell}/T)}.
    \end{equation*}
    Using $E_\mu^{\gamma_\ell,\Delta_\ell} \geq
    \varepsilon^{\gamma}-\tilde{\mu}^\gamma \geq p^2-\mu$ and
    $|\Delta_\ell| \leq E_\mu^{\gamma_\ell,\Delta_\ell}$,
    it is easy to see that the
    $L^2$ norm of the second summand on the right side is
    uniformly bounded in $\ell$. Furthermore, by assumption~\ref{ax:3} we have 
    $\varepsilon^{\gamma}-\tilde{\mu}^\gamma \leq p^2 +\nu$ for
    \begin{equation*}
      \nu = -\mu + \frac{6}{(2\pi)^{3/2}}
      \sup_{\ell> 0}\hat{V}_\ell(0) \|\hat{\gamma}_\ell \|_1,
    \end{equation*}
which is finite due to assumption~\ref{ax:positivity}  and Lemma~\ref{lemma:gamma_bound}. 
   In particular, 
    \begin{equation*}
      E_\mu^{\gamma_\ell,\Delta_\ell} \leq
      \sqrt{(p^2+\nu)^2+|\Delta_\ell|^2}.
    \end{equation*}
Recall that $\Delta_\ell(p)/\bar c_\ell$ converges pointwise to $P(p)$, and that $\bar c = \limsup_{\ell\to 0} \bar c_\ell$.  Assume, for the moment, that $\bar c < \infty$. Then, by dominated convergence,
    \begin{equation}
      \label{eq:lim_P}
      \limsup_{\ell\to 0}\int_{|p|\leq R}
      \frac{|\Delta_\ell|^2}{(p^2+\nu)^2+|\Delta_\ell|^2} \ddd{3}p
      = \int_{|p|\leq R}
      \frac{|\bar{c} P(p)|^2}{(p^2+\nu)^2+|\bar{c}P(p)|^2} \ddd{3}p
    \end{equation}
for any $R>0$. If $\bar c = \infty$, the same holds, with the integrand replaced by $1$. 
    In particular, if either $\bar c = \infty$ or $P$ is a polynomial of degree $n\geq 1$, the right side of (\ref{eq:lim_P}) diverges as $R\to \infty$, contradicting the uniform boundedness of $\Delta_\ell/E_\mu^{\gamma_\ell,\Delta_\ell}$  in $L^2(\R^3)$. We thus conclude that $n=0$ and $\bar{c} < \infty$, i.e., $\lim_{\ell\to 0} \Delta_\ell(p) = \bar c$ for a suitable subsequence. 

  (ii) The \label{lemma:convergence:int_gamma} uniform bound
    \eqref{eq:F_T_bound} for $\mathcal{F}_T^{V_\ell}$ implies that 
    $\hat{\gamma}_\ell$ is uniformly bounded in $L^2$. Thus, there
    is a subsequence which converges weakly to some $\hat{\gamma}$ in
    $L^2$. For that subsequence, we have for arbitrary $R>0$
    \begin{equation}
      \label{eq:convergence:Delta}
      \lim_{\ell\to 0}\int_{B_R(0)}\hat{\gamma}_\ell \ddd{3}p =
      \int_{B_R(0)}\hat{\gamma} \ddd{3}p.
    \end{equation}
   In particular, 
    \begin{equation*}
      \lim_{\ell\to 0} \int_{\mathbb{R}^3} \hat{\gamma}_\ell \ddd{3}p
      \geq \int_{\mathbb{R}^3} \hat{\gamma} \ddd{3}p.
    \end{equation*}
    Therefore, $\lim_{\ell\to 0} \int_{\mathbb{R}^3} \hat{\gamma}_\ell
    \ddd{3}p = \int_{\mathbb{R}^3} \hat{\gamma} \ddd{3}p + \delta$
    for an appropriate $\delta \geq 0$.  Then
    \begin{equation*}
      \lim_{\ell\to 0} \int_{|p| \geq R}
      \hat{\gamma}_\ell(p)|p|^b \ddd{3}p
      \geq
      R^b \lim_{\ell\to 0} \int_{|p| \geq R}
      \hat{\gamma}_\ell \ddd{3}p
      =
      R^b \lim_{\ell\to 0} \left[
        \int_{\mathbb{R}^3}
        \hat{\gamma}_\ell \ddd{3}p
        -
        \int_{|p| \leq R}
        \hat{\gamma}_\ell \ddd{3}p
      \right]
      \geq \delta R^b.
    \end{equation*}
    Since $R$ can be arbitrarily large and the left side is bounded, $\delta$ has to be $0$.

  (iii) This follows immediately from
    part (ii) together with assumption \ref{ax:positivity}. 
    \label{lemma:convergence:mu}
  
(iv)  Let \label{lemma:convergence:epsilon} $D_\ell(p) =
    \varepsilon^{\gamma_\ell}(p) - p^2$. We compute
    \begin{align}\nonumber
      |D_\ell(p)| &= 2 (2\pi)^{-3/2} |(\hat{V}_\ell-\hat{V}_\ell(0)) *
      \hat{\gamma}_\ell|
      \\ \nonumber 
      &\leq \frac 2 {(2\pi)^3} \int_{\mathbb{R}^3} \ddd{3}k \int_{\mathbb{R}^3}
      \ddd{3}x \left| V_\ell(x)(\ee^{-i (p-k)\cdot x } -1)
      \right| \hat{\gamma}_\ell(k) \\ \nonumber
      & \leq \frac{2 \|V_\ell\|_1}{(2\pi)^{3}} \int_{\mathbb{R}^3}
      \hat{\gamma}_\ell(k) \sup_{|x|\leq \ell}|\ee^{-i (p-k) \cdot x} -1| \ddd{3}k \\
      &\leq \frac{2 \|V_\ell\|_1}{(2\pi)^3} \ell^b\bigl(\bigl\|\hat{\gamma}_\ell
      |\cdot|^b\bigr\|_1 + \|\hat{\gamma}_\ell \|_1 |p|^b\bigr), \label{est:iv}
    \end{align}
    where we applied the fact that $|e^{it}-1|\leq |t|^b$ for
    $t\in\mathbb{R}$ and $0\leq b \leq 1$, as well as $|p-k|^b\leq
    |p|^b + |k|^b$. By Lemma~\ref{lemma:gamma_bound}, $\bigl\|\hat{\gamma}_\ell
    |\cdot|^b\bigr\|_1$ is uniformly bounded in $\ell$, hence this
    concludes the proof.

  (v)  \label{lemma:convergence:gamma} Recall the
    Euler-Lagrange equation \eqref{eq:el_gamma} for
    $\hat{\gamma}_\ell$, which states that
    \begin{equation*}
      \hat{\gamma}_\ell = \frac{1}{2} - \frac{\varepsilon^{\gamma_\ell}(p) -
        \tilde{\mu}^{\gamma_\ell}}{K_{T,\mu}^{\gamma_\ell,\Delta_\ell}(p)}.
    \end{equation*}
    We have just shown that the right side converges
    pointwise to
    \begin{equation}\label{v}
      \tilde{\gamma}(p) = \frac{1}{2} - \frac{p^2 -
        \tilde{\mu}^{\gamma}}{K_{T,\tilde\mu^\gamma}^{0,\Delta}(p)}.
    \end{equation}
    Since $\hat{\gamma}$ is the weak limit of $\hat{\gamma}_\ell$, it
    has to agree with the pointwise limit $\tilde{\gamma}$, i.e.,
    $\hat{\gamma}= \tilde{\gamma}$ almost everywhere.
       Therefore $\gamma$ satisfies Eq.~\eqref{eq:sat}.
  
(vi) We have already shown that the integrand converges pointwise. We want to use dominated convergence to show that also the integrals converge. For this purpose, we 
 rewrite the integrand in
    $m^{\gamma_\ell,\Delta_\ell}_\mu(T)$ in terms of $\gamma_\ell$.  With  $\xi(x) = \frac{x}{\ee^x-1}$, we have 
    \begin{equation*}
      \frac{1}{K_{T,\mu}^{\gamma_\ell,\Delta_\ell}} - \frac{1}{p^2} =
      \frac{p^2-K_{T,\mu}^{\gamma_\ell,\Delta_\ell}}{K_{T,\mu}^{\gamma_\ell,\Delta_\ell}
        p^2} = \frac{\varepsilon^{\gamma_\ell} -
        \tilde{\mu}^{\gamma_\ell}-E_\mu^{\gamma_\ell,\Delta_\ell}}{K_{T,\mu}^{\gamma_\ell,\Delta_\ell}
        p^2} + \frac{p^2 -
        (\varepsilon^{\gamma_\ell} -
        \tilde{\mu}^{\gamma_\ell})-2T \xi(\frac{E_\mu^{\gamma_\ell,\Delta_\ell}}{T}) }{K_{T,\mu}^{\gamma_\ell,\Delta_\ell} p^2}\,.
    \end{equation*}
   By comparing the first
    summand with the right side of \eqref{eq:el_gamma:2}, i.e.,
    \begin{equation*}
      \hat{\gamma}_\ell
      = \frac{1}{1 + \ee^{\frac{E_\mu^{\gamma_\ell,\Delta_\ell}}{T}}} +\frac{1}{2}
      \frac{E_\mu^{\gamma_\ell,\Delta_\ell}-(\varepsilon^{\gamma_\ell} -
        \tilde{\mu}^{\gamma_\ell})}
      {K_{T,\mu}^{\gamma_\ell,\Delta_\ell}},
    \end{equation*}
    we see that
    \begin{equation}
      \label{eq:K-p}
      \frac{1}{K_{T,\mu}^{\gamma_\ell,\Delta_\ell}} - \frac{1}{p^2}
      = -2\frac{\hat{\gamma_\ell}}{p^2}
      + \frac{2}{p^2}\frac{1}{1 + \ee^{\frac{E_\mu^{\gamma_\ell,\Delta_\ell}}{T}}}
      + \frac{p^2 -
        (\varepsilon^{\gamma_\ell} -
        \tilde{\mu}^{\gamma_\ell})-2T \xi(\frac{E_\mu^{\gamma_\ell,\Delta_\ell}}{T}) }{K_{T,\mu}^{\gamma_\ell,\Delta_\ell} p^2}\,.
    \end{equation}
    We can now argue as above to show that, by dominated convergence, the
    integrals of all summands on the right side except for $-2
    \frac{\hat{\gamma}_\ell}{p^2}$ converge to their corresponding
    expressions with $\gamma_\ell$ replaced by its limit $\gamma$ and
    $\Delta_\ell$ replaced by $\Delta$.  Indeed, assumption
    \ref{ax:3} implies $\varepsilon^{\gamma_\ell} -
    \tilde{\mu}^{\gamma_\ell} \geq p^2 -\mu$ and thus
    \begin{equation*}
      E_\mu^{\gamma_\ell,\Delta_\ell}
      \geq |\varepsilon^{\gamma_\ell} - \tilde{\mu}^{\gamma_\ell}|
      \geq \varepsilon^{\gamma_\ell} - \tilde{\mu}^{\gamma_\ell}
      \geq p^2 -\mu.
    \end{equation*}
    For this reason,
    \begin{equation*}
      \frac{2}{p^2}\frac{1}{1 + \ee^{\frac{E_\mu^{\gamma,\Delta}}{T}}}
      \leq \frac{2}{p^2}\frac{1}{1 + \ee^{\frac{p^2-\mu}{T}}}.
    \end{equation*}
    Moreover, the function
    \begin{equation}\label{def:kappac}
      \kappa_c(x) =
      \begin{cases}
        \frac{x}{\tanh(x)},& x\geq 0\\ 1,& x\leq 0
      \end{cases}
    \end{equation}
    is monotone increasing, so
    \begin{equation*}
      K_{T,\mu}^{\gamma_\ell,\Delta_\ell}
      = 2T \kappa_c\Big(\frac{E_\mu^{\gamma_\ell,\Delta_\ell}}{2T}\Big)
      \geq 2T \kappa_c\Big(\frac{p^2-\mu}{2T}\Big).
    \end{equation*}
    Together with $\xi(x)\leq 1$ for $x\geq 0$ and the bound (\ref{est:iv}) on $|p^2-\varepsilon^{\gamma_\ell}(p)|$, this implies the statement. 

Finally, we can argue as in (ii) above to conclude that  $\lim_{\ell\to 0}\int_{\mathbb{R}^3}
    \frac{\hat{\gamma}_\ell}{p^2} \ddd{3}p = \int_{\mathbb{R}^3}
    \frac{\hat{\gamma}}{p^2} \ddd{3}p$, and hence obtain the desired result
    \begin{equation*}
      \lim_{\ell\to 0} \int_{\mathbb{R}^3} \left(\frac{1}{K_{T,\mu}^{\gamma_\ell,\Delta_\ell}}
        - \frac{1}{p^2}\right) \ddd{3} p
      = \int_{\mathbb{R}^3} \left(\frac{1}{K_{T,\mu}^{\gamma,\Delta}}
        - \frac{1}{p^2}\right) \ddd{3} p,
    \end{equation*}
    where we used that the limit
    $\gamma$ also satisfies a suitable Euler-Lagrange equation, as shown in (v), and hence satisfies an identity as in
    \eqref{eq:K-p} as well.  
\end{proof}

With the aid of Lemma~\ref{lemma:convergence}, we can now give the 

\begin{proof}[Proof of Theorem \ref{thm:gap_eff}]
  The convergence of $|\Delta_\ell(p)|$, $\tilde\mu^{\gamma_\ell}$ and
  $\hat\gamma_\ell(p)$ follows immediately from
  Lemma~\ref{lemma:convergence}, at least for a suitable
  subsequence. To prove the validity of \eqref{eq:gap_eff}, we follow
  a similar strategy as in \cite[Lemma 1]{HS-mu}.  From Theorem
  \ref{thm:stability} we know that
  \begin{equation*}
    (K_{T,\mu}^{\gamma_\ell,\Delta_\ell} + V_\ell) {\alpha}_\ell =
    0, \quad
    \textrm{ with } {\alpha}_\ell \in H^1(\mathbb{R}^3)\,,
  \end{equation*}
  and we assume that $\alpha_\ell$ is not identically zero. According  to the Birman-Schwinger principle,
  $K_{T,\mu}^{\gamma_\ell,\Delta_\ell} + V_\ell$ has $0$ as eigenvalue
  if and only if
  \begin{equation*}
    V_\ell^{1/2}\frac{1}{K_{T,\mu}^{\gamma_\ell,\Delta_\ell}} |V_\ell|^{1/2}
  \end{equation*}
  has $-1$ as an eigenvalue.
  
  We decompose
  $V_\ell^{1/2}\frac{1}{K_{T,\mu}^{\gamma_\ell,\Delta_\ell}}
  |V_\ell|^{1/2}$ as
  \begin{align*}
    V_\ell^{1/2} \frac{1}{K_{T,\mu}^{\gamma_\ell,\Delta_\ell}}
    |V_\ell|^{1/2} = V_\ell^{1/2} \frac{1}{p^2} |V_\ell|^{1/2} +
    m_\mu^{\gamma_\ell,\Delta_\ell}(T) |V_\ell^{1/2}\rangle \langle
    |V_\ell|^{1/2}| + A_{\mu,T,\ell},
  \end{align*}
  where
  \begin{align*}
    A_{\mu,T,\ell} &= V_\ell^{1/2}
    \Big(\frac{1}{K_{T,\mu}^{\gamma_\ell,\Delta_\ell}} -
    \frac{1}{p^2}\Big) |V_\ell|^{1/2} -
    m_\mu^{\gamma_\ell,\Delta_\ell}(T) |V_\ell^{1/2}\rangle \langle
    |V_\ell|^{1/2}|.
  \end{align*}
  
  By assumption \ref{ax:A}, $1+V_\ell^{1/2} \frac{1}{p^2}
  |V_\ell|^{1/2}$ is invertible. Hence we can write 
  \begin{align*}
    1 + V_\ell^{1/2}\frac{1}{K_{T,\mu}^{\gamma_\ell,\Delta_\ell}}
    |V_\ell|^{1/2}
    = &\left(1+V_\ell^{1/2} \frac{1}{p^2} |V_\ell|^{1/2} \right) \times\\
    &\times\left(1+\frac{1}{1+ V_\ell^{1/2} \frac{1}{p^2}
        |V_\ell|^{1/2}}
      \left(m_\mu^{\gamma_\ell,\Delta_\ell}(T)|V_\ell^{1/2}\rangle
        \langle |V_\ell|^{1/2}| + A_{T,\mu,\ell}\right)\right)\,,
  \end{align*}
  and conclude that the operator
  \begin{equation*}
    \frac{1}{1+ V_\ell^{1/2} \frac{1}{p^2}
      |V_\ell|^{1/2}}\left(m_\mu^{\gamma_\ell,\Delta_\ell}(T)
      |V_\ell^{1/2}\rangle \langle |V_\ell|^{1/2}| +
      A_{T,\mu,\ell}\right)
  \end{equation*}
  has an eigenvalue $-1$.

  We are going to show below that
  \begin{equation}
    \label{eq:A:lim}
    \lim_{\ell\to 0} \left\| \frac{1}{1+ V_\ell^{1/2}
        \frac{1}{p^2}|V_\ell|^{1/2}} A_{\mu,T,\ell}\right\| = 0.
  \end{equation}
As a consequence, $1+(1+V_\ell^{1/2}
        \frac{1}{p^2}|V_\ell|^{1/2})^{-1} A_{\mu,T,\ell}$ is invertible for small $\ell$, and we can argue as above to conclude that the rank one operator
$$
m_\mu^{\gamma_\ell,\Delta_\ell}(T) \left(1+  \frac{1}{1+ V_\ell^{1/2}
        \frac{1}{p^2}|V_\ell|^{1/2}} A_{\mu,T,\ell}\right)^{-1}   \frac{1}{1+ V_\ell^{1/2} \frac{1}{p^2}
      |V_\ell|^{1/2}}|V_\ell^{1/2}\rangle \langle |V_\ell|^{1/2}|
$$
has an eigenvalue $-1$, i.e.,
\begin{equation}\label{trw}
-1 = m_\mu^{\gamma_\ell,\Delta_\ell}(T)\left\langle |V_\ell|^{1/2}\left| \left(1+  \frac{1}{1+ V_\ell^{1/2}
        \frac{1}{p^2}|V_\ell|^{1/2}} A_{\mu,T,\ell}\right)^{-1}   \frac{1}{1+ V_\ell^{1/2} \frac{1}{p^2}
      |V_\ell|^{1/2}}\right|V_\ell^{1/2}\right\rangle \,.
\end{equation}
With the aid of \eqref{eq:a} and the resolvent identity, we can 
rewrite \eqref{trw} as 
\begin{align}\nonumber
& 4\pi a(V_\ell) +  \frac 1 { m_\mu^{\gamma_\ell,\Delta_\ell}(T)} \\ &= \left\langle |V_\ell|^{1/2}\left|  \frac{1}{1+ V_\ell^{1/2}
        \frac{1}{p^2}|V_\ell|^{1/2}} A_{\mu,T,\ell} \left(1+  \frac{1}{1+ V_\ell^{1/2}
        \frac{1}{p^2}|V_\ell|^{1/2}} A_{\mu,T,\ell}\right)^{-1}   \frac{1}{1+ V_\ell^{1/2} \frac{1}{p^2}
      |V_\ell|^{1/2}}\right|V_\ell^{1/2}\right\rangle \,. \label{aab}
\end{align}

We are going to show below that the term on the right side of (\ref{aab}) goes to zero as $\ell\to 0$ and, as a consequence, 
  \begin{equation}\label{aac}
    \lim_{\ell\to 0} m_\mu^{\gamma_\ell,\Delta_\ell}(T) = -\lim_{\ell\to 0}  \frac{1}{4\pi a(V_\ell)} =  -\frac{1}{4\pi a} \,.
  \end{equation}
  On the other hand, by Lemma~\ref{lemma:convergence}
  there is a subsequence  of $(\gamma_\ell,\alpha_\ell)$ such that
  \begin{equation*}
    \lim_{\ell\to 0} m_\mu^{\gamma_\ell,\Delta_\ell}(T) = 
    \frac{1}{(2\pi)^3}\int_{\mathbb{R}^3}\left(\frac{1}{K_{T,\tilde{\mu}}^{0,\Delta}}
      -\frac{1}{p^2}\right) \ddd{3}p,
  \end{equation*}
  where $\Delta$ is the pointwise limit of $|\Delta_\ell(p)|$
  and
  $\tilde{\mu}$ is the limit of $\tilde{\mu}^{\gamma_\ell}$. This
  shows \eqref{eq:gap_eff}, at least for a subsequence. 

 It remains to show \eqref{eq:A:lim} and \eqref{aac}.
  We start with the decomposition
  \begin{equation}
    \label{eq:B:decomposition}
    \frac{1}{1+ V_\ell^{1/2}
      \frac{1}{p^2}|V_\ell|^{1/2}} = \frac{1}{e_\ell} P_\ell + \frac{1}{1+ V_\ell^{1/2}
      \frac{1}{p^2}|V_\ell|^{1/2}}(1-P_\ell),
  \end{equation}
  where the second summand is uniformly bounded by assumption
  \ref{ax:A}. The integral kernel of $A_{\mu,T,\ell}$ is given by
 \begin{equation}\label{axy}
    A_{\mu,T,\ell}(x,y) =
    \frac{V_\ell(x)^{\frac{1}{2}}|V_\ell(y)|^{\frac{1}{2}}}
    {(2\pi)^3}\int_{\mathbb{R}^3}\left(
      \frac{1}{K_{T,\mu}^{\gamma_\ell,\Delta_\ell}}-\frac{1}{p^2}\right)
    (\ee^{-i(x-y)\cdot p}-1) \ddd{3} p\,,
 \end{equation}
which can  be estimated as
  \begin{equation}
    \label{eq:A}
    |A_{\mu,T,\ell}(x,y)|
    \leq
    \frac{|V_\ell(x)|^{\frac{1}{2}}|V_\ell(y)|^{\frac{1}{2}}}
    {(2\pi)^3}\int_{\mathbb{R}^3}\left|
      \frac{1}{K_{T,\mu}^{\gamma_\ell,\Delta_\ell}}-\frac{1}{p^2}\right|
    (|x-y|\ |p|)^q \ddd{3} p
  \end{equation}
  for any $0\leq q \leq 1$.  In the proof of
  Lemma~\ref{lemma:convergence} \ref{lemma:convergence:m} we found
  that
 the integral
  \begin{equation*}
    \int_{\mathbb{R}^3}\left|
      \frac{1}{K_{T,\mu}^{\gamma_\ell,\Delta_\ell}}-\frac{1}{p^2}\right|
    |p|^q \ddd{3} p
  \end{equation*}
  is uniformly bounded in $\ell$ for $q<1$. With the aid of Assumption \ref{ax:1}, we can thus bound the Hilbert-Schmidt norm of $A_{\mu,T,\ell}$ as 
  \begin{equation}
    \|A_{\mu,T,\ell}\|_2 \leq \textrm{const}\, \ell^q \|V_\ell\|_1 \,. \label{eq:A_q_bound}
  \end{equation}
  In particular, because of Assumption \ref{ax:L1}, 
  \begin{equation*}
    \left\|\frac{1}{1+ V_\ell^{1/2}
      \frac{1}{p^2}|V_\ell|^{1/2}}(1-P_\ell)
    A_{\mu,T,\ell}\right\| \leq O(\ell^q)
  \end{equation*}
  for small $\ell$. It remains to show that the
  contribution of the first summand in \eqref{eq:B:decomposition} to the norm in question 
  vanishes as well.  We have
$$
\|P_\ell A_{\mu,T,\ell}\| = \frac{ \|A^*_{\mu,T,\ell} J_\ell \phi_\ell\|}{|\langle J_\ell \phi_\ell| \phi_\ell\rangle|}\,.
$$
By \eqref{eq:A},
$$
\left| \left( A^*_{\mu,T,\ell} J_\ell \phi_\ell\right)(x)\right|  \leq C \ell^q |V(x)|^{1/2} \int_{\R^3} |V(y)|^{1/2}|\phi_\ell(y)|  \ddd{3}y\,,
$$
and hence 
  \begin{equation}\label{usi}
    \| P_\ell A_{\mu,T,\ell} \|  \leq \textrm{const}\ \ell^q \frac{1}{|\langle J_\ell
      \phi_\ell|\phi_\ell\rangle|} \left\langle\left.
    |V_\ell|^{1/2}\right| |\phi_\ell|\right\rangle \|V_\ell\|_1^{1/2}.
  \end{equation}
  By \ref{ax:n}, we know that $\frac{\langle
  |V_\ell|^{1/2}||\phi_\ell|\rangle}{|\langle J_\ell
    \phi_\ell|\phi_\ell\rangle|} \leq O(\ell^{1/2})$. Since $e_\ell = O(\ell)$ by assumption, we arrive at
  \begin{equation*}
    \biggl\|\frac{1}{1+ V_\ell^{1/2} \frac{1}{p^2}|V_\ell|^{1/2}}
    A_{\mu,T,\ell}\biggr\| \leq O(\ell^{q-1/2}),
  \end{equation*}
  which vanishes by choosing $1/2 < q <1$.

To  show \eqref{aac}, i.e., that the term on the right side of \eqref{aab} vanishes as $\ell\to 0$, we can again use the decomposition \eqref{eq:B:decomposition} to argue that
\begin{equation}\label{asin}
\left\| \left.\left. \left(1+  \frac{1}{1+ V_\ell^{1/2}
        \frac{1}{p^2}|V_\ell|^{1/2}} A_{\mu,T,\ell}\right)^{-1}   \frac{1}{1+ V_\ell^{1/2} \frac{1}{p^2}
      |V_\ell|^{1/2}}\right|V_\ell^{1/2}\right\rangle \right\| \leq O(\ell^{-1/2}) \,,
\end{equation}
where we used (\ref{eq:A:lim}) as well as Assumptions \ref{ax:L1}, \ref{ax:A} and \ref{ax:n}. Moreover, 
\begin{equation}\label{asin2}
\left\|\left.\left. A_{\mu,T,\ell}^* \frac{1}{1+ |V_\ell|^{1/2}
        \frac{1}{p^2}V_\ell^{1/2}} \right|  |V_\ell|^{1/2} \right\rangle \right\| 
    \leq O(\ell^q) + \frac 1{e_\ell} \left\|P_\ell A_{\mu,T,\ell}\right\| \left\| P_\ell^*  \left| |V_\ell|^{1/2}\right\rangle\right\| \leq O(\ell^{q })
\end{equation}
using (\ref{usi}). The last term in (\ref{aab}) thus is of order $\ell^{q-1/2}$, and vanishes as $\ell\to 0$ for any $1/2 < q <1$.  This proves (\ref{aac}).

  As a last step, we show
  that the limit points for
  $\tilde{\mu}^{{\gamma}_\ell}$ and $|\Delta_\ell(p)|$, and thus also of $\hat\gamma_\ell(p)$,  are unique.
  We use the fact that the limit points solve the two implicit
  equations \eqref{eq:sat} and \eqref{eq:gap_eff}, i.e.,
  \begin{align*}
    F(\tilde{\mu},\Delta) = 0\, ,\qquad G(\tilde{\mu}, \Delta) = 0\, ,
  \end{align*}
  where
  \begin{align*}
    F(\nu,\Delta) &= \nu-\mu + \frac{\mathcal{V}}{(2\pi)^{3/2}}
    \int_{\mathbb{R}^3} \left(1 - 
      \frac{p^2-\nu}{ K_{T,\nu}^{0,\Delta}}\right)\ddd{3}p\,,\\
    G(\nu, \Delta) &=  \frac{1}{4\pi
      a} + 
  \frac{1}{(2\pi)^3}\int_{\mathbb{R}^3}\left(\frac{1}{K_{T,\nu}^{0,\Delta}}
  -\frac{1}{p^2}\right) \ddd{3}p\,.
  \end{align*}
  It is straightforward to check that 
  \begin{align*}
    \partial_\nu F &> 0 & \partial_\nu G &> 0 \\
    \partial_\Delta F &> 0 & \partial_\Delta G &< 0
  \end{align*}
(compare with similar computations in Appendix~\ref{sec:appendix:tc}).
Hence the set where $F$ vanishes defines a strictly decreasing curve $\R_+\to \R$, while the analogous curve for the zero-set of $G$ is strictly increasing. Consequently, they can intersect at most once. 

This proves uniqueness under the assumptions that $\Delta_\ell\neq 0$ for a sequence of $\ell$'s going to zero. In the opposite case, $\Delta_\ell = 0$ for $\ell$ small enough, hence $\Delta = 0$. The uniqueness in this case follows as above, looking at the equation $F(\tilde\mu,0)=0$. This completes the proof of Theorem~\ref{thm:gap_eff}.
\end{proof}

\begin{remark}
  In case $K_{T,\mu}^{\gamma_\ell,\Delta_\ell}$ is
  reflection-symmetric in $p$, one can show that the bound \eqref{eq:A_q_bound} holds also for $q=1$. 
  Indeed, in this case only the symmetric part of $\ee^{-i(x-y)\cdot
    p}-1$ contributes to the integral kernel
  of $A_{\mu,T,\ell}$, and hence 
  \begin{equation*}
    A_{\mu,T,\ell}(x,y) = \frac{V_\ell^{\frac{1}{2}}(x)|V_\ell|^{\frac{1}{2}}(y)}
    {(2\pi)^3}\int_{\mathbb{R}^3}\left(
      \frac{1}{K_{T,\mu}^{\gamma_\ell,\Delta_\ell}}-\frac{1}{p^2}\right)
    \Big(\cos\big((x-y)\cdot p\big)-1\Big) \ddd{3} p\,.
  \end{equation*}
  Again using \eqref{eq:K-p}, we may write
  \begin{equation*}
    \left|\frac{1}{K_{T,\mu}^{\gamma_\ell,\Delta_\ell}}-\frac{1}{p^2}\right|
    \leq {\rm const}\frac{1}{1+p^4} + R_\ell(p),
  \end{equation*}
  such that
  \begin{equation*}
    \int_{\mathbb{R}^3}\left|p R_\ell(p)\right|
    \ddd{3} p
  \end{equation*}
  is uniformly bounded in $\ell$.  Since
  \begin{equation*}
    \int_{\mathbb{R}^3} \frac{1-\cos\big(p\cdot (x-y)\big)}{1+p^4} \ddd{3}p
    =
    \sqrt{2}\pi^2\left[ 1 -
      \ee^{-\frac{|x-y|}{\sqrt{2}}}\frac{\sin\big(\frac{|x-y|}{\sqrt{2}}\big)}{|x-y|/\sqrt{2}}\right]
    \leq \pi^2 |x-y|,
  \end{equation*}
  we get
  \begin{equation*}
    \|A_{\mu,T,\ell}\|_2 \leq \textrm{const} \left[\int_{\mathbb{R}^3}
      |V_\ell(x)||V_\ell(y)||x-y|^{2}\ddd{3}x\ddd{3}y\right]^{1/2} \leq O(\ell)
  \end{equation*}
in this case.
\end{remark}

\subsection{Critical Temperature}

In this section we will prove Theorem~\ref{thm:T_c}. We start with the following observation.

\begin{lemma} \label{thm:m_monotony}
 Let $\mu > 0$, $T<T_c$, and let $(\gamma^0_\ell,0)$ be a family of normal states for $\mathcal{F}_T^{V_\ell}$. Then
  \begin{equation}
 \liminf_{\ell\to 0}    m^{\gamma^0_\ell,0}_\mu(T) > - \frac 1{4\pi a} \,.
\end{equation}
\end{lemma}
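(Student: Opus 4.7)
The plan is to pass to the limit $\ell\to 0$ using the arguments of Section~\ref{sec:contactpotential}, and then to exploit the monotonicity implicit in Definition~\ref{def:tc}.

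First, the normal state $(\gamma^0_\ell, 0)$ satisfies the Euler--Lagrange equation $\hat\gamma^0_\ell(p) = (1+\ee^{(\varepsilon^{\gamma^0_\ell}(p) - \tilde\mu^{\gamma^0_\ell})/T})^{-1}$, and one has the uniform bound $\mathcal{F}_T^{V_\ell}(\gamma^0_\ell, 0) \leq \mathcal{F}_T^{V_\ell}(\hat\gamma_0, 0)$, whose right side is finite uniformly in $\ell$ by Assumptions~\ref{ax:L1} and~\ref{ax:positivity}. These are the only inputs to parts (ii)--(vi) of Lemma~\ref{lemma:convergence} and to Lemma~\ref{lemma:gamma_bound}, which therefore apply and give, along a subsequence, $\tilde\mu^{\gamma^0_\ell} \to \tilde\mu$ and $m_\mu^{\gamma^0_\ell, 0}(T) \to m_{\tilde\mu}^{0, 0}(T)$, where $\tilde\mu$ solves the self-consistency
\begin{equation}\label{eq:lem_m_self}
 \tilde\mu = \mu - \frac{2\mathcal{V}}{(2\pi)^{3/2}} \int_{\R^3} \frac{1}{1 + \ee^{(p^2-\tilde\mu)/T}}\ddd{3}p.
\end{equation}
Since the map $\nu\mapsto\nu + \frac{2\mathcal{V}}{(2\pi)^{3/2}}\int_{\R^3}(1+\ee^{(p^2-\nu)/T})^{-1}\ddd{3}p$ is strictly increasing in $\nu$, \eqref{eq:lem_m_self} has a unique solution $\tilde\mu(T)$, so the convergence above holds along the full sequence. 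It thus suffices to show $m^{0,0}_{\tilde\mu(T)}(T) > -\frac{1}{4\pi a}$ whenever $T < T_c$.

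Set $\phi(T) := m^{0,0}_{\tilde\mu(T)}(T) + \frac{1}{4\pi a}$. Definition~\ref{def:tc} gives $\phi(T_c) = 0$, so the claim reduces to $\phi'(T) < 0$ on $(0, T_c]$. Differentiating,
\begin{equation*}
 \phi'(T) = \left.\partial_\nu m^{0,0}_\nu(T)\right|_{\nu=\tilde\mu(T)}\tilde\mu'(T) + \left.\partial_T m^{0,0}_\nu(T)\right|_{\nu=\tilde\mu(T)}.
\end{equation*}
A direct computation gives $\partial_T m^{0,0}_\nu(T) < 0$, and $\partial_\nu m^{0,0}_\nu(T) > 0$ for $\nu > 0$ is the monotonicity already invoked in the proof of Theorem~\ref{thm:gap_eff} (it follows by symmetrizing the integrand of $m^{0,0}_\nu(T)$ about $p^2 = \nu$, cf.\ Appendix~\ref{sec:appendix:tc}). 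Implicit differentiation of \eqref{eq:lem_m_self} gives $\tilde\mu'(T) = -c\mathcal{V}\partial_T N/(1 + c\mathcal{V}\partial_\nu N)$ with $c = 2(2\pi)^{-3/2}$ and $N(\nu, T) = \int_{\R^3}(1+\ee^{(p^2-\nu)/T})^{-1}\ddd{3}p$; since $\partial_\nu N > 0$ is immediate and the analogous symmetrization (now with the Jacobian $p^2\dd p$ favouring $p^2 > \nu$) yields $\partial_T N(\nu, T) > 0$ for $\nu > 0$, we obtain $\tilde\mu'(T) \leq 0$.

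It remains to verify $\tilde\mu(T) > 0$ on $(0, T_c]$. For $\nu \leq 0$ the integrand in the first equation of Definition~\ref{def:tc} is pointwise non-positive, so its right-hand side is $\leq 0$, whereas the left side $-1/(4\pi a)$ is strictly positive; this forces $\tilde\mu(T_c) > 0$, and $\tilde\mu'(T) \leq 0$ propagates the positivity to all $T \leq T_c$. Combining the three sign statements gives $\phi'(T) < 0$, and hence $\phi(T) > \phi(T_c) = 0$ for $T < T_c$, which is the desired conclusion. The main subtlety is establishing the correct signs of $\partial_\nu m^{0,0}_\nu(T)$ and $\partial_T N(\nu, T)$; both reduce to reflection arguments of the type used in the Appendix and both crucially require $\nu > 0$, which is why the assumption $\mu > 0$ is essential.
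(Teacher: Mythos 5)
Your proposal follows the paper's own approach: pass to the limit via (a version of) Lemma~\ref{lemma:convergence} applied with $\alpha_\ell=0$, identify the subsequential limit with $m^{0,0}_{\tilde\mu(T)}(T)$, and then use the strict monotonicity of $T\mapsto m^{0,0}_{\tilde\mu(T)}(T)$ that the paper establishes in Appendix~\ref{sec:appendix:tc}. Your derivative computation for $\phi'(T)$ is precisely the content of that appendix.

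One point in your last paragraph is, however, incorrect and creates an unnecessary complication. You assert that $\partial_\nu m^{0,0}_\nu(T)>0$ and $\partial_T N(\nu,T)>0$ ``crucially require $\nu>0$''. They do not: both hold for \emph{all} $\nu\in\R$, and the reflection (change of variables $\pm t=p^2-\nu$) is needed only to handle $\nu>0$. For $\nu\le 0$ one has $p^2-\nu\ge0$ on all of $\R^3$, so the integrand $\kappa'\!\big(\tfrac{p^2-\nu}{2T}\big)/\kappa\!\big(\tfrac{p^2-\nu}{2T}\big)^2$ is pointwise nonnegative, and likewise $(p^2-\nu)/\cosh^2$ is pointwise nonnegative; this is exactly what the Appendix observes. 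Consequently your detour to establish $\tilde\mu(T)>0$ is superfluous, and as written it is also mildly circular: you propagate $\tilde\mu>0$ from $T_c$ downward using $\tilde\mu'(T)\le0$, but by your own logic $\tilde\mu'\le0$ was only established on the set where $\tilde\mu>0$, so a continuation argument would be needed to close the loop. The actual role of the hypothesis $\mu>0$ is not in the derivative signs; it is to ensure $T_c>0$ (so that the assumption $T<T_c$ is non-vacuous), as discussed right after Definition~\ref{def:tc}. With that correction the argument is clean.
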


\begin{proof}
By mimicking the proof of Lemma~\ref{lemma:convergence}, we observe that (for a suitable subsequence)
$$
\lim_{\ell \to 0} m^{\gamma^0_\ell,0}_\mu(T) = m_{\tilde\mu^\gamma}^{0,0}(T)
$$
where $\tilde \mu^\gamma = \mu - 2(2\pi)^{-3/2} \mathcal{V}
\int_{\R^3} \hat\gamma(p) \ddd{3}p$ and $\hat\gamma(p) =
(1+\ee^{(p^2-\tilde\mu^\gamma)/T})^{-1}$. It is shown in
Appendix~\ref{sec:appendix:tc} that $m_{\tilde\mu^\gamma}^{0,0}(T)$ is
a strictly decreasing function of $T$. At $T=T_c$, it equals $-1/(4\pi
a)$ according to Definition~\ref{def:tc}, hence
$m_{\tilde\mu^\gamma}^{0,0}(T)> -1/(4\pi a)$ for $T<T_c$.
\end{proof}

The first part of Theorem~\ref{thm:T_c} then follows from the following lemma.

\begin{lemma}
  \label{lemma:short_range:suf}
  Let $(\gamma^0_\ell,0)$ be a normal state of
  $\mathcal{F}_T^{V_\ell}$.  Assume that $\lim_{\ell\to 0}
  m_\mu^{\gamma^0_\ell,0}(T) > -\frac{1}{4\pi a}$. Then, for small
  enough $\ell$, the linear operator $K_{T,\mu}^{\gamma^0_\ell,0} +
  V_\ell$ has at least one negative eigenvalue.
\end{lemma}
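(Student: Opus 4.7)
The plan is to apply the Birman--Schwinger principle, use the decomposition of $1+B_\ell$ from the proof of Theorem~\ref{thm:gap_eff}, and then solve a scalar secular equation by the intermediate value theorem.

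Since $K^\ell_{T,\mu} := K_{T,\mu}^{\gamma^0_\ell,0}>0$, the Birman--Schwinger principle reduces the existence of a negative eigenvalue of $K^\ell_{T,\mu}+V_\ell$ to the existence of a strictly negative eigenvalue of $1+B_\ell$, where $B_\ell := V_\ell^{1/2}(K^\ell_{T,\mu})^{-1}|V_\ell|^{1/2}$. With the decomposition from Theorem~\ref{thm:gap_eff},
\[
1+B_\ell \;=\; X_\ell + m_\ell\,|V_\ell^{1/2}\rangle\langle|V_\ell|^{1/2}| + A^0_{\mu,T,\ell},\qquad X_\ell := 1+V_\ell^{1/2}\tfrac{1}{p^2}|V_\ell|^{1/2},\quad m_\ell := m_\mu^{\gamma^0_\ell,0}(T),
\]
the identical chain of estimates as in that proof yields $\|X_\ell^{-1}A^0_{\mu,T,\ell}\|\to 0$. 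Because the non-error perturbation is rank one, eigenvalues of $1+B_\ell$ not inherited directly from $X_\ell+A^0_{\mu,T,\ell}$ satisfy the scalar secular equation $F_\ell(\lambda)=-1/m_\ell$ with
\[
F_\ell(\lambda) \;=\; \langle|V_\ell|^{1/2},(X_\ell-\lambda)^{-1}V_\ell^{1/2}\rangle + o(1),\qquad \ell\to 0,
\]
where the $o(1)$ error is uniform on compact subsets of $\rho(X_\ell)$ by the resolvent identity together with $\|X_\ell^{-1}A^0_{\mu,T,\ell}\|\to 0$.

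The endpoint values are $F_\ell(-\infty)=0$ (since $X_\ell$ is bounded) and $F_\ell(0)=4\pi a(V_\ell)\to 4\pi a<0$ by the scattering-length formula~\eqref{eq:a}. Since $\lim_{\ell\to 0}a(V_\ell)=a$ is finite and, for the assumed short-range $V_\ell$, this prevents bound states of $p^2+V_\ell$ for small~$\ell$, the Birman--Schwinger principle ensures that $X_\ell$ has only positive eigenvalues, so $F_\ell$ is continuous on $(-\infty,0]$. Under the strict hypothesis $\lim m_\ell > -1/(4\pi a)$, for small~$\ell$ the target value $-1/m_\ell$ lies strictly between $4\pi a(V_\ell)$ and $0$; hence the continuous function $F_\ell+1/m_\ell$ is positive at $-\infty$ and strictly negative at $\lambda=0$, and the intermediate value theorem produces a zero $\lambda_\ell<0$, bounded uniformly away from $0$. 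This $\lambda_\ell$ is a negative eigenvalue of $1+B_\ell$, and by Birman--Schwinger it corresponds to a negative eigenvalue of $K^\ell_{T,\mu}+V_\ell$, as claimed.

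The main technical obstacles are to verify that the $o(1)$ error in the secular equation is uniformly small on the whole half-line $(-\infty,0]$, and that $X_\ell$ indeed has no non-positive eigenvalues in the contact limit. The former is handled exactly as in the proof of Theorem~\ref{thm:gap_eff} via the quantitative bound $\|X_\ell^{-1}A^0_{\mu,T,\ell}\|=O(\ell^{q-1/2})$, while the latter is a consequence of the assumed structure of $V_\ell$ in Assumption~\ref{asm:assumption} together with the finiteness of $\lim a(V_\ell)<0$ and Assumption~\ref{ax:A}.
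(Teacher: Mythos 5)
Your overall architecture is sound, but it diverges from the paper's route in a way that introduces two genuine gaps.

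The paper does \emph{not} work at $e=0$ and then invoke a monotone Birman--Schwinger principle. Instead it parametrizes the Birman--Schwinger operator by the putative eigenvalue $e$ of $K_{T,\mu}^{\gamma^0_\ell,0}+V_\ell$, decomposes
$V_\ell^{1/2}(K_{T,\mu}^{\gamma^0_\ell,0}-e)^{-1}|V_\ell|^{1/2}
= X_\ell + m_{\mu,e}^{\gamma^0_\ell}(T)\,|V_\ell^{1/2}\rangle\langle |V_\ell|^{1/2}| + A_{\mu,T,\ell,e}$,
shows the error $A_{\mu,T,\ell,e}$ is $O(\ell^q)$ \emph{uniformly in $e\leq 0$}, and then uses the continuity of $\tilde a_{\ell,e}$ together with the strict monotonicity of $e\mapsto m_{\mu,e}^{\gamma^0_\ell}(T)$ (which maps $(-\infty,0]$ onto $(-\infty, m_\mu^{\gamma^0_\ell,0}(T)]$) to produce an $e^*<0$ at which the Birman--Schwinger operator has eigenvalue \emph{exactly} $-1$. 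That $e^*$ is then, by the plain (non-monotone) Birman--Schwinger principle, a negative eigenvalue of $K_{T,\mu}^{\gamma^0_\ell,0}+V_\ell$. You instead fix $e=0$, look for negative spectral values $\lambda$ of $1+B_\ell$, and only afterward pass back to $K_{T,\mu}^{\gamma^0_\ell,0}+V_\ell$ ``by Birman--Schwinger.'' That last step is the \emph{monotone} (counting) version of Birman--Schwinger for a sign-indefinite $V_\ell$: the needed claim is that an eigenvalue of $V_\ell^{1/2}K^{-1}|V_\ell|^{1/2}$ below $-1$ forces a negative eigenvalue of $K+V_\ell$. This is true (for instance via the trial function $\psi = K^{-1}|V_\ell|^{1/2}\phi$, since $\langle\psi,(K+V_\ell)\psi\rangle = \langle \xi,(1+M^{1/2}J_\ell M^{1/2})\xi\rangle$ with $M=|V_\ell|^{1/2}K^{-1}|V_\ell|^{1/2}$, $\xi=M^{1/2}\phi$), but it is a non-trivial extra ingredient that you state without proof, and it is precisely what the paper's $e$-parametrization avoids.

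The second gap is your justification that $F_\ell$ is pole-free on $(-\infty,0]$. You assert that the finiteness of $\lim_{\ell\to0}a(V_\ell)$ ``prevents bound states of $p^2+V_\ell$ for small $\ell$, so $X_\ell$ has only positive eigenvalues.'' That inference is false: a finite (even negative) scattering length is perfectly compatible with bound states of $p^2+V_\ell$; only a zero-energy resonance causes $a$ to diverge. Nor does Assumption~\ref{ax:A} as stated say that $X_\ell>0$; it only says $X_\ell$ is invertible, has one eigenvalue $e_\ell = O(\ell)$, and is uniformly bounded below in modulus on $\mathrm{ran}(1-P_\ell)$ --- this leaves open the possibility of negative eigenvalues bounded away from zero. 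Your intermediate-value argument on $(-\infty,0]$ genuinely needs this positivity (or at least the absence of spectrum of $X_\ell+A$ in a neighborhood of your crossing point), and it is not supplied. The paper's proof never requires $X_\ell>0$, only invertibility plus the $P_\ell$-decomposition of $X_\ell^{-1}$, so this is an extra burden peculiar to your route. To salvage your argument you would need either to add the positivity of $X_\ell$ as a hypothesis (true in both of the paper's examples, but not visibly implied by \ref{ax:1}--\ref{ax:n}), or to observe separately that a negative eigenvalue of $X_\ell+A$ would itself propagate, under the rank-one perturbation $m_\ell|V_\ell^{1/2}\rangle\langle |V_\ell|^{1/2}|$, to a negative eigenvalue of $1+B_\ell$, so the conclusion would hold in that case too --- but neither argument is made.

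The parts of your proof that \emph{do} match the paper are: the decomposition into $X_\ell$ plus rank-one plus error, the scattering-length identity $\langle |V_\ell|^{1/2}|X_\ell^{-1}V_\ell^{1/2}\rangle = 4\pi a(V_\ell)$, the $O(\ell^{q-1/2})$ control of the error via the $P_\ell$-splitting of $X_\ell^{-1}$, and the use of the strict hypothesis to place $-1/m_\ell$ strictly between $4\pi a$ and $0$. Those are all correct and correspond to the paper's estimates.
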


\begin{proof}
  With the aid of the Birman-Schwinger principle, we will attribute the
  existence of an eigenvalue of $K_{T,\mu}^{\gamma^0_\ell,0}
  + V_\ell$ below the essential spectrum
  to a solution of a certain implicit equation. We then show
  the existence of such a solution, which proves the existence of a negative
  eigenvalue.

  Note that the infimum of the essential spectrum of  $K_{T,\mu}^{\gamma^0_\ell,0}
  + V_\ell$ is $2T$. Let $e < 2T$. According to the Birman-Schwinger principle,
  $K_{T,\mu}^{\gamma^0_\ell,0} + V_\ell$ has an eigenvalue
  $e$ if and only if
  \begin{equation}\label{op}
    V_\ell^{1/2}\frac{1}{K_{T,\mu}^{\gamma^0_\ell,0}-e} |V_\ell|^{1/2}
  \end{equation}
  has an eigenvalue $-1$.  As in the proof of Theorem
  \ref{thm:gap_eff}, we decompose the operator (\ref{op}) 
  as
  \begin{equation*}
    V_\ell^{1/2} \frac{1}{K_{T,\mu}^{\gamma^0_\ell,0}-e} |V_\ell|^{1/2}
    = V_\ell^{1/2} \frac{1}{p^2} |V_\ell|^{1/2}
    + m_{\mu,e}^{\gamma^0_\ell}(T) |V_\ell^{1/2}\rangle \langle |V_\ell|^{1/2}| + A_{\mu,T,\ell,e}
  \end{equation*}
  where
  \begin{equation}
    \label{eq:short_range:mu}
    m_{\mu,e}^{\gamma^0_\ell}(T) =
    \frac{1}{(2\pi)^3}\int_{\mathbb{R}^3}\left(\frac{1}{K_{T,\mu}^{\gamma^0_\ell,0}-e}
      -\frac{1}{p^2}\right) \ddd{3}p \,.
  \end{equation}
  We claim that the remainder $A_{\mu,T,\ell,e}$ is bounded above by $O(\ell^q)$ in Hilbert-Schmidt norm, for any $0\leq  q<1$, uniformly in $e$ for $e\leq 0$.  
  This will follow from the same 
  estimates as in the proof of Theorem \ref{thm:gap_eff} if we can show that 
  \begin{equation*}
    \int_{\mathbb{R}^3}|p|^q
    \left(\frac{1}{K_{T,\mu}^{\gamma^0_\ell,0}-e}-\frac{1}{K_{T,\mu}^{\gamma^0_\ell,0}}\right) \ddd{3}p
  \end{equation*}
is uniformly bounded in $\ell$ for $0\leq q<1$. 
  But since
  \begin{equation*}
    \frac{1}{K_{T,\mu}^{\gamma^0_\ell,0}-e}-\frac{1}{K_{T,\mu}^{\gamma^0_\ell,0}}
    = \frac{e}{K_{T,\mu}^{\gamma^0_\ell,0}(K_{T,\mu}^{\gamma^0_\ell,0}-e)}
    \leq
    \frac{e}{(2T)^2}\frac{1}{\kappa_c(\frac{p^2-\mu}{2T})\big(\kappa_c(\frac{p^2-\mu}{2T})-\frac{e}{2T}\big)}\,,
  \end{equation*}
with $\kappa_c$ defined in (\ref{def:kappac}), 
this is indeed the case.

  Again, the operator
  \begin{equation*}
    1+V_\ell^{1/2}\frac{1}{p^2}|V_\ell|^{1/2} + A_{\mu,T,\ell,e}
    =
    \Bigl(1+V_\ell^{1/2}\frac{1}{p^2}|V_\ell|^{1/2}\Bigr)\biggl(1+\frac{1}{1+V_\ell^{1/2}\frac{1}{p^2}|V_\ell|^{1/2}}A_{\mu,T,\ell,e}\biggr)
  \end{equation*}
  is invertible for small $\ell$, by assumption \ref{ax:A}
  and the fact that 
  \begin{equation}\label{asii}
    \lim_{\ell \to 0}  \biggl\|
      \frac{1}{1+V_\ell^{1/2}\frac{1}{p^2}|V_\ell|^{1/2}}A_{\mu,T,\ell,e}\biggr\|=0\,,
  \end{equation}
  with the same argument as in the proof of Theorem \ref{thm:gap_eff}.
  We conclude that $K_{T,\mu}^{\gamma^0_\ell,0} + V_\ell$
  has an eigenvalue $e$ if and only if the rank one operator
  \begin{equation*}
     m_{\mu,e}^{\gamma^0_\ell}(T) \left(1+V_\ell^{1/2}\frac{1}{p^2}|V_\ell|^{1/2} +
    A_{\mu,T,\ell,e}\right)^{-1} |V_\ell^{1/2}\rangle
    \langle |V_\ell|^{1/2}|
  \end{equation*}
  has an eigenvalue $-1$, i.e., if
  \begin{equation}
    \label{eq:eigenvalue}
    \tilde{a}_{\ell,e} = \left\langle |V_\ell|^{1/2}\left|\frac{1}{1+V_\ell^{1/2}\frac{1}{p^2}|V_\ell|^{1/2} +
      A_{\mu,T,\ell,e}} \right|V_\ell^{1/2}\right\rangle = -\frac{1}{m_{\mu,e}^{\gamma^0_\ell}(T)}.
  \end{equation}

  We claim that, for small enough $\ell$, the implicit equation
  \eqref{eq:eigenvalue} has a solution $e < 0$, which implies the
  existence of a negative eigenvalue of
  $K_{T,\mu}^{\gamma^0_\ell,0} + V_\ell$.  
We first argue that  $\lim_{\ell\to 0} \tilde{a}_{\ell,e} = 4\pi a$. This follows from the same arguments as in (\ref{asin})--(\ref{asin2}), in fact. Recall that,  by
  assumption, $\lim_{\ell\to 0} m_\mu^{\gamma^0_\ell,0}(T) >
  -\frac{1}{4\pi a}$.  Moreover the integral
  $m_{\mu,e}^{\gamma^0_\ell}(T)$ is monotone increasing in $e$ and $e
  \mapsto m_{\mu,e}^{\gamma^0_\ell}(T)$ maps $(-\infty,0]$ onto the
  interval $(-\infty, m_\mu^{\gamma^0_\ell,0}(T)]$.  Since
  $\tilde{a}_{\ell,e}$ depends continuously on $e$, there has to be a solution $e < 0$ to \eqref{eq:eigenvalue} for small enough
  $\ell$, and  thus $K_{T,\mu}^{\gamma^0_\ell,0} + V_\ell$ must have a
  negative eigenvalue. This completes the proof.
\end{proof}

We now give the 

\begin{proof}[Proof of Theorem~\ref{thm:T_c}]
  Part (i) follows immediately from Lemmas~\ref{thm:m_monotony}
  and~\ref{lemma:short_range:suf}. To prove part (ii), we argue by
  contradiction. Suppose that $T>T_c$ and that there does not exist an
  $\ell_0(T)$ such that for $\ell<\ell_0(T)$ all minimizers of
  $\mathcal{F}_T^{V_\ell}$ are normal. Then there exists a sequence of
  $\ell$'s going to zero and corresponding minimizers
  $(\gamma_\ell,\alpha_\ell)$ with $\alpha_\ell\neq 0$ and thus, by
  Theorem~\ref{thm:gap_eff}, Eqs.~\eqref{eq:sat}
  and~\eqref{eq:gap_eff} hold in the limit $\ell \to 0$. We claim that
  these equations do not have a solution for $T>T_c$, thus providing
  the desired contradiction.

At the end of the proof of Theorem~\ref{thm:gap_eff}, we have already argued that the right side of (\ref{eq:gap_eff}) is monotone decreasing in $\Delta$ and increasing in $\tilde \mu$. Moreover, $\tilde \mu$ is decreasing in $\Delta$. In particular, we conclude from Eqs.~\eqref{eq:sat}
  and~\eqref{eq:gap_eff} that
$$
      -\frac{1}{4\pi a} \leq  \frac{1}{(2\pi)^3}\int_{\mathbb{R}^3}
      \left(
        \frac{\tanh\big(\frac{p^2-\tilde{\mu}}{2T}\big)}{p^2-\tilde{\mu}}
        -\frac{1}{p^2} \right) \ddd{3}p
$$
with $\tilde \mu$ given by
$$
      \tilde{\mu} = \mu -\frac{2
        \mathcal{V}}{(2\pi)^{3/2}}\int_{\mathbb{R}^3}
      \frac{1}{1+\ee^{\frac{p^2-\tilde{\mu}}{T}}} \ddd{3}p\,.
$$
According to our analysis in Appendix~\ref{sec:appendix:tc}, this implies $T\leq T_c$, however.
\end{proof}

\section*{Acknowledgments}
We would like to thank Max Lein and Andreas Deuchert for valuable
suggestions and remarks. Partial financial support by the NSERC (R.S.) is gratefully acknowledged.

\appendix

\section{Example for a Sequence of Short-Range Potentials}
\label{sec:example_scattering_length}

In dimension $d=3$, contact potentials are realized by a one-parameter
family $-\Delta_a$ of self-adjoint extensions of the Laplacian
$\left. -\Delta\right|_{C_0^\infty(\mathbb{R}^3\setminus \{0\})}$,
indexed by the scattering length $a$.  Moreover, $-\Delta_a$ can be
obtained as a norm resolvent limit of short-range Hamiltonians of the
form $-\Delta + V_\ell$. This is presented in \cite{albeverio82, albeverio} in the case of $0 < \lim_{\ell\to 0}\|V_\ell\|_{3/2} <
\infty$, and was extended in \cite{BHS-delta} to cases where $0 < \lim_{\ell\to 0}\|V_\ell\|_{1} <
\infty$. In this Appendix, we use an approach similar to \cite{BHS-delta} to construct a sequence of
potentials $V_\ell$ converging to a contact potential.  In particular,
we are interested in the case where the scattering length $a(V_\ell)$
converges to a negative value $a < 0$, and where all the assumptions in 
Assumption~\ref{asm:assumption} are satisfied.

\subsection{Example 1}
\label{sec:ex:1}

As a first example, we follow \cite[chap I.1.2-4]{albeverio}.  We
start with an arbitrary potential $V \in
L^1(\mathbb{R}^3) \cap L^2(\mathbb{R}^3)$, such that
\begin{enumerate}
\item $p^2 + V(x)\geq 0$, and $V$ has a simple zero-energy resonance, i.e., there is a simple
  eigenvector $\phi \in L^2(\mathbb{R}^3)$ with
  $(V^{1/2}\frac{1}{p^2}|V|^{1/2} +1) \phi = 0$, and $\psi(x) =
  \frac{1}{p^2}|V|^{1/2} \phi \in L^2_{\textrm{loc}}(\mathbb{R}^3)$,
  \label{ex:1:ax:1}
\item $\|\hat{V}(p)\| \leq 2 \hat{V}(0)$.
\end{enumerate}
Define $V_\ell(x) = \lambda(\ell) \ell^{-2} V(\frac{x}{\ell})$, where
$\lambda(0) = 1$, $\lambda < 1$ for all $\ell>0$ and
$1-\lambda(\ell) = O(\ell)$.  
The important point of this scaling is the following.
Denote by $U_\ell$ the unitary
scaling operator $(U_\ell\varphi)(x) =
\ell^{-3/2}\varphi(\frac{x}{\ell})$. 
By a simple calculation one obtains the relation
$$ U_\ell V^{1/2} \frac 1 {p^2} |V|^{1/2} U_\ell^{-1} = \frac{1}{\lambda(\ell)} V_\ell^{1/2} \frac 1 {p^2} |V_\ell|^{1/2},$$
such that, with $\phi_\ell = U_\ell \phi$,
\begin{equation}\label{Xell}  V_\ell^{1/2} \frac 1 {p^2} |V_\ell|^{1/2} \phi_\ell = \lambda(\ell) U_\ell V^{1/2} \frac 1 {p^2} |V|^{1/2} \phi = - \lambda(\ell) \phi_\ell.
\end{equation}
This shows that the lowest eigenvalue of 
$1 + V_\ell^{1/2} \frac 1 {p^2} |V_\ell|^{1/2}$ is $1 - \lambda(\ell) = O(\ell)$.  

Moreover, by
construction, $\hat{V}_\ell(p) = \ell \lambda(\ell)\hat{V}(\ell p)$,
$\|V_\ell\|_{3/2} = \lambda(\ell)\|V\|_{3/2}$, and the 1-norm can be bounded as $\|V_\ell\|_1 \leq
\left(\frac{4}{3}\pi\right)^{1/3} \ell \lambda(\ell)\|V\|_{3/2}$, 
hence \ref{ax:1}, \ref{ax:supp} and \ref{ax:L1}--\ref{ax:L2} hold.

The validity of Assumption {\ref{ax:infspec}} is a consequence of the following general fact.

\begin{lemma}
  \label{lemma:infspec}
 If $\|V_\ell\|_{3/2}$ is uniformly
  bounded,  assumptions \ref{ax:1} and \ref{ax:A} imply assumption \ref{ax:infspec}.
\end{lemma}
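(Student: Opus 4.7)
Plan. I plan to follow the Birman--Schwinger analysis of Theorem~\ref{thm:gap_eff}, adapted to the operator $p^2+V_\ell-|p|^b$. Since the function $p\mapsto p^2-|p|^b$ attains a finite minimum $-c_b<0$ (as $b<2$), one has $\inf\spec(p^2-|p|^b)=-c_b$, and with $V_\ell$ uniformly bounded in $L^{3/2}$ being form-compact relative to $p^2-|p|^b$, the essential spectrum of $H_\ell:=p^2+V_\ell-|p|^b$ is $[-c_b,\infty)$. It therefore suffices to bound from below any discrete eigenvalues $E_\ell<-c_b$, and by Birman--Schwinger such an $E_\ell$ exists iff $B_\ell(E_\ell):=V_\ell^{1/2}(p^2-|p|^b-E_\ell)^{-1}|V_\ell|^{1/2}$ has $-1$ as an eigenvalue.

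Mimicking the decomposition in Theorem~\ref{thm:gap_eff} (with $p^2-|p|^b-E$ playing the role of $K_{T,\mu}^{\gamma_\ell,\Delta_\ell}$), I would write
\begin{equation*}
B_\ell(E) = V_\ell^{1/2}\,\tfrac{1}{p^2}\,|V_\ell|^{1/2} + m(E)\,|V_\ell^{1/2}\rangle\langle|V_\ell|^{1/2}| + \tilde A_\ell(E),
\end{equation*}
with $m(E)=(2\pi)^{-3}\int_{\R^3}\bigl[(p^2-|p|^b-E)^{-1}-p^{-2}\bigr]\ddd{3}p$ independent of $\ell$, and $\tilde A_\ell(E)$ an operator whose integral kernel carries the factor $e^{-i(x-y)\cdot p}-1$ vanishing on the diagonal. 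The same Hilbert--Schmidt estimate as for $A_{\mu,T,\ell}$ in the main proof, using $|e^{-i(x-y)\cdot p}-1|\leq|x-y|^q|p|^q$, $\supp V_\ell\subseteq B_\ell(0)$, and the bounds from \ref{ax:L1}, gives $\|\tilde A_\ell(E)\|_2 \leq \mathrm{const}\cdot\ell^q\|V_\ell\|_1\,I_q(E)$ with $I_q(E)=\int_{\R^3}|p|^q|(p^2-|p|^b-E)^{-1}-p^{-2}|\ddd{3}p$ locally bounded in $E$. Combined with Assumption~\ref{ax:A} (the precise rank-one estimate is identical to the one in the main theorem), this shows that $(1+V_\ell^{1/2}p^{-2}|V_\ell|^{1/2})^{-1}\tilde A_\ell(E)$ tends to $0$ in operator norm as $\ell\to 0$, uniformly for $E$ in any fixed compact subinterval of $(-\infty,-c_b)$.

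Via the Krein formula for rank-one perturbations, the condition $B_\ell(E_\ell)+1=0$ then reduces to
\begin{equation*}
m(E_\ell)\,\bigl\langle|V_\ell|^{1/2},(K_\ell+\tilde A_\ell(E_\ell))^{-1}V_\ell^{1/2}\bigr\rangle = -1,\qquad K_\ell := 1+V_\ell^{1/2}\,p^{-2}\,|V_\ell|^{1/2}.
\end{equation*}
By \eqref{eq:a} and \ref{ax:a}, the bracket converges to $\langle|V_\ell|^{1/2},K_\ell^{-1}V_\ell^{1/2}\rangle=4\pi a(V_\ell)\to 4\pi a<0$, and hence any such eigenvalue must satisfy $m(E_\ell)\to -1/(4\pi a)>0$ as $\ell\to 0$. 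A direct computation (by scaling $p=|E|^{1/2}s$) shows $m$ is continuous and strictly monotone increasing on $(-\infty,-c_b)$, behaves like $-|E|^{1/2}/(4\pi)$ as $E\to-\infty$, and diverges to $+\infty$ as $E\uparrow -c_b$. In particular $m$ has a unique zero $E_0\in(-\infty,-c_b)$, and $m(E_\ell)>0$ forces $E_\ell>E_0$. This yields \ref{ax:infspec} with any $C_2<E_0$. The main obstacle is to ensure the uniform norm bound on $(1+V_\ell^{1/2}p^{-2}|V_\ell|^{1/2})^{-1}\tilde A_\ell(E)$ extends over a range of $E$ reaching below $E_0$, so that the implicit equation above genuinely rules out eigenvalues there; this amounts to verifying that $I_q(E)$ grows only polynomially as $E\to-\infty$ and can be dominated by the $\ell^q\|V_\ell\|_1$ factor for small $\ell$.
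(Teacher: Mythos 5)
Your approach is genuinely different from the paper's, and it contains a real gap that you have partly, but not fully, acknowledged.

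The paper does \emph{not} subtract $p^{-2}$ and isolate a rank-one term $m(E)\,|V_\ell^{1/2}\rangle\langle|V_\ell|^{1/2}|$ as in Theorem~\ref{thm:gap_eff}. Instead, for $E\geq 0$, it writes
\begin{equation*}
V_\ell^{1/2}\frac{1}{p^2-|p|^b+C+E}\,|V_\ell|^{1/2}
= J_\ell X_\ell^{C+E} + R_\ell^E,\qquad
X_\ell^{C+E}=|V_\ell|^{1/2}\frac{1}{p^2+C+E}|V_\ell|^{1/2},
\end{equation*}
so that the comparison operator is the \emph{shifted} resolvent $(p^2+C+E)^{-1}$ rather than $p^{-2}$. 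The residual $R_\ell^E$ has trace norm $\frac{4\pi}{(2\pi)^3}\|V_\ell\|_1\int_0^\infty\frac{p^b}{p^2-p^b+C+E}\,\dd p$, which is \emph{decreasing in} $E$ and tends to $0$ as $C\to\infty$; meanwhile $\|(1+J_\ell X_\ell^{C+E})^{-1}\|\leq c_1\ell^{-1}$ follows from \ref{ax:A} and the uniform $L^{3/2}$ bound. Choosing $C$ large makes $\|(1+J_\ell X_\ell^{C+E})^{-1}\|\,\|R_\ell^E\|<1$ \emph{uniformly in $E\geq 0$}, so a Neumann series closes the argument for all energies below $-C$ at once. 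The growth in $|E|$ never arises because the comparison operator already carries the energy shift.

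In your version the error $\tilde A_\ell(E)$ genuinely does blow up as $E\to-\infty$: a direct computation of the kernel shows it behaves like $\sqrt{|E|}\,|V_\ell^{1/2}\rangle\langle|V_\ell|^{1/2}|$ once $\sqrt{|E|}\,\ell\gtrsim 1$, so the Hilbert--Schmidt bound $O(\ell^q\|V_\ell\|_1\,I_q(E))$ with $I_q(E)\sim|E|^{(q+1)/2}$ is essentially sharp, and after dividing by $e_\ell=O(\ell)$ you only retain smallness when $|E|=o(\ell^{-(2q-1)/(q+1)})$. You therefore rule out eigenvalues in fixed compact intervals $[C_2,-c_b)$, but not a hypothetical sequence $E_\ell\to-\infty$ along a subsequence; and there is no a priori bound on $\inf\spec(p^2+V_\ell-|p|^b)$ available under the stated hypotheses (a uniform $L^{3/2}$ bound alone does not imply form-smallness). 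You flag this, but without resolving it the proof is incomplete; the paper's choice of comparison operator is precisely what dissolves the difficulty.

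A second, independent issue: your Krein-formula step uses $a(V_\ell)\to a<0$ (Assumption~\ref{ax:a}) and the $P_\ell$-estimate from Assumption~\ref{ax:n}. The lemma only assumes the uniform $L^{3/2}$ bound together with \ref{ax:1} and \ref{ax:A}. The paper's proof stays within these hypotheses (plus $\supp V_\ell\subseteq B_\ell(0)$, used to get $\|V_\ell\|_1=O(\ell)$ by H\"older, which is implicit in both arguments). Since the lemma is meant as an auxiliary implication between assumptions, you should avoid importing \ref{ax:a} and \ref{ax:n} into its proof.

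The underlying analytic estimates (the $|e^{it}-1|\leq|t|^q$ kernel bound, the use of $e_\ell=O(\ell)$, the Birman--Schwinger reduction) are all correctly deployed; what is missing is the structural idea of shifting the comparison resolvent to $C+E$ so that the residual shrinks, rather than grows, with $|E|$.
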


\begin{proof}
  We look for  $C>0$ such that $p^2+V_\ell-|p|^b + C$ is non-negative for all $\ell>0$. By the Birman-Schwinger principle, this is the case
  if and only if 
  \begin{equation*}
    1 +V_\ell^{1/2}\frac{1}{p^2-|p|^b+C+E}|V_\ell|^{1/2} = 1 + J_\ell X_\ell^{C+E} +R_\ell^E
  \end{equation*}
  is invertible for all $E>0$. Here $J_\ell = \big\{
  \begin{smallmatrix}
    1,& V_\ell \geq 0\\
    -1,& V_\ell <0
  \end{smallmatrix}$, $X_\ell^E =
  |V_\ell|^{1/2}\frac{1}{p^2+E}|V_\ell|^{1/2}$ and
  \begin{equation*}
    R_\ell^E =
    V_\ell^{1/2}\frac{1}{p^2-|p|^b+C+E}|V_\ell|^{1/2} -
    V_\ell^{1/2}\frac{1}{p^2+C+E}|V_\ell|^{1/2}.
  \end{equation*}
  By expanding in a Neumann series, we see that $1 + J_\ell
  X_\ell^{C+E} +R_\ell^E$ has a bounded inverse provided that
  \begin{equation}
    \label{eq:B_bound}
    \| (1+J_\ell X_\ell^{C+E})^{-1}\|\,\|R_\ell^E\| < 1.
  \end{equation}
  We first examine $\| (1+J_\ell X_\ell^E)^{-1}\|$.  We have
  \begin{equation*}
    \frac{1}{1+J_\ell X_\ell^E} = 1 - J_\ell
    (X_\ell^E)^{1/2} \frac{1}{1+(X_\ell^E)^{1/2}J_\ell (X_\ell^E)^{1/2}}
    (X_\ell^E)^{1/2}\,,
  \end{equation*}
  and thus
  \begin{equation*}
    \left\|\frac{1}{1+J_\ell X_\ell^E}\right\|
    \leq 1 + \|X_\ell^E\|
    \left\|\frac{1}{1+(X_\ell^E)^{1/2}J_\ell (X_\ell^E)^{1/2}}
    \right\|\,.
  \end{equation*}
  Using the fact that $(4\pi|x-y|)^{-1}\ee^{-\sqrt{E}|x-y|}$ is the
  integral kernel of the operator $\frac{1}{p^2+E}$ for $E\geq 0$, the
  Hardy-Littlewood-Sobolev inequality \cite[Thm. 4.3]{LL} implies that
  $\|X_\ell^E\| \leq \|X_\ell^0\|_2 \leq c_2 \|V_\ell\|_{3/2}$.
  Moreover, $\left\|(1+(X_\ell^E)^{1/2}J_\ell
    (X_\ell^E)^{1/2})^{-1}\right\|$ is the inverse of the eigenvalue
   of $1+(X_\ell^E)^{1/2}J_\ell
  (X_\ell^E)^{1/2}$ with smallest modulus, and this latter operator is isospectral to $1+J_\ell X_\ell^E$.  We
  conclude that $\left\|(1+(X_\ell^E)^{1/2}J_\ell
    (X_\ell^E)^{1/2})^{-1}\right\| \leq {e_\ell(E)}^{-1}$, where
  $e_\ell(E)$ is the smallest eigenvalue of $1+J_\ell X_\ell^E$. The
  latter is bigger than $e_\ell(0)$, which is of order $O(\ell)$ by assumption
  \ref{ax:A}. This shows that there is a constant $c_1 > 0$ such that
  \begin{equation*}
    \left\|\frac{1}{1+J_\ell X_\ell^{C+E}}\right\|
    \leq c_1\ell^{-1}
  \end{equation*}
for small $\ell$. 
  
It remains to bound the operator $ R_\ell^E$, whose integral kernel is given by 
  \begin{align*}
    R_\ell^E(x,y) = \frac{1}{(2\pi)^3}V_\ell(x)^{1/2}|V_\ell(y)|^{1/2}
    \int_{\mathbb{R}^3} \left(\frac{1}{p^2-|p|^b + C +E} -
      \frac{1}{p^2+C+E}\right)\ee^{-i p \cdot (y-x)} \ddd{3}p.
  \end{align*}
 The trace norm $\|R_\ell\|_1$ equals
  \begin{align*}
    \|R_\ell^E\|_1 = \frac{4\pi }{(2\pi)^3}\|V_\ell\|_1 \int_{0}^\infty  \frac{p^b}{p^2-p^b + C +E} \dd p\,.
  \end{align*}
  By dominated convergence, the integral tends to $0$ as $C \to
  \infty$. By H\"older's inequality, $\|V_\ell\|_1 \leq O(\ell)$, so there exists a $C$ such that
  $\|R_\ell^E\| < c_1\ell$.  This shows \eqref{eq:B_bound}.
\end{proof}

Next we show that validity of {\ref{ax:A}}. 
Let
$J = \big\{
  \begin{smallmatrix}
    1,& V \geq 0\\
    -1,& V <0
  \end{smallmatrix}$,
$X = |V|^{1/2} \tfrac{1}{p^2}
|V|^{1/2}$ and $P =
\frac{1}{\langle J\phi|\phi\rangle}
|\phi\rangle \langle J\phi|$ the projection onto the eigenfunction $\phi$ corresponding to the  zero
eigenvalue of $1+V^{1/2} \tfrac{1}{p^2}
|V|^{1/2} = 1 +JX$. Using the unitary scaling operator $U_\ell$ we also introduce the scaled versions $\phi_\ell = U_\ell \phi$, 
$J_\ell = U_\ell J U_\ell^{-1}$ and $P_\ell = U_\ell P U_\ell^{-1}$,
and let $X_\ell = \lambda(\ell) U_\ell X U_\ell^{-1}$.
Then $\langle J_\ell\phi_\ell|\phi_\ell\rangle
= \langle J\phi|\phi\rangle
= -\langle X \phi|\phi\rangle
<0$ does not vanish, since $X = |V|^{1/2} \tfrac{1}{p^2}
|V|^{1/2}$ is a positive operator whose kernel does not contain 
 $\phi$.
Note that $[JX,P] = 0$, which follows from
\begin{equation*}
  (1+JX)P = 0
\end{equation*}
and
\begin{equation*}
  P(1+JX) = J P^* J(1+JX) = JP^*(1+JX)^* J = J\big((1+JX)P\big)^* J =
  0.
\end{equation*}

We decompose $1+J_\ell X_\ell$ as
\begin{equation*}
  1+J_\ell X_\ell = (1+J_\ell X_\ell)P_\ell + (1+J_\ell X_\ell)(1-P_\ell)
\end{equation*}
and examine the two parts separately.
For the first summand note that since 
$JX \phi = -\phi$, 
\begin{equation*}
  (1+J_\ell X_\ell) P_\ell
  = \big(1-\lambda(\ell)\big)P_\ell \,.
\end{equation*}

Next, we study the operator $J_\ell X_\ell (1-P_\ell)$. 
The operator
\begin{equation*}
  T = 1+ J X (1-P) = 1 + J X + P
\end{equation*}
has no zero eigenvalue.
Indeed, if $T \psi = 0$, then
\begin{equation*}
  0 = (1 + J X + P)\big(P + (1-P)\big)\psi = P\psi + (1-P)(1 + J X)\psi,
\end{equation*}
where we used that $P$ commutes with $1+JX$.
Projecting onto $P$ and $1-P$, respectively, yields
\begin{align*}
  0 &= P\psi, & 0 &= (1-P)(1 + J X)\psi = (1+J X)\psi,
\end{align*}
which constrains $\psi$ to be $0$.

Due to the compactness of $P + J X$,
eigenvalues of $T$ can only accumulate at $1$, and hence $T$ has a
bounded inverse $T^{-1}$.  Now
$J_\ell X_\ell =
\lambda(\ell) U_\ell JX U_\ell^{-1}$, and we have the decomposition
\begin{equation*}
  U_\ell^{-1}(1 + J_\ell X_\ell) U_\ell
  = 1 +\lambda(\ell)JX
  = \big(1-\lambda(\ell)\big)P + \big[1+\lambda(\ell)(T-1)\big](1-P)
\end{equation*}
with inverse
\begin{equation*}
  U_\ell^{-1}\frac{1}{1 + J_\ell X_\ell} U_\ell
  = \frac{1}{1 -\lambda(\ell)}P +
  \frac{1}{1-\lambda(\ell)+\lambda(\ell)T}(1-P).
\end{equation*}
This shows that
\begin{equation*}
  \frac{1}{1 + J_\ell X_\ell}(1-P_\ell) = 
  U_\ell\frac{1}{1-\lambda(\ell)+\lambda(\ell)T}(1-P)U_\ell^{-1}.
\end{equation*}
Since $0$, and thus also a
neighborhood of $0$, is not in the spectrum of $T$, and $\lambda(\ell) \rightarrow 1$ as $\ell\to 0$, 
we conclude that $\frac{1}{1-\lambda(\ell)+\lambda(\ell)T}$ is uniformly bounded
for small $\ell$. This yields the uniform boundedness of $\frac{1}{1 + J_\ell
  X_\ell}(1-P_\ell)$.

In order to prove {\ref{ax:a}}
we decompose $\frac{1}{1+J_\ell X_\ell}$ as
\begin{equation*}
  1+J_\ell X_\ell = \frac{1}{1+J_\ell X_\ell} P_\ell + \frac{1}{1+J_\ell
  X_\ell}(1-P_\ell)
  = \frac{1}{1-\lambda(\ell)} P_\ell + \frac{1}{1+J_\ell
  X_\ell}(1-P_\ell),.
\end{equation*}
We have just shown that the second summand is uniformly
bounded in $\ell$.
This allows us to calculate the limit $\ell\to 0$ of the scattering
length $a(V_\ell)$, which equals 
\begin{align*}
  4\pi a(V_\ell) &= \left\langle |V_\ell|^{1/2}\left| \frac{1}{1+V_\ell^{1/2}
    \frac{1}{p^2}|V_\ell|^{1/2}} \right. V_\ell^{1/2}\right\rangle\\
  &= \frac{1}{1 -\lambda(\ell)} \langle |V_\ell|^{1/2}| P_\ell
  V_\ell^{1/2}\rangle + \left\langle |V_\ell|^{1/2}\left| \frac{1}{1+J_\ell
  X_\ell}(1-P_\ell)\right. V_\ell^{1/2}\right\rangle\,.
\end{align*}
Using the uniform boundedness of the second summand together with the
fact that $\|V_\ell\|_1 \to 0$ as
$\ell\to 0$, we see that the second summand vanishes in the limit $\ell\to 0$.
Therefore
\begin{align*}
  \lim_{\ell\to 0}4\pi a(V_\ell)
  &= \lim_{\ell\to 0}\frac{1}{1-\lambda(\ell)}
  \frac{|\langle |V_\ell|^{1/2}| \phi_\ell\rangle|^2}{\langle J_\ell \phi_\ell|\phi_\ell\rangle} \\
  &= \lim_{\ell\to 0}\sqrt{\lambda(\ell)}\frac{\ell}{1-\lambda(\ell)}
    \frac{|\langle |V|^{1/2}| \phi\rangle|^2}{\langle J \phi|\phi\rangle}\\
  &= -\frac{1}{\lambda'(0)} \frac{|\langle |V|^{1/2}| \phi\rangle|^2}{\langle J \phi|\phi\rangle}\,.
\end{align*}

We are left with demonstrating {\ref{ax:n}}. 
This is immediate, since 
\begin{equation*}
  \langle |V_\ell|^{1/2}| \phi_\ell\rangle
  = \sqrt{\lambda(\ell)} \ell^{1/2} \langle |V|^{1/2}| |\phi|\rangle
\end{equation*}
and 
\begin{equation*}
  \langle \phi_\ell| \sgn(V_\ell)\phi_\ell\rangle
  = \langle \phi| \sgn(V) \phi\rangle.
\end{equation*}

\subsection{Example 2}
\label{sec:ex:2}

We consider a sequence of potentials as suggested in \cite{Leggett}, of the form 
\begin{equation}
  \label{eq:potential}
  V_\ell = V_\ell^+ - V_\ell^-, \qquad
  \begin{array}{rcll}
    V_\ell^+(x) &=& (k^+_\ell)^2\ \chi_{\{|x| < \epsilon_\ell\}}(x), &
    \qquad k^+_\ell = k^+ \epsilon_\ell^{-3/2}\\
    V_\ell^-(x) &=&
    (k^-_\ell)^2\ \chi_{\{\epsilon_\ell < |x| < \ell\}}(x), & \qquad k^-_\ell
    = \frac{\frac{\pi}{2}-\ell\omega}{\ell-\epsilon_\ell}\,,
  \end{array}
\end{equation}
with $\omega > 0$, $k^+>0$ and  $0< \epsilon_\ell <  c\ell^2$ with $c<2\omega/\pi$.  The function
$\chi_A(x)$ denotes the characteristic function of the set $A$. (See the sketch on page~\pageref{fig2}.)
We shall show that this sequence of potentials satisfies assumptions \ref{ax:1}--\ref{ax:n}.

Assumptions \ref{ax:1}, \ref{ax:supp},
  \ref{ax:L1}, \ref{ax:positivity} and
  \ref{ax:L2} are in fact obvious, and $\mathcal{V} = \lim_{\ell\to 0} \hat V_\ell(0) = \sqrt{2/\pi} (k^+)^2/3$.

\paragraph{\ref{ax:a}}
To calculate the scattering length $a(V_\ell)$, we have to find the
solution $\psi_\ell$ of $-\Delta \psi_\ell + V_\ell \psi_\ell = 0$,
with $\lim_{|x| \to \infty} \psi_\ell(x) = 1$. The scattering length
then appears in the asymptotics
\begin{equation*}
  \psi_\ell(x) \approx 1 - \frac{a(V_\ell)}{|x|}
\end{equation*}
for large $|x|$. 
To solve the zero-energy scattering equation, we write $\psi_\ell(x) =
\frac{u_\ell(|x|)}{|x|}$ with $u_\ell(0)=0$. Then $u_\ell$ solves the equation
\begin{equation*}
  -u_\ell'' + V_\ell u_\ell = 0.
\end{equation*}
For $r\geq \ell$ the function $u_\ell$ is of the form $u_\ell(r) = c_1
r + c_2$.  The normalization at infinity requires $c_1 = 1$, and
$\psi_\ell$ automatically has the desired asymptotics with $a(V_\ell)
= -c_2$.

In our example, the equation we have to solve is 
\begin{equation*}
  \begin{cases}
    -u_\ell''(r)+(k^+_\ell)^2u_\ell(r) = 0, & 0\leq r \leq \epsilon_\ell\\
    -u_\ell''(r)-(k^-_\ell)^2u_\ell(r) = 0, & \epsilon_\ell \leq r
    \leq \ell\\
    -u_\ell''(r) = 0, & r \geq \ell,
  \end{cases}
\end{equation*}
with the solution
\begin{equation*}
  u_\ell(r) =
  \begin{cases}
    A \sinh(k^+_\ell r), & 0\leq r \leq \epsilon_\ell\\
    B_1 \cos(k^-_\ell r) + B_2 \sin(k^-_\ell r), & \epsilon_\ell \leq
    r
    \leq \ell \\
    r - a(V_\ell),& r \geq \ell.
  \end{cases}
\end{equation*}
Continuity of $u_\ell$ and $u_\ell'$ then requires
\begin{align*}
  A\sinh(k^+_\ell \epsilon_\ell) &= B_1 \cos(k^-_\ell
  \epsilon_\ell) + B_2 \sin(k^-_\ell \epsilon_\ell) \\
  A k^+_\ell \cosh(k^+_\ell \epsilon_\ell) &= -B_1 k^-_\ell
  \sin(k^-_\ell \epsilon_\ell) + B_2 k^-_\ell \cos(k^-_\ell
  \epsilon_\ell)
\end{align*}
and
\begin{align*}
  B_1 \cos(k^-_\ell
  \ell) + B_2 \sin(k^-_\ell \ell) &= \ell - a(V_\ell)\\
  -B_1 k^-_\ell \sin(k^-_\ell \ell) + B_2 k^-_\ell
  \cos(k^-_\ell \ell) &= 1.
\end{align*}
Solving for $a(V_\ell)$ yields
\begin{equation}
  \label{eq:ex2:a}
  a(V_\ell) =
  \ell - \frac{1}{k^-_\ell} \frac{ k^+_\ell
    \tan\big(k_\ell^-(\ell-\epsilon_\ell)\big)+ k^-_\ell\tanh(k^+_\ell\epsilon_\ell) }
  {k^+_\ell- k^-_\ell\tan\big(k_\ell^- (\ell-\epsilon_\ell)\big)
    \tanh(k^+_\ell\epsilon_\ell) }\,.
\end{equation}

By Eq.~\eqref{eq:potential}, $(\ell-\epsilon_\ell) k_\ell^-=
\frac{\pi}{2} - \ell\omega$ and $k^+_\ell \epsilon_\ell = k^+
\epsilon_\ell^{-1/2}$. Since we assume that $\epsilon_\ell = O(\ell^2)$, we thus obtain as expression for the scattering
length in the limit $\ell \to 0$
\begin{equation*}
  \lim_{\ell\to 0} a(V_\ell) = -\lim_{\ell\to 0}
  \frac{\tan\big(\frac{\pi}{2} - \ell\omega\big)}{k^-_\ell}
  = -\frac{2}{\pi\omega}\,.
\end{equation*}
This shows the validity of \ref{ax:a}.

\paragraph{\ref{ax:3}}
To verify assumption \ref{ax:3}, we have to compute the Fourier transform of $V_\ell$, which equals 
\begin{equation*}
  \hat{V}_\ell(p) = \sqrt{\frac{2}{\pi}}\Big[ \epsilon_\ell^3 \big((k_\ell^+)^2+(k_\ell^-)^2\big)\ \varsigma(|p|\epsilon_\ell)
    -(k_\ell^-)^2\ell^3\ \varsigma(|p|\ell)\Big],
\end{equation*}
with  $\varsigma(x) = \frac{1}{x^3}\big(\sin(x)- x\cos(x)\big)$.
Since $|\varsigma(p)| \leq \varsigma(0)=1/3$, one  readily checks that $|\hat V_\ell(p)|\leq 2 |\hat V_\ell(0)|$ for $\ell$ small enough.

\paragraph{\ref{ax:infspec}}
Our next goal is to verify assumption \ref{ax:infspec}.  Let $U(x) =
\frac{\pi^2}{4} \chi_{|x|\leq 1}(x)$ and set $U_\ell(x) =
\ell^{-2}U(x/\ell)$.  For $\lambda(\ell) =
\Bigl(\frac{1-2\frac{\omega}{\pi}\ell}{1-\epsilon_\ell/\ell}\Bigr)^2$,
the potential $W_\ell(x) = \lambda(\ell) U_\ell(x)$ agrees with
$ V^-_\ell(x)$ on its support, so obviously $-W_\ell(x) \leq -
V^-_\ell(x)\leq V_\ell(x)$ holds. The function $U_\ell(x)$ is chosen
such that $p^2 - U_\ell(x)$ has a zero energy resonance.  Indeed,
\begin{equation*}
  \psi(x) =
  \begin{cases}
    \sin(\frac{\pi}{2}|x|)/|x|,& |x| \leq 1 \\
    1/|x|,& |x| \geq 1
  \end{cases}
  \in L^2_{\textrm{loc}}(\mathbb{R}^3) \setminus L^2(\mathbb{R}^3)
\end{equation*}
is a generalized eigenfunction of $p^2-U$ and $\psi_\ell(x) =
\psi(x/\ell)$ is a generalized eigenfunction of $p^2-U_\ell$.
Therefore, $U_\ell^{1/2}\frac{1}{p^2}U_\ell^{1/2}$ has the
eigenvector $U_\ell^{1/2} \psi_\ell \in
L^2(\mathbb{R}^3)$ to the eigenvalue $1$.
  
Note that our condition on $\epsilon_\ell$ implies that $\lambda(\ell) < 1 - c \ell$ for some constant $c>0$ and 
small enough $\ell$. Since $V_\ell^- \leq W_\ell$, the largest eigenvalue of
$(V^-_\ell)^{1/2}\frac{1}{p^2} (V^-_\ell)^{1/2}$ is smaller or equal to
$\lambda(\ell)$, i.e., 
\begin{equation}
  \label{eq:resonance}
  \left\| (V^-_\ell)^{1/2}\frac{1}{p^2} (V^-_\ell)^{1/2}
  \right\|
  \leq \lambda(\ell)
  \leq 1 - c\ell\,.
\end{equation}
Now choose $C > 0$ such that $p^2 -|p|^b + C > 0$ and define the operator
\begin{align*}
  R_\ell=(V^-_\ell)^{1/2} \frac{1}{p^2-|p|^b + C} (V^-_\ell)^{1/2} -
  (V^-_\ell)^{1/2} \frac{1}{p^2+C} (V^-_\ell)^{1/2} \,.
\end{align*}
Its trace norm equals
\begin{align*}
  \|R_\ell\|_1 = \frac{1}{2\pi^2}\|V^-_\ell\|_1  \int_0^\infty
  \frac{p^2}{p^2+C} \frac{p^b}{p^2-p^b+C} \dd p\,,
\end{align*}
which tends to zero as $C\to \infty$ by monotone convergence. Since $\|V^-_\ell\|_1 = O(\ell)$,  there is a 
$C$ such that $\|R_\ell\| < c\ell$, proving that
\begin{align*}
  \left\| (V^-_\ell)^{1/2} \frac{1}{p^2-|p|^b + C} (V^-_\ell)^{1/2}
  \right\| &\leq \left\| (V^-_\ell)^{1/2} \frac{1}{p^2+C}
    (V^-_\ell)^{1/2}\right\| +
  \|R_\ell\|\\
  &< \left\| (V^-_\ell)^{1/2} \frac{1}{p^2} (V^-_\ell)^{1/2}\right\| +
  c\ell \leq 1,
\end{align*}
where we have used \eqref{eq:resonance} in the last step. By the
Birman-Schwinger principle, this shows that $p^2-V^-_\ell - |p|^b +
C\geq 0$, and hence also $p^2+V_\ell - |p|^b +C\geq 0$.

\paragraph{\ref{ax:A}}
Note that $V_\ell^{1/2}
\frac{1}{p^2}|V_\ell|^{1/2}$ has an eigenvalue $-\lambda^{-1}\neq 0$ if and only if $p^2 + \lambda V_\ell $ as a zero-energy resonance. Equivalently, the scattering length $a(\lambda V_\ell)$ diverges. According to our calculation \eqref{eq:ex2:a}, this happens for $\lambda>0$ either satisfying 
$$
k^+_\ell =  k^-_\ell\tan\big(\sqrt{\lambda}k_\ell^- (\ell-\epsilon_\ell)\big)
    \tanh(\sqrt{\lambda}k^+_\ell\epsilon_\ell) 
$$
or $\sqrt{\lambda}k_\ell^- (\ell -\epsilon_\ell) = m \pi /2$ for odd integer $m$. 
The smallest $\lambda$ satisfying either of these equations is $\lambda =  1 + 4 \ell\omega/\pi + O(\ell^2)$, 
hence the smallest eigenvalue of  $1+ V_\ell^{1/2}
\frac{1}{p^2}|V_\ell|^{1/2}$ is $$e_\ell =  4 \ell\omega/\pi + O(\ell^2).$$

We are left with showing that  
$$\left(1+
  V_\ell^{1/2}\frac{1}{p^2}|V_\ell|^{1/2}\right)^{-1}(1-P_\ell)
  $$ 
  is uniformly
bounded in $\ell$. This follows directly from \cite[Consequence 1]{BHS-delta}. 
For the sake of completeness we repeat the argument here. 

First, recall that $\phi_\ell$ denotes the eigenvector of $1+ V_\ell^{1/2}
\frac{1}{p^2}|V_\ell|^{1/2}$ to its smallest eigenvalue $e_\ell$, and 
$J_\ell = \big\{
  \begin{smallmatrix}
    1,& V_\ell \geq 0\\
    -1,& V_\ell <0
  \end{smallmatrix}$. We also introduce the notation $X_\ell =
|V_\ell|^{1/2}\frac{1}{p^2}|V_\ell|^{1/2}$ and
$X^\pm_\ell = |V^\pm_\ell|^{1/2}\frac{1}{p^2}|V^\pm_\ell|^{1/2}$.

We now pick some $\psi\in L^2(\mathbb{R}^3)$ and set
  \begin{equation}
    \label{eq:varphi}
    \varphi =(1+
  V_\ell^{1/2}\frac{1}{p^2}|V_\ell|^{1/2})^{-1}(1-P_\ell) \psi= \frac{1}{1+J_\ell
      X_\ell}(1-P_\ell)\psi = \frac{1}{J_\ell+
      X_\ell}J_\ell(1-P_\ell)\psi\,.
  \end{equation}
  Below we are going to show that there exists a constant $c>0$ such that for small
  enough $\ell$
  \begin{equation}
    \label{eq:aah}
    \langle \varphi | (1-X_\ell^-) \varphi \rangle 
    \geq c \|\varphi\|^2_{L^2}.
  \end{equation}
In order to utilize this inequality we need the following lemma, which already appeared in \cite[Lemma~1]{BHS-delta}.
\begin{lemma}
  \label{lemma:BS-bound}
  Let $V = V_+ - V_-$, where $V_-, V_+\geq 0$ have disjoint
  support. Denote $J = \big\{
  \begin{smallmatrix}
    1,& V \geq 0\\
    -1,& V <0
  \end{smallmatrix}$,$\quad$ $X =
  |V|^{1/2}\frac{1}{p^2}|V|^{1/2}$ and $X_\pm =
  V_\pm^{1/2}\frac{1}{p^2}V_\pm^{1/2}$. Then for any $\phi \in
  L^2(\mathbb{R}^3)$, we have
  \begin{equation}
    \label{eq:B-S-inequality}
    \sqrt{2}\  \|\phi\| \| (J+X)\phi \| \geq \langle \phi| (X_+ +1 - X_-)\phi\rangle.
  \end{equation}
\end{lemma}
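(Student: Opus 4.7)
My plan is to reduce the inequality to a direct Cauchy-Schwarz estimate after isolating the self-adjoint part of $1+JX$. The key reduction uses that $J$ is a self-adjoint unitary ($J^2 = 1$ and $J^{\ast} = J$), so $J+X = J(1+JX)$ and hence
\begin{equation*}
 \|(J+X)\phi\| \;=\; \|(1+JX)\phi\|.
\end{equation*}
This turns the problem into a lower bound for $\|(1+JX)\phi\|$ in terms of the quadratic form of $A := 1 + X_+ - X_-$.

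Next I would exploit the disjoint-support hypothesis on $V_\pm$ to decompose $JX$ into Hermitian and anti-Hermitian parts. Since $|V|^{1/2} = V_+^{1/2} + V_-^{1/2}$ and $J V_+^{1/2} = V_+^{1/2}$, $J V_-^{1/2} = -V_-^{1/2}$ (because $J = \pm 1$ on the respective supports), a direct computation gives
\begin{equation*}
 JX \;=\; (V_+^{1/2} - V_-^{1/2})\,\frac{1}{p^2}\,(V_+^{1/2} + V_-^{1/2}) \;=\; X_+ - X_- + Y - Y^{\ast},
\end{equation*}
where $Y := V_+^{1/2}\,\frac{1}{p^2}\,V_-^{1/2}$. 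Hence $1+JX = A + B$ with $A = 1 + X_+ - X_-$ self-adjoint and $B = Y - Y^{\ast}$ anti-self-adjoint.

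The final step is Cauchy-Schwarz combined with taking real parts. Because $\langle\phi|A\phi\rangle$ is real while $\langle\phi|B\phi\rangle$ is purely imaginary (as $B^{\ast} = -B$), one has $\langle\phi|A\phi\rangle = \mathrm{Re}\,\langle\phi|(1+JX)\phi\rangle$, and therefore
\begin{equation*}
 \langle\phi|(X_+ + 1 - X_-)\phi\rangle \;\le\; |\langle\phi|(1+JX)\phi\rangle| \;\le\; \|\phi\|\,\|(1+JX)\phi\| \;=\; \|\phi\|\,\|(J+X)\phi\|,
\end{equation*}
which yields the claim, in fact with the sharper constant $1$ in place of $\sqrt{2}$; the factor $\sqrt{2}$ in the statement is therefore not tight.

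There is no serious obstacle in this argument: once one writes $(J+X)\phi = J(1+JX)\phi$ and recognises the Hermitian/anti-Hermitian splitting $1+JX = A + B$, the bound is immediate. The only genuine input is the disjointness of $\mathrm{supp}(V_+)$ and $\mathrm{supp}(V_-)$, without which the identifications $J V_\pm^{1/2} = \pm V_\pm^{1/2}$ fail and the clean splitting of $JX$ no longer holds.
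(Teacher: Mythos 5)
Your proof is correct and actually sharpens the lemma, yielding the bound with constant $1$ in place of $\sqrt{2}$. The paper's argument decomposes the \emph{vector}: it writes $\phi = \phi_+ + \phi_-$ according to the supports of $V_\pm$, applies Cauchy--Schwarz separately to $\Re\,\langle\phi_+|(J+X)\phi\rangle$ and $\Re\,\langle (J+X)\phi|-\phi_-\rangle$, observes that the cross terms cancel upon adding, and finally uses $\|\phi_+\|+\|\phi_-\|\le\sqrt{2}\,\|\phi\|$, which is the source of the $\sqrt{2}$. You instead decompose the \emph{operator}, writing $1+JX = A + B$ with $A = 1+X_+-X_-$ self-adjoint and $B = Y - Y^*$ skew-adjoint, where $Y = V_+^{1/2}\frac{1}{p^2}V_-^{1/2}$; then $\langle\phi|A\phi\rangle = \Re\,\langle\phi|(1+JX)\phi\rangle$, and a single Cauchy--Schwarz together with the unitarity identity $\|(J+X)\phi\|=\|(1+JX)\phi\|$ closes the argument. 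Both proofs rest on the same algebraic fact --- the cross term $Y$ pairs with $Y^*$ into a skew-adjoint piece whose diagonal quadratic form is purely imaginary --- but your operator-level formulation bypasses the lossy inequality $\|\phi_+\|+\|\phi_-\|\le\sqrt{2}\|\phi\|$ and therefore gives a strictly better constant. Since Lemma~\ref{lemma:BS-bound} is only used qualitatively later in the appendix, the improvement does not change the paper's conclusions, but your version is the cleaner proof.
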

\begin{proof}
  Decompose $\phi = \phi_+ + \phi_-$, such that $\supp(\phi_-) \subseteq
  \supp(V_-)$ and $\supp(\phi_+) \cap \supp(V_-) = \emptyset$.
    By applying the Cauchy-Schwarz inequality, we have 
    \begin{align*}
      \|(J+X)\phi\| \|\phi_+\| &\geq \Re \langle \phi_+| (J+X) \phi
      \rangle = \langle \phi_+| (1+X_+) \phi_+\rangle + \Re \langle
      \phi_+| V_+^{1/2}\tfrac{1}{p^2} V_-^{1/2} \phi_-\rangle,
      \\
      \|(J+X)\phi\| \|\phi_-\| &\geq \Re \langle (J+X) \phi|
      -\phi_-\rangle = \langle \phi_-| (1-X_-) \phi_-\rangle -\Re \langle
      \phi_+| V_+^{1/2}\tfrac{1}{p^2} V_-^{1/2} \phi_-\rangle.
    \end{align*}
    We add the two inequalities and obtain
    \begin{equation*}
      \|(J+X)\phi \|\ \big(\|\phi_+\| + \|\phi_-\|\big)
      \geq
      \langle \phi_+| (1+X_+) \phi_+\rangle
      + \langle \phi_-| (1-X_-) \phi_-\rangle
      = \langle \phi| (X_+ +1 - X_-)\phi\rangle.
    \end{equation*}
    Finally, we use that $\|\phi_+\| + \|\phi_-\| \leq \sqrt{2} \|\phi\|$,
    which completes the proof.
  \end{proof}
In combination with Lemma~\ref{lemma:BS-bound} the inequality \eqref{eq:aah} immediately yields 
  \begin{align*}
    \sqrt{2} \|\varphi\|\|(J_\ell+X_\ell)\varphi\| &\geq \langle
    \varphi| (1-X_\ell^-) \varphi \rangle \geq c \|\varphi\|^2,
  \end{align*}
  which further implies that
  \begin{equation*}
    \|\psi\| \geq \|J_\ell(1-P_\ell)\psi\|
    = \|(J_\ell+X_\ell)\varphi\|
    \geq
    \frac c {\sqrt{2}} \|\varphi\| = \frac c {\sqrt{2}} \|(1+J_\ell X_\ell)^{-1}(1-P_\ell)\psi\|\,,
  \end{equation*}
  proving uniform boundedness of $(1+
  V_\ell^{1/2}\frac{1}{p^2}|V_\ell|^{1/2})^{-1}(1-P_\ell)$.
  
 It remains to show the inequality \eqref{eq:aah}. 
  To this aim we denote by $\phi_\ell^-$ the eigenvector corresponding to the smallest eigenvalue  $e_\ell^->0$ of $1-X_\ell^-$ and by $P_{\phi_\ell^-}$ the orthogonal
  projection onto $\phi_\ell^-$.  The Birman-Schwinger operator
  $X_\ell^-$ corresponding to the potential $V_\ell^-$ has only one
  eigenvalue close to $1$. All other
  eigenvalues are separated from $1$ by a gap of order one. Hence
  there exists $c_1 > 0$ such that
  \begin{equation*}
    (1-X_\ell^-)(1-P_{\phi_\ell^-}) \geq c_1
  \end{equation*}
  and, therefore,
  \begin{align*}
    \langle \varphi | (1-X_\ell^-) \varphi \rangle
    &\geq c_1 \langle \varphi | (1-P_{\phi^-_\ell}) \varphi \rangle 
    + e_\ell^- \langle \varphi | P_{\phi^-_\ell} \varphi \rangle\\
    &= c_1 \| \varphi \|_{L^2}^2 + (e_\ell^- - c_1)
    \langle \varphi | P_{\phi^-_\ell} \varphi \rangle.
  \end{align*}
 With $P_{J_\ell \phi_\ell} = | J_\ell \phi_\ell\rangle \langle
  J_\ell \phi_\ell|$ being the orthogonal projection onto
  $J_\ell\phi_\ell$ we can write 
  \begin{equation*}
    \varphi = (1-P_{J_\ell\phi_\ell})\varphi,
  \end{equation*}
 simply for the reason that, because of \eqref{eq:varphi} and the fact that $P_\ell$ commutes with $B_\ell$,
  \begin{align*}
P_{J_\ell \phi_\ell} \varphi =     P_{J_\ell \phi_\ell} (1+J_\ell X_\ell)^{-1}(1-P_\ell) \psi= P_{J_\ell
      \phi_\ell} (1-P_\ell) (1+J_\ell X_\ell)^{-1} \psi= 0\,.
  \end{align*}
  Consequently,
  \begin{align*}
    |\langle \varphi | P_{\phi^-_\ell} \varphi \rangle|
    &=
    |\langle \varphi | (1-P_{J_\ell\phi_\ell}) P_{\phi^-_\ell} \varphi \rangle|
    \leq
    \|\varphi\|^2_{L^2} \|(1-P_{J_\ell\phi_\ell})P_{\phi^-_\ell} \|\\
    &=
    \|\varphi\|^2_{L^2}\|(1-P_{J_\ell\phi_\ell})\phi_\ell^-\|^2
    = 
    \|\varphi\|^2_{L^2} \|(1-P_{\phi_\ell^-})J_\ell\phi_\ell\|^2\,.
  \end{align*}
  To estimate $\|(1-P_{\phi_\ell^-})J_\ell\phi_\ell\|$, we apply
  Lemma~\ref{lemma:BS-bound} to $\phi_\ell$ and obtain
  \begin{align*}
    \sqrt{2}\, e_\ell = \sqrt{2}\, \|(J_\ell + X_\ell)\phi_\ell\|
    &\geq \langle \phi_\ell|(1-X_\ell^-)\phi_\ell\rangle
    = \langle J_\ell \phi_\ell|(1-X_\ell^-)J_\ell \phi_\ell\rangle \\
    &= e^-_\ell|\langle J_\ell\phi_\ell|\phi_\ell^-\rangle|^2 +
    \langle (1-P_{\phi_\ell^-})J_\ell
    \phi_\ell|(1-X_\ell^-)(1-P_{\phi_\ell^-})J_\ell \phi_\ell\rangle\\
    &\geq
    c_1\|(1-P_{\phi_\ell^-})J_\ell \phi_\ell\|^2\,.
  \end{align*}
This shows that 
  $\|(1-P_{\phi_\ell^-})J_\ell \phi_\ell\| = O(\ell^{1/2})$ and consequently \eqref{eq:aah} holds for small enough $\ell$.

\paragraph{\ref{ax:n}}
By construction, $1- (V_\ell^-)^{1/2}
\frac{1}{p^2}(V_\ell^-)^{1/2}$ has no negative eigenvalues. By applying Lemma~\ref{lemma:BS-bound} to $\phi_\ell$, we obtain
\begin{equation}
  \label{eq:lim:X}
  \left\langle \phi_\ell \left| (V_\ell^+)^{1/2}
  \tfrac{1}{p^2}(V_\ell^+)^{1/2}\right. \phi_\ell\right\rangle \leq \sqrt{2} e_\ell\quad \text{and} \quad
  \left\langle \phi_\ell\left| \left(1- (V_\ell^-)^{1/2}
  \frac{1}{p^2}(V_\ell^-)^{1/2}\right)\right. \phi_\ell\right\rangle \leq \sqrt{2} e_\ell\,.
\end{equation}
We claim that this implies
\begin{equation*}
  \lim_{\ell \to 0}\langle J_\ell \phi_\ell|\phi_\ell\rangle = -1.
\end{equation*}
Indeed,
\begin{equation*}
  (J_\ell + X_\ell) \phi_\ell = e_\ell J_\ell \phi_\ell
\end{equation*}
and thus
\begin{align*}
  (1-e_\ell) \langle J_\ell \phi_\ell|\phi_\ell\rangle
  &= -\langle \phi_\ell| X_\ell \phi_\ell \rangle\\
  &= -\langle \phi_\ell| X^+_\ell \phi_\ell\rangle
  -\langle \phi_\ell| X^-_\ell \phi_\ell \rangle\\
  &\phantom{=\ } -\langle \phi_\ell| (V_\ell^-)^{1/2}
  \tfrac{1}{p^2}(V_\ell^+)^{1/2}\phi_\ell \rangle -\langle \phi_\ell|
  (V_\ell^+)^{1/2} \tfrac{1}{p^2}(V_\ell^-)^{1/2}\phi_\ell \rangle.
\end{align*}
Adding $1$ on both sides yields
\begin{align*}
  (1-e_\ell) \bigl< (1+J_\ell) \phi_\ell|\phi_\ell\bigr> + e_\ell &=
  -\langle \phi_\ell| X^+_\ell \phi_\ell\rangle
  +\langle \phi_\ell| (1 - X^-_\ell)\phi_\ell \rangle\\
  &\phantom{=\ } -\langle \phi_\ell| (V_\ell^-)^{1/2}
  \tfrac{1}{p^2}(V_\ell^+)^{1/2}\phi_\ell \rangle -\langle \phi_\ell|
  (V_\ell^+)^{1/2} \tfrac{1}{p^2}(V_\ell^-)^{1/2}\phi_\ell \rangle.
\end{align*}
By taking the absolute value, applying the Cauchy-Schwarz inequality and using
\eqref{eq:lim:X}, we obtain
\begin{equation}
  \label{eq:J:expectation}
    \big| \bigl< (1+J_\ell) \phi_\ell|\phi_\ell\bigr>\big| \leq
    \frac{1}{1-e_\ell} \left( (1+2\sqrt{2}) e_\ell + 2 \sqrt{\langle \phi_\ell|
        X^+_\ell \phi_\ell\rangle \langle \phi_\ell| X^-_\ell
        \phi_\ell\rangle}
    \right) = O( e_\ell^{1/2})\,.
\end{equation}

Finally, to bound $\langle |V_\ell|^{1/2}||\phi_\ell|\rangle$, we note that $\langle |V^-_\ell|^{1/2}||\phi_\ell|\rangle\leq \|V_\ell^-\|_1^{1/2} = O(\ell^{1/2})$. For the analogous bound with $V_\ell^-$ replaced by $V^+_\ell$, we can again employ Lemma~\ref{lemma:BS-bound}, which implies that $\sqrt{2} e_\ell \geq \langle \phi^+_\ell|X_\ell^++1|\phi_\ell^+\rangle \geq \|\phi_\ell^+\|_2^2$ (where $\phi^+_\ell = \frac 12(1+J_\ell)\phi_\ell$), hence  $\langle |V^+_\ell|^{1/2}||\phi_\ell|\rangle\leq \|V_\ell^+\|_1^{1/2} \|\phi_\ell^+\|_2 \leq O(\ell^{1/2})$. This completes the proof.

\section{The Definition of $T_c$}\label{sec:appendix:tc}

In this appendix we shall show that the equations~\eqref{eq:T_c}  define $T_c$ and $\tilde\mu$ uniquely. To start, let $F:\R\times \R_+\to \R^2$ be defined by its components 
  \begin{equation}
 F_1(\nu,T) = \frac 1{(2\pi)^{3}} \int_{\mathbb{R}^3}  \left(
        \frac{\tanh\big(\frac{p^2-\nu}{2T}\big)}{p^2-\nu}
        -\frac{1}{p^2} \right) \ddd{3}p
 \end{equation}
and
\begin{equation}
F_2(\nu,T) = \nu + \frac{2
        \mathcal{V}}{(2\pi)^{3/2}}\int_{\mathbb{R}^3}
      \frac{1}{1+\ee^{\frac{p^2-\nu}{T}}} \ddd{3}p\,.
  \end{equation}
We clearly have $\partial F_1/\partial T < 0$ and $\partial F_2/\partial\nu > 0$ (since $\mathcal{V}\geq 0$ by assumption). 
 By
  dominated convergence, we may interchange the derivative with the
  integral and compute
\begin{equation}
\frac{\partial F_1}{\partial \nu} =  \frac 1{(2\pi)^{3}}  \int_{\mathbb{R}^3}
    \frac{\kappa'(\frac{p^2-\nu}{2T})}{\kappa(\frac{p^2-\nu}{2T})^2}
    \ddd{3}p \,,
\end{equation}
where $\kappa(x)= x/\tanh(x)$. If $\nu\leq 0$, this is positive, since $\kappa'(t)\geq 0$ for $t\geq 0$. If $\nu > 0$, on the other hand, we can integrate out the angular coordinates and change variables to $\pm
  t = p^2-\nu$, respectively, to obtain  
  \begin{equation}
   \frac{\partial F_1}{\partial \nu} = \frac 1{4\pi^2} \int_0^\infty \frac{
      \kappa'(\frac{t}{2T})\sqrt{\nu+t}} {
      \kappa^2(\frac{t}{2T})} \dd t -  \frac 1{4\pi^2} \int_0^{\nu} \frac{
      \kappa'(\frac{t}{2T})\sqrt{\nu-t}}
    {\kappa^2(\frac{t}{2T})}
    \dd t \,.
  \end{equation}
  Since $\sqrt{\nu+t} > \sqrt{\nu-t}$, it is clear that this sum is positive, i.e., $\partial F_1/\partial\nu >0$.

 We proceed similarly to show that $\partial F_2/\partial T > 0$. We have 
\begin{align}\nonumber
\frac{\partial F_2}{\partial T} & = \frac{\mathcal{V}}{2 T^2} \frac {1}{(2\pi)^{3/2}} \int_{\R^3} \frac{p^2-\nu}{\cosh\big(\frac{p^2-\nu}{2T}\big)^2} \ddd{3}p   \\ & = \frac{\mathcal{V}}{2 T^2} \frac {1}{\sqrt{2\pi}}  \left( \int_0^\infty \frac{ t \sqrt{t + \nu}}{\cosh\big(\frac t{2T}\big)^2} \dd t -  \int_0^\nu \frac{ t \sqrt{\nu-t}}{\cosh\big(\frac t{2T}\big)^2} \dd t \right) > 0\,.
\end{align}
In particular, the Jacobian determinant of $F$ is strictly positive.

For fixed $T$, we have $\lim_{\nu\to-\infty} F_2(\nu,T) = -\infty$ and $\lim_{\nu\to \infty} F_2(\nu,T)=\infty$. Hence there is a unique solution $\nu_T$ of the equation $F_2(\nu,T)= \mu$, for any $\mu \in \R$, and $\nu_T$ is decreasing in $T$.  Moreover, the function $T\mapsto F_1(\nu_T,T)$ is strictly decreasing, and hence the equation $F_1(\nu_T,T)=\lambda$ has a unique solution for $\lambda$ in its range. In particular, $T_c$ is a strictly decreasing function of $\lambda = -1/(4\pi a)$, hence a strictly decreasing function of $a$ for $a<0$.

For $\mu\leq 0$, one checks that $\lim_{T\to 0} F_1(\nu_T,T) \leq 0$, hence $T_c=0$. For $\mu>0$, however, $\lim_{T\to 0} F_1(\nu_T,T) = \infty$, hence $T_c>0$ for any $a<0$.

\bibliographystyle{abbrv}

\end{document}